\documentclass[aps,prx,twocolumn,superscriptaddress,longbibliography,10pt]{revtex4-2}

\usepackage{tikz-cd}
\usepackage{mathtools}
\usepackage{color}

\usepackage{silence}
\usepackage{bbm}
\usepackage{amssymb}
\usepackage{microtype}
\usepackage{xcolor}
\usepackage{enumitem}
\usepackage{amsthm}
\usepackage{physics}
\usepackage{proof}
\usepackage{thmtools,thm-restate}
\usepackage{graphicx}
\usepackage{algorithmicx}
\usepackage{algpseudocode}
\usepackage{qcircuit}
\usepackage{braket}
\usepackage{multirow}
\usepackage{mathrsfs}
\usepackage{soul}
\usepackage{amsmath}
\usepackage{amsfonts}
\usepackage{url}
\usepackage{ragged2e}
\DeclareMathAlphabet{\mathpzc}{OT1}{pzc}{m}{it}

\usepackage[colorlinks,citecolor=blue]{hyperref}
\newtheorem{theorem}{Theorem}[section]
\newtheorem{lemma}[theorem]{Lemma}
\newtheorem{definition}[theorem]{Definition}
\newtheorem{proposition}[theorem]{Proposition}
\newtheorem{corollary}[theorem]{Corollary}

\newtheorem{remark}[theorem]{Remark}

\newcommand{\eq}[1]{\hyperref[eq:#1]{(\ref*{eq:#1})}}
\renewcommand{\sec}[1]{\hyperref[sec:#1]{Section~\ref*{sec:#1}}}
\newcommand{\thm}[1]{\hyperref[thm:#1]{Theorem~\ref*{thm:#1}}}
\newcommand{\lem}[1]{\hyperref[lem:#1]{Lemma~\ref*{lem:#1}}}
\newcommand{\cor}[1]{\hyperref[cor:#1]{Corollary~\ref*{cor:#1}}}
\newcommand{\itm}[1]{\hyperref[itm:#1]{\ref*{itm:#1}}}
\newcommand{\app}[1]{\hyperref[app:#1]{Appendix~\ref*{app:#1}}}
\newcommand{\dfn}[1]{\hyperref[dfn:#1]{Definition~\ref*{dfn:#1}}}
\newcommand{\fig}[1]{\hyperref[fig:#1]{Figure~\ref*{fig:#1}}}
\newcommand{\clm}[1]{\hyperref[clm:#1]{Claim~\ref*{clm:#1}}}
\newcommand{\alg}[1]{\hyperref[alg:#1]{Algorithm~\ref*{alg:#1}}}
\newcommand{\stp}[1]{\hyperref[stp:#1]{Step~\ref*{stp:#1}}}
\newcommand{\asm}[1]{\hyperref[asm:#1]{Assumption~\ref*{asm:#1}}}
\newcommand{\prot}[1]{\hyperref[prot:#1]{Protocol~\ref*{prot:#1}}}
\newcommand{\prob}[1]{\hyperref[prob:#1]{Problem~\ref*{prob:#1}}}
\newcommand{\rmk}[1]{\hyperref[rmk:#1]{Remark~\ref*{rmk:#1}}}
\newcommand{\cons}[1]{\hyperref[cons:#1]{Construction~\ref*{cons:#1}}}
\newcommand{\conj}[1]{\hyperref[conj:#1]{Conjecture~\ref*{conj:#1}}}
\newcommand{\tbl}[1]{\hyperref[tbl:#1]{Table~\ref*{tbl:#1}}}
\usepackage{tabularx}

\let\originalleft\left
\let\originalright\right
\renewcommand{\left}{\mathopen{}\mathclose\bgroup\originalleft}
\renewcommand{\right}{\aftergroup\egroup\originalright}

\begin{document}

\title{Quantum Magic in early FTQC: From Diagonal Clifford Hierarchy No-Go Theorems to Architecture Design Blueprints}

\author{Hsueh-Hao Lu}
\email{hsuehhao-lu@g.ecc.u-tokyo.ac.jp
}
\affiliation{Department of Applied Physics, Graduate School of Engineering, The University of Tokyo, Bunkyo-ku, Tokyo 113-8656, Japan}
\affiliation{RIKEN Center for Quantum Computing (RQC), Wako, Saitama 351-0198, Japan}
\author{Yasunari Suzuki}
\affiliation{Department of Applied Physics, Graduate School of Engineering, The University of Tokyo, Bunkyo-ku, Tokyo 113-8656, Japan}
\affiliation{RIKEN Center for Quantum Computing (RQC), Wako, Saitama 351-0198, Japan}
\author{Yasunobu Nakamura}
\affiliation{Department of Applied Physics, Graduate School of Engineering, The University of Tokyo, Bunkyo-ku, Tokyo 113-8656, Japan}
\affiliation{RIKEN Center for Quantum Computing (RQC), Wako, Saitama 351-0198, Japan}
\author{En-Jui Kuo}
\email{kuoenjui@nycu.edu.tw}
\affiliation{Department of Electrophysics, National Yang Ming Chiao Tung University, Hsinchu, Taiwan, R.O.C.}
\begin{abstract}
We address the circuit-design problem of maximizing quantum magic in early fault-tolerant quantum computing (early FTQC), where logical dynamics natively take the form of alternating Clifford layers and diagonal non-Clifford layers. To render this optimization analytically tractable, we first prove a uniqueness theorem: for operational magic functionals built from Pauli expectation values, the axioms of faithfulness and tensor-product additivity force a Rényi-type dependence on the Pauli-spectrum moments. Leveraging the closed phase-polynomial description of the diagonal Clifford hierarchy, we derive exact Pauli-spectrum expressions and tight bounds for a shallow-layer model (ansatz~1). These bounds expose a zero-magic mechanism and prove that maximal magic strictly requires graph-state preconditioning. Consequently, we establish our first no-go theorem: hierarchy level alone cannot universally order operational magic. Extending our framework to the $N$-layer model (ansatz~2) motivated by the Space-Time Efficient Analog Rotation (STAR) architecture, we obtain an exact iterative update rule for the Pauli spectrum. This yields a second no-go theorem: no state-independent sequence of operations can guarantee monotonic magic improvement. Together, these theorems demonstrate that algebraic gate structures are fundamentally insufficient to dictate resource generation. To overcome this, we reframe early FTQC gate selection as a state-aware, differentiable optimization over continuous analog parameters. Finally, we identify a severe kinematic expressibility bottleneck in architectures restricted to single-qubit $Z$-rotations and rigorously show that introducing nonlinear diagonal phases, such as multi-qubit $Z$-rotation, shatters this bottleneck. This provides a fundamental principle for demonstrating early FTQC, establishing scalable magic generation as a foundational benchmark for evaluating early FTQC architectures.
\end{abstract}

\maketitle

\section{Introduction}
Quantum computing has the potential to demonstrate quantum advantage~\cite{PhysRevLett.127.180501,Arute:2019zxq,Zhong_2020}. Circuits built from stabilizer operations, such as Clifford unitaries, Pauli measurements, feedforward, and stabilizer ancillas, admit efficient classical simulation by the Gottesman–Knill theorem~\cite{gottesman_knill}, whereas non-stabilizer resources are required to transcend this simulability and enable universal quantum computing~\cite{bravyi_kitaev,Mari_2012}. At the architectural level, this distinction is equally consequential: in fault-tolerant quantum computing (FTQC), Clifford operations are implemented with relatively low overhead via lattice surgery~\cite{Horsman_2012,Litinski_2019}, whereas non-Clifford operations remain the dominant resource bottleneck due to magic-state distillation costs~\cite{Bravyi_2012}. Crucially, a gate's capacity to generate non-stabilizer resources is not an intrinsic property of the operation alone—it depends sensitively on the input state and circuit structure—rendering gate-level resource engineering a highly nontrivial design challenge~\cite{heyfron2018efficientquantumcompilerreduces}.

The resource theory of quantum magic provides a quantitative framework for non-stabilizerness by introducing magic monotones, which vanish on stabilizer states~\cite{veitch2012negative} and bound the cost of operational tasks such as classical simulation and state synthesis~\cite{Howard_2017,Beverland_2020,Bravyi_2019}. In particular, a wide class of simulation algorithms directly links the accumulation of magic to an exponential increase in the classical computational cost required to sample from or strongly simulate the underlying quantum processes. Therefore, understanding how logical circuits generate and redistribute magic is essential not only for delineating the boundary between classically simulable and hard quantum dynamics but also for guiding the efficient generation of non-stabilizer resources under realistic architectural constraints~\cite{beverland2022assessingrequirementsscalepractical}.

Algebraically, non-Clifford operations are systematically organized by the Clifford hierarchy~\cite{Gottesman_1999}, which classifies unitaries based on how they transform Pauli operators under conjugation. The third level contains widely used non-Clifford primitives such as the $T$ and $CCZ$ (Toffoli-equivalent) gates, which dominate standard fault-tolerant architectures since they admit well-developed injection and distillation protocols~\cite{Bravyi_2012}. By contrast, higher-level hierarchy operations—especially diagonal rotations that appear naturally in quantum algorithms—are usually realized only indirectly by compiling them into deep sequences of third-level primitives~\cite{ross2016optimalancillafreecliffordtapproximation}, for example via Clifford+$T$ synthesis of $Z$-rotations. This practice effectively obscures the hierarchy at execution time and leaves open a basic resource question: how does the hierarchy level of a gate relate to the amount of magic it can generate in a circuit?

The advent of early FTQC brings this question to the forefront. In this regime, Clifford operations are error-corrected while certain non-Clifford rotations can be executed as native logical instructions without the prohibitive overhead of full distillation, as exemplified by the space–time efficient analog rotation (STAR) architecture~\cite{PRXQuantum.5.010337}. STAR-like architectures motivate a restricted but hardware-relevant model in which the native non-Clifford resources are single-qubit analog Z-rotations. As a result, logical circuits naturally decompose into alternating layers of Clifford and diagonal gates $\{C_i,D_i\}$, and higher-level diagonal operations become operationally relevant rather than merely algebraic. In such cases, discrete metrics like $T$-count are ill-suited for these architectures; therefore, explicit magic quantification emerges as the rigorous benchmark for an architecture's capacity to induce classical simulation hardness. This setting leads to a concrete design problem: \textit{given an architectural template and constraints, how should one choose a sequence of logical operations $\{C_i,D_i\}$ to maximize the magic of the output state?} Derived from gate synthesis costs~\cite{ross2016optimalancillafreecliffordtapproximation,gosset2025multiqubittoffoliexponentiallyfewer, Beverland_2020}, the widespread monotonicity hypothesis assumes that higher-hierarchy operations inherently generate more magic, implying a universal, state-independent ordering strategy.

\begin{figure}[t]
        \includegraphics[width=\linewidth]{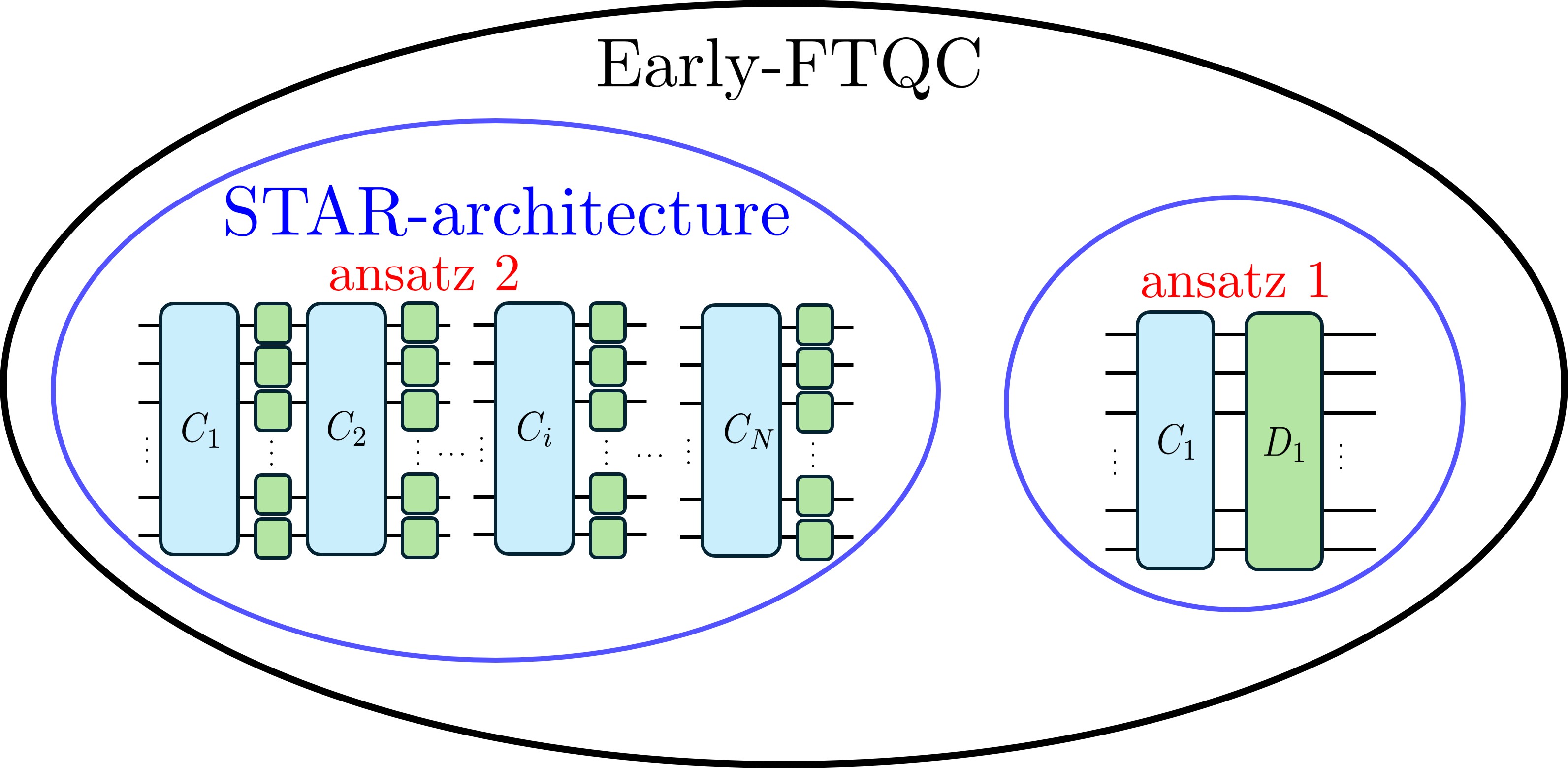}
  \caption{Structural classification of the quantum circuit architectures analyzed in this work. The overarching domain represents the capabilities of early fault-tolerant quantum computing (FTQC). The shallow-layer model (ansatz~1) utilizes a generalized multi-qubit diagonal gate ($D_1$) following a Clifford operation ($C_1$). This model captures the broad algebraic capacity of early FTQC but operates outside the restrictive Space-Time efficient Analog Rotation (STAR) paradigm (blue boundary). In contrast, ansatz~2 explicitly belongs to the STAR subset, utilizing an $N$-layer alternating sequence of Clifford operations ($C_i$) and localized single-qubit diagonal $Z$-rotation. Although kinematically distinct, both frameworks are analytically united by the shared algebraic properties of the diagonal Clifford hierarchy.}
  \label{Fig:1}
\end{figure}

In this work, we establish a unified resource-theoretic framework for analyzing and optimizing quantum magic in early FTQC architectures and provide concrete circuit design guidelines to maximize magic generation. Section~\ref{sec:bg_CH_DCH} introduces the Clifford hierarchy and its diagonal subgroup, while Sec.~\ref{sec:bg_nisq_ftqc} outlines the structural constraints of early FTQC and the hardware architectures considered in this work.

To establish the analytical machinery for our subsequent circuit evaluations, Sec.~\ref{sec:magic_category} classifies pure state magic measures into two main families. By focusing on measures constructed from Pauli expectation values, we prove that enforcing tensor-product additivity strictly reduces these quantities to logarithmic functionals of the Pauli spectrum, characterized by $\log F_\alpha$. This reduction provides a powerful, universal mathematical toolset: any structural constraint derived for $F_\alpha$ directly governs this entire class of experimentally relevant magic monotones~\cite{xiao2026exponentially,bittel2025operationalinterpretationstabilizerentropy}.

In Sec.~\ref{sec:shallow_layer}, we analyze a shallow layer model consisting of a Clifford layer followed by a diagonal $k$th-level Clifford Hierarchy gate in Fig.~\ref{Fig:1} (ansatz~1). By deriving exact expressions for the Pauli spectrum, we establish tight lower and upper bounds on magic generation. These bounds reveal a fundamental obstruction: the achievable ranges of magic overlap across hierarchy levels, leading to a no-go theorem—quantum magic cannot induce a strict ordering of the Clifford hierarchy.

In Sec.~\ref{sec:deep_layer}, we extend this analysis to an $N$-layer model motivated by STAR and early FTQC architectures in Fig.~\ref{Fig:1} (ansatz~2). We show that magic generation is intrinsically state-dependent and prove a second no-go theorem: no state-independent gate-selection strategy can guarantee monotonic enhancement of magic across circuit layers. Instead, optimal circuit design must be adapted to the evolving Pauli spectrum, which leads to a continuous optimization framework for selecting $\{C_i, D_i\}$.

Finally, we identify a fundamental architectural limitation of single-qubit $Z$ rotation~(SQR) in early FTQC architecture: restricting operations to single-qubit diagonal rotations severely limits magic generation. Consequently, an architecture's capacity to generate operational magic—and thereby breach the boundary of classical simulability—must serve as a foundational benchmark for evaluating early FTQC hardware design. This motivates a concrete design principle—introducing nonlinear phase interactions, such as multi-qubit $Z$ rotations, to enhance non-stabilizer resource generation.

\section{Clifford Hierarchy and Diagonal Clifford Hierarchy}
\label{sec:bg_CH_DCH}

The Clifford hierarchy provides a systematic classification of quantum operations based on their action on the Pauli group. Let $\mathcal{P}_n$ denote the $n$-qubit Pauli group. The hierarchy $\mathcal{C}_n^{(k)}$ is defined recursively as~\cite{Gottesman_1999}
\begin{align}
    &\mathcal{C}_n^{(1)} := \mathcal{P}_n, \notag \\
    &\mathcal{C}_n^{(2)} := \{ U \in U(2^n) \mid U \mathcal{P}_n U^\dagger \subseteq \mathcal{P}_n \}, \notag \\
    &\mathcal{C}_n^{(k+1)} := \{ U \in U(2^n) \mid U \mathcal{P}_n U^\dagger \subseteq \mathcal{C}_n^{(k)} \}.
    \label{eq:Clifford_Hierarchy}
\end{align}
Clifford operations ($k=2$) map Pauli operators to Pauli operators, while higher levels generate increasingly nontrivial transformations that cannot be efficiently simulated classically.

For $k \geq 3$, the structure of $\mathcal{C}_n^{(k)}$ is defined only implicitly through this recursion. A notable exception is the diagonal subgroup $\mathcal{D}_n^{(k)} \subset \mathcal{C}_n^{(k)}$, which admits a closed-form representation in terms of phase polynomials~\cite{Cui_2017}. Any $U \in \mathcal{D}_n^{(k)}$ can be written as
\begin{equation}
    U = \sum_{\mathbf{b}\in \mathbb{Z}_2^n} 
    \exp\left(2\pi i \sum_{m=1}^{m'} \frac{f_m(\mathbf{b})}{2^m}\right) 
    |\mathbf{b}\rangle\langle\mathbf{b}|,
    \label{eq:diagonal_hierarchy}
\end{equation}
where $m'$ is an integer representing the maximum phase resolution and each $f_m(\mathbf{b})$ is a Boolean polynomial of the form:
\begin{equation}
    f_m(\mathbf{b}) = \sum_{\mathbf{a} \in \{0,1\}^n} 
    c_{m,\mathbf{a}} \prod_{i=1}^n b_i^{a_i}, 
    \quad c_{m,\mathbf{a}} \in \mathbb{Z}_{2^m}.
    \label{eq:phase_polynomial}
\end{equation}
Each monomial encodes a multi-qubit phase interaction. The hierarchy level $k$ is determined by evaluating the maximum combined contribution of the interaction degree, given by the Hamming weight $\mathrm{wt}(\mathbf{a})$, and the associated phase-resolution index $m$:
\begin{equation}
    k = \max \{ (m-1) + \mathrm{wt}(\mathbf{a}) \},
    \label{eq:hierarchy}
\end{equation}
where the maximum is taken over all non-vanishing terms in the phase polynomial $c_{m,\{\mathbf{a}\}}$. would be clearer.  Physically, this quantity captures the interplay between the number of qubits involved in a controlled operation and its phase precision. 

To illustrate this, consider the composite operation $T^{\otimes 3} \cdot CCS$ where three transversal $T$ gates are applied alongside a controlled-controlled-$S$ gate. The maximum phase resolution for this unitary is $m'=3$, and there are three Boolean polynomials. Each of them can be derived as follows.
\begin{align}
    f_2(b) & = b_1b_2b_3\notag \\
    f_3(b) & = b_1 + b_2 + b_3
\end{align}
Therefore, $c_{3, (1,0,0)}$, $c_{3, (0,1,0)}$, $c_{3, (0,0,1)}$, and $c_{2,(1,1,1)}$ are non-vanishing terms in the phase polynomial. A term that maximizes Eq.~\eqref{eq:hierarchy} is $c_{2, (1,1,1)}$, which results in k = (2-1)+3=4. Thus, this unitary is in the fourth Clifford hierarchy.

In the STAR-architecture, arbitrary-angle Pauli-$Z$ rotations are implemented directly as native operations, effectively realizing continuous families of diagonal gates across multiple hierarchy levels~\cite{PRXQuantum.5.010337}. As a result, the diagonal Clifford hierarchy is not merely a mathematical subset, but the operational gate set of the architecture. It provides both an analytically tractable and physically relevant framework for studying non-stabilizer resource generation. 

A highly constrained, strictly local subset of this diagonal hierarchy is the continuous single-qubit $Z$-rotation ~(SQR). Restricting non-Clifford layers exclusively to SQRs ensures that the diagonal gates take a strict tensor-product form:
\begin{align}
    U = \bigotimes_{j=1}^n \mathrm{diag}\left(1, e^{2\pi i w_j}\right) 
      = \sum_{\mathbf{b}\in\mathbb{Z}_2^n} e^{2\pi i\mathbf{w}\cdot\mathbf{b}}|\mathbf{b}\rangle\langle \mathbf{b}|,
      \label{eq:SQR}
\end{align}
where the vector $\mathbf{w} = (w_1, w_2, \dots, w_n)$ defines the effective rotation angles $w_j$. When these operations are restricted to the $k$th level of the diagonal Clifford hierarchy, the phase weights satisfy the quantization condition:
\begin{equation}
    2^k\mathbf{w} \in \{0, 1, \dots, 2^k-1\}^n.
\end{equation}
Equivalently, this demonstrates that the phase polynomial is strictly linear, which provides the key structural simplification for analytically tracking the $N$-layer ansatz, yet it simultaneously imposes its most severe architectural restriction.

In the following sections, we exploit this mathematical structure to derive exact expressions for the dynamically evolving Pauli expectation values. These exact formulations subsequently enable us to establish rigorous bounds on quantum magic generation for both a shallow-layer model (ansatz~1) and an $N$-Layer model (ansatz~2).

\section{Quantum Magic and Circuit Ansatz in Early FTQC}
\label{sec:bg_nisq_ftqc}

Most previous analyses of quantum magic were developed in the context of the fully fault-tolerant~(FTQC) limit but required an ensemble average. In theoretical studies, T-doped random circuits are widely explored as it allows clear analysis thanks to the randomness~\cite{Leone_2021,Garcia_2023,Ahmadi_2024}. However, this is not suitable for analyzing a deterministic quantum circuit.  Consequently, relying on such ensemble-based analyses is practically infeasible for near-term hardware and provides limited insight into the actual magic-generating capabilities of specific architectural designs.

This conceptual shift is necessitated by computational models of early FTQC architectures, which fundamentally change logical circuit structure. In early FTQC, Clifford operations are comparatively cheap and can be implemented with high fidelity, whereas non-Clifford operations remain the dominant resource bottleneck. Logical circuits therefore naturally take the form of alternating Clifford layers and non-Clifford layers. The corresponding shift in perspective is summarized schematically in Fig.~\ref{Fig:2}. A particularly important example is the space-time efficient analog rotation (STAR) architecture. By utilizing specific subsystem codes, such as the $[[4,1,1,2]]$ code, these architectures implement arbitrary-angle Pauli-$Z$ rotations whose dominant residual errors are strongly biased toward logical $Z$ channels~\cite{PRXQuantum.5.010337,ismail2025transversalstararchitecturemegaquopscale,chung2026partiallyfaulttolerantquantumcomputation}. This makes diagonal non-Clifford gates, rather than random circuit ensembles, the natural objects of study in the early FTQC setting.

These architectures motivate the central optimization problem of this work: given an early FTQC circuit composed of alternating Clifford layers $\{C_i\}$ and diagonal non-Clifford gates $\{U_i\}$, how should one select these operations to maximize the quantum magic of the output state? In this context, \emph{maximum magic} represents more than a formal extremum; it serves as a natural benchmark for evaluating how effectively a constrained architecture converts costly non-Clifford resources into non-stabilizerness and, by extension, into classical simulation hardness~\cite{Howard_2017, Bravyi_2019}. This objective also highlights the intuitive appeal of a hierarchy-based heuristic~\cite{Beverland_2020}: if higher hierarchy levels strictly implied larger magic, then circuit design could be reduced to a simple algebraic ordering. However, as we show in the following sections, this is not the case. Optimal magic generation is intrinsically state-dependent and is fundamentally dictated by the structure of the Pauli spectrum.

\begin{figure}[t]
    \begin{minipage}[b]{\linewidth}
        \includegraphics[width=\linewidth]{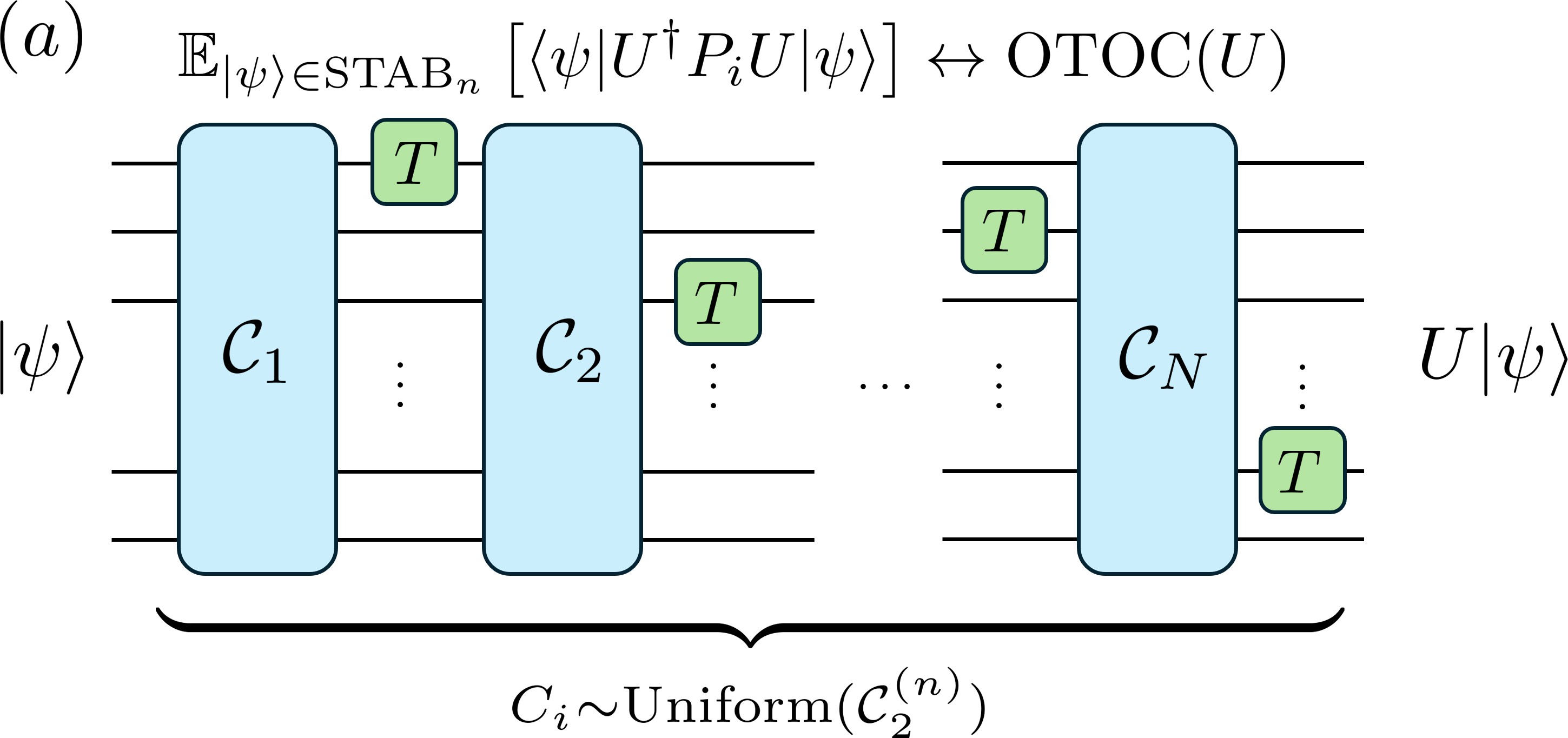}
    \end{minipage}%
    \\
    \vspace{2ex}
    \begin{minipage}[b]{\linewidth}
        \includegraphics[width=\linewidth]{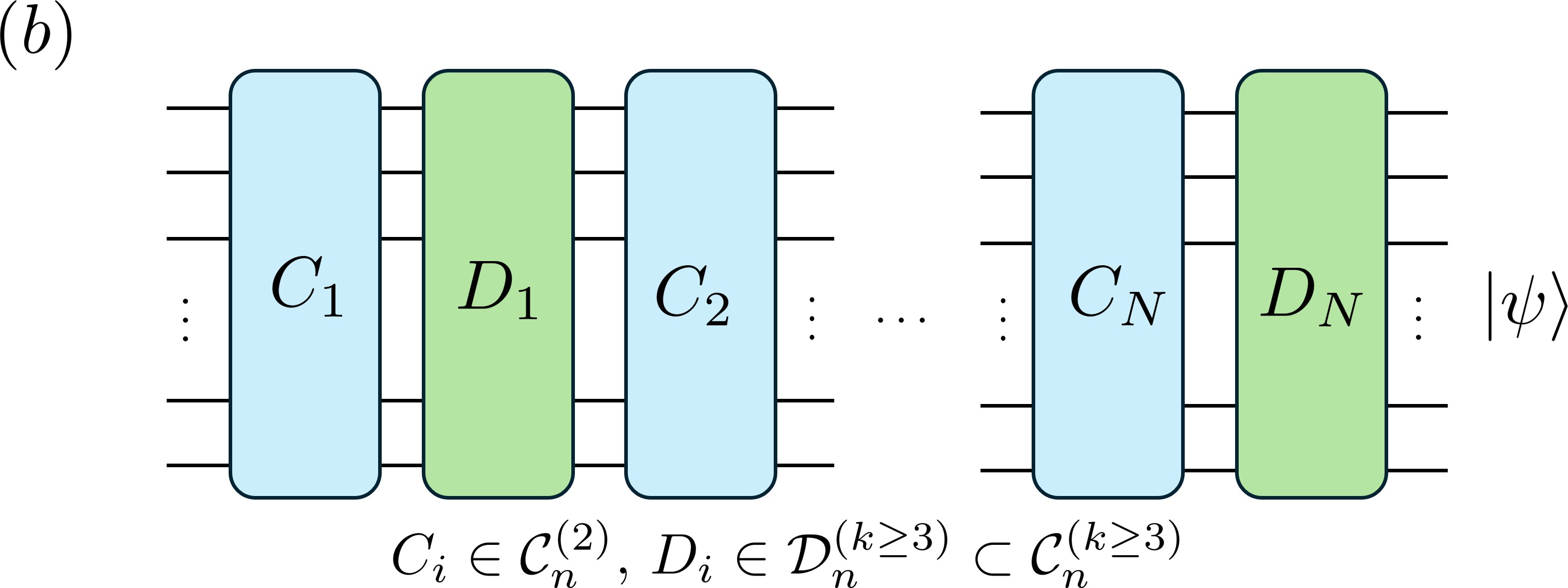}
    \end{minipage}
  \caption{Schematic comparison of circuit setups used to calculate quantum magic. (a) Standard theoretical paradigm based on deep $T$-doped random circuits. In these commonly analyzed models, non-stabilizerness is analytically evaluated by averaging over ensembles of uniformly distributed random Clifford blocks $C_i \sim \text{Uniform}(\mathcal{C}^{(2)}_n)$ to form unitary 3-designs that approximate Haar-random states~\cite{Leone_2021,Leone2022,Ahmadi_2024}. However, this framework relies on the perfect gate accuracy of mature fault-tolerant hardware. (b) A deterministic early FTQC framework analyzed this work. Computations are initialized in a given logical stabilizer state $|\psi\rangle$ and evolve through alternating layers of ideal Clifford operations $C_i \in \mathcal{C}_n^{(2)}$ and arbitrary diagonal non-Clifford gates $D_i \in \mathcal{D}_n^{(k \ge 3)}$ from the $k$th level of the Clifford hierarchy. This structure corresponds to hardware constraints of early FTQC designs, bypassing the need for ensemble averaging.}
  \label{Fig:2}
\end{figure}

Motivated by these considerations, we study the two representative early FTQC settings shown in Fig.~\ref{Fig:1}. In the \textbf{shallow-layer model} (ansatz~1), an initial stabilizer state evolves through a single Clifford layer followed by one layer of general diagonal-hierarchy gates. This model isolates the algebraic expressibility of diagonal operations and admits exact analysis of the resulting Pauli spectrum. In the \textbf{$N$-layer model} (ansatz~2), the circuit is composed of Clifford operations interleaved with single-qubit Pauli-$Z$ rotations. This restricted model captures a hardware-relevant subset of early FTQCs that are motivated by the STAR architecture and allows us to study how magic propagates under repeated Clifford mixing within the kinematic constraints of STAR-type implementations. Together, these two models separate general structural bounds from hardware-motivated dynamical constraints.

In the next section, we formalize this optimization problem precisely by establishing a Pauli-spectrum-based framework for experimentally accessible magic measures, thereby rendering the search for maximal magic analytically tractable.

\section{Main Result I: Structural Framework for Pure-State Magic Measures}
\label{sec:magic_category}
To optimize magic generation in early FTQC architectures, we need to choose an objective function. Although there have been extensive efforts to define magic measures satisfying natural properties~\cite{Howard_2017,Bravyi_2019,Beverland_2020,Leone2022}, focusing on a single specific measure is not desirable for our purposes. To make our optimization framework more general, this section presents two key results. 

In Sec.~\ref{subsec:classification}, we define two families of magic measures: the $L^p$ family and the Pauli-spectrum family. We then show that these families provide two useful structural frameworks for organizing broad classes of pure-state magic quantifiers. In particular, the \(L^p\) family gives stabilizer-expansion gauges, while the Pauli-spectrum family captures experimentally accessible measures that depend only on Pauli expectation values.

In Sec.~\ref{subsec:pauli_spectrum}, we investigate a characteristic structure common to functions in the Pauli-spectrum family. We show that maximizing any function in the Pauli-spectrum family can be rephrased as minimizing a Rényi-type polynomial of Pauli expectation values. This simplification provides a convenient framework for optimizing magic generation.

\subsection{Partial Classification of Magic Measures}
\label{subsec:classification}

\subsubsection{Natural properties for magic measures}
This paper focuses on magic measures for pure quantum states. This focus is justified in early FTQC regimes, where logical error rates are suppressed by quantum error correction. Let $Q$ be a function that maps a pure quantum state to a non-negative real value. We consider the following three conditions as natural properties required of a magic measure:
\begin{enumerate}
\item Faithfulness: $Q(\ket{\psi}) = 0$ if and only if $\ket{\psi} \in \mathrm{STAB}_n$, where $\mathrm{STAB}_n$ denotes the discrete set of all pure $n$-qubit stabilizer states.
\item Monotonicity: For any Clifford unitary or Pauli projection operator $\Lambda$, $Q(\Lambda \ket{\psi}) \leq Q(\ket{\psi})$.
\item Tensor-product subadditivity: $Q(\ket{\psi} \otimes \ket{\phi}) \leq Q(\ket{\psi}) + Q(\ket{\phi})$.
\end{enumerate}

Here, the faithfulness condition means that the magic measure is nonzero only for non-stabilizer states. The monotonicity condition means that the amount of magic never increases under Clifford operations or Pauli measurements. Tensor-product subadditivity means that the amount of magic cannot be increased by composing independent quantum states. It is therefore natural to require these properties of a function $Q$ that is intended to quantify magic.

\subsubsection{Families of magic measure}
Next, we define two families of magic measures. Both satisfy the properties discussed above and contain several widely used measures of quantum magic.
The first family is the $L^p$ family, which is defined in terms of stabilizer expansions. A stabilizer expansion expresses a pure state $\ket{\psi}$ as a linear combination of stabilizer states:
\begin{equation}
\ket{\psi}=\sum_i c_i \ket{s_i},\quad\text{where } \ket{s_i}\in\mathrm{STAB}_n.
\end{equation}
For a non-negative real number $p$, we define the characteristic function $\gamma_p$ as
\begin{equation}
\gamma_p(\ket{\psi}):=\inf\Bigl\{\|c\|_p:\ \ket{\psi}=\sum_i c_i\ket{s_i},\ \ket{s_i}\in\mathrm{STAB}_n\Bigr\},
\end{equation}
where $\|c\|_p=(\sum_i |c_i|^p)^{1/p}$. A magic measure in the $L^p$ family is then defined as a function of $\gamma_p$. For example, $\gamma_1^2$ is known as the stabilizer extent \cite{}. This family also includes several known magic measures, such as stabilizer-rank-related and convex-geometric measures.

The second family is the Pauli-spectrum family. The Pauli spectrum is the list of expectation values of Pauli operators. We denote a Pauli operator in symplectic form by
\begin{equation}
P(\mathbf{x},\mathbf{z}) = i^{\mathbf{x}\cdot\mathbf{z}} X^{\mathbf{x}} Z^{\mathbf{z}},
\end{equation}
where $\mathbf{x}, \mathbf{z} \in \mathbb{F}_2^{n}$ are binary vectors. The Pauli spectrum of $\ket{\psi}$ is defined as $\{ a_{\mathbf{x},\mathbf{z}}(\ket{\psi})\}_{\mathbf{x}, \mathbf{z}}$ where 
\begin{equation}
a_{\mathbf{x},\mathbf{z}}(\ket{\psi}) := \bra{\psi}P(\mathbf{x},\mathbf{z})\ket{\psi}.
\end{equation}
If a magic measure is defined as a function of the Pauli spectrum, we say that it belongs to the Pauli-spectrum family. For example, the stabilizer $\alpha$-Rényi entropy~\cite{Leone2022} belongs to this family and is defined as a moment of 
\begin{equation}
F_\alpha(\ket{\psi})=
\sum_{(\mathbf{x},\mathbf{z})\in \mathbb{F}_2^{2n}}
|a_{\mathbf{x},\mathbf{z}}(\ket{\psi})|^{2\alpha},
\end{equation}
where $\alpha$ is a positive integer. This family also includes several known magic measures, such as stabilizer nullity~\cite{Beverland_2020,xiao2026exponentially} and mana.

\subsubsection{Magic measure classification} 
We now explain in what sense the two families above provide a partial
structural classification of pure-state magic measures. The first result
shows that, under stabilizer normalization, positive homogeneity, and
subadditivity assumptions \footnote{
Here subadditivity is understood as a property of a stabilizer-decomposition
cost, rather than as a statement about physical magic generation by
superposition. If \(|\phi\rangle\) and \(|\psi\rangle\) are decomposed separately
into stabilizer states, then concatenating the two decompositions gives a
valid stabilizer decomposition of \(|\phi\rangle+|\psi\rangle\), which naturally
leads to \(Q(|\phi\rangle+|\psi\rangle)\le Q(|\phi\rangle)+Q(|\psi\rangle)\).
}, any admissible stabilizer-decomposition gauge is dominated by the stabilizer
\(L^1\) gauge.
\begin{theorem}[Stabilizer \(L^1\) gauge]
\label{thm:lp_family}
Let \(Q\) be a pure-state magic functional on state vectors that is
stabilizer-normalized, positively homogeneous, and subadditive:
\[
Q(|\phi\rangle+|\psi\rangle)
\le
Q(|\phi\rangle)+Q(|\psi\rangle).
\]
Then
\[
Q(|\psi\rangle)\le \gamma_1(|\psi\rangle).
\]
Moreover, \(\gamma_1\) is Clifford invariant and tensor-product
submultiplicative:
\[
\gamma_1(|\psi\rangle\otimes|\phi\rangle)
\le
\gamma_1(|\psi\rangle)\gamma_1(|\phi\rangle).
\]
\end{theorem}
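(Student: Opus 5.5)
The plan is to prove the three claims separately. Each follows by manipulating stabilizer decompositions directly, so no earlier results are needed beyond the definition of $\gamma_1$.

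\emph{Domination.} I read ``stabilizer-normalized'' as $Q(|s\rangle)\le 1$ for every normalized stabilizer state $|s\rangle\in\mathrm{STAB}_n$. I read ``positively homogeneous'' as $Q(c|\psi\rangle)=|c|\,Q(|\psi\rangle)$ for complex scalars $c$; for pure states, global phases are physically irrelevant, so this is the natural reading.

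Fix any finite decomposition $|\psi\rangle=\sum_{i=1}^{r}c_i|s_i\rangle$. Applying subadditivity $r-1$ times by induction on $r$, then homogeneity, then normalization gives
\[
Q(|\psi\rangle)\le\sum_i Q(c_i|s_i\rangle)=\sum_i|c_i|\,Q(|s_i\rangle)\le\sum_i|c_i|=\|c\|_1 .
\]
This bound holds for every decomposition, so taking the infimum over decompositions yields $Q(|\psi\rangle)\le\gamma_1(|\psi\rangle)$.

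The only subtlety is whether the infimum in $\gamma_1$ ranges over infinite expansions. If it does, I would restrict to finite ones. This is harmless because $\mathrm{STAB}_n$ is finite up to phases: any expansion can be regrouped by collecting terms proportional to the same stabilizer state, and by the triangle inequality this does not increase $\|c\|_1$.

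\emph{Clifford invariance.} A Clifford unitary $C$ permutes $\mathrm{STAB}_n$ up to phases. Hence $|\psi\rangle=\sum_i c_i|s_i\rangle$ holds if and only if $C|\psi\rangle=\sum_i c_i\,C|s_i\rangle$, and $C|s_i\rangle$ is again a stabilizer state. The map between decompositions therefore preserves the coefficient vector, which gives $\gamma_1(C|\psi\rangle)\le\gamma_1(|\psi\rangle)$. Applying the same argument with $C^\dagger$ gives the reverse inequality, so equality holds.

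\emph{Submultiplicativity.} Take near-optimal decompositions $|\psi\rangle=\sum_i c_i|s_i\rangle$ and $|\phi\rangle=\sum_j d_j|t_j\rangle$, with $\|c\|_1\le\gamma_1(|\psi\rangle)+\epsilon$ and $\|d\|_1\le\gamma_1(|\phi\rangle)+\epsilon$. Tensoring gives
\[
|\psi\rangle\otimes|\phi\rangle=\sum_{i,j}c_id_j\,|s_i\rangle\otimes|t_j\rangle .
\]
Each $|s_i\rangle\otimes|t_j\rangle$ is a stabilizer state on $n+m$ qubits: its stabilizer group is the product of the two stabilizer groups. The coefficients satisfy $\sum_{i,j}|c_id_j|=\|c\|_1\|d\|_1$. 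Therefore
\[
\gamma_1(|\psi\rangle\otimes|\phi\rangle)\le(\gamma_1(|\psi\rangle)+\epsilon)(\gamma_1(|\phi\rangle)+\epsilon),
\]
and letting $\epsilon\to0$ gives the claimed inequality.

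\emph{Main obstacle.} The main risk is not technical but definitional: fixing the precise meaning of the hypotheses so that the first chain of inequalities is valid. In particular, the argument needs homogeneity to cover complex scalars with modulus $|c|$, and normalization to be an upper bound $Q\le1$ on stabilizer states. I would state these interpretations explicitly at the start of the proof.
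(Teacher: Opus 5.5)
Your proof is correct and follows the paper's route: the domination bound is the argument of Appendix~\ref{app:Lp_stabilizer_family} (repeated subadditivity, then homogeneity and normalization, then an infimum over decompositions), and submultiplicativity is proved there the same way, by tensoring $\epsilon$-optimal decompositions and using $\sum_{i,j}|c_id_j|=\|c\|_1\|d\|_1$. The paper never writes out the Clifford-invariance claim, so your bijection-of-decompositions argument supplies a step it leaves implicit; your weaker normalization $Q(|s\rangle)\le 1$ and your remark on regrouping infinite expansions into finite ones are harmless refinements.
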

A corresponding \(L^p\)-type hierarchy, obtained by replacing convexity with a \(p\)-decomposition inequality, is discussed in Appendix~\ref{app:Lp_stabilizer_family}.

\begin{proof} See Appendix \ref{app:Lp_stabilizer_family} \end{proof}
This theorem should be interpreted as a domination result rather than a
complete classification theorem. It shows that \(\gamma_1\) is the
maximal stabilizer-normalized, positively homogeneous, and subadditive
gauge in the stabilizer-expansion framework. Therefore, \(\gamma_1\)
provides a canonical benchmark for stabilizer-expansion-based measures. However, measures in the \(L^p\) family are not ideal for the optimization problems considered in this work, because evaluating them typically requires an optimization over an exponentially large discrete set of stabilizer states~\cite{Bravyi_2019,heimendahl2021stabilizer,PhysRevX.6.021043}. 
This motivates us to restrict magic measures to more computationally manageable measures, such as the Pauli spectrum family, at the cost of stronger assumptions.

To this end, we introduce the following additional requirements:
\begin{enumerate}
\item Permutation invariance: Let $\{a_{\mathbf{x}, \mathbf{z}}\}$ be the Pauli spectrum of a given state $\ket{\psi}$. The function $Q$ is permutation invariant if $Q(\ket{\psi})$ is invariant under any permutation of the Pauli spectrum.
\item Tensor-product additivity: $Q(\ket{\psi} \otimes \ket{\phi}) = Q(\ket{\psi}) + Q(\ket{\phi})$.
\item Pauli-spectrum continuity: Let $\{a_{\mathbf{x}, \mathbf{z}}\}$ be the Pauli spectrum of a given state $\ket{\psi}$. The function $Q$ is Pauli-spectrum continuous if $Q(\ket{\psi})$ is continuous with respect to changes in the Pauli spectrum.
\end{enumerate}
Here, permutation invariance is a stronger condition than monotonicity under Clifford unitary operations. Since the action of a Clifford unitary corresponds to a permutation of the Pauli spectrum, permutation invariance is a sufficient condition for monotonicity under Clifford operations. On the other hand, not every permutation of the Pauli spectrum can be realized by a Clifford operation, and therefore this assumption restricts the class of magic measures more strongly. Tensor-product additivity is a stronger variant of tensor-product subadditivity. Pauli-spectrum continuity is introduced to ensure that the functional can be represented as a continuous function of the Pauli spectrum.

Under these stronger assumptions, we obtain the following classification result.
\begin{theorem}[Unification with the Pauli-spectrum family]
\label{thm:pauli_spectrum_family}
Let $Q$ be a pure-state magic functional satisfying faithfulness, permutation invariance, tensor-product additivity, and Pauli-spectrum continuity. Then, $Q$ belongs to the Pauli-spectrum family.
\end{theorem}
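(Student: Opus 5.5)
The statement, as literally phrased, asks only that $Q$ be a function of the Pauli spectrum. The real content, announced in the introduction and in Sec.~\ref{subsec:pauli_spectrum}, is that $Q$ is a function of the Rényi moments $\log F_\alpha$. I therefore plan two stages. Stage one shows $Q(\ket{\psi})=G(\{a_{\mathbf{x},\mathbf{z}}\})$. Stage two pins down the form of $G$ through additivity.

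Stage one is essentially definitional. A pure state is determined, up to a global phase, by its Pauli spectrum, since
\[
\ket{\psi}\bra{\psi}=2^{-n}\sum_{\mathbf{x},\mathbf{z}} a_{\mathbf{x},\mathbf{z}}\,P(\mathbf{x},\mathbf{z}).
\]
Any functional of the state (a physical quantity, so phase-independent) is therefore a function $G$ of the spectrum, and this already gives membership in the Pauli-spectrum family. Permutation invariance then says that $G$ depends only on the multiset $\{a_{\mathbf{x},\mathbf{z}}\}$. I would also argue that $G$ depends only on the multiset $\{a_{\mathbf{x},\mathbf{z}}^2\}$: conjugating by suitable Paulis flips signs, and the sign pattern is itself a permutation-invariant datum.

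Equivalently, $G$ is a function of the empirical distribution
\[
\mu_\psi=4^{-n}\sum_{\mathbf{x},\mathbf{z}}\delta_{a_{\mathbf{x},\mathbf{z}}^2}\quad\text{on }[0,1],
\]
together with $n$. A symmetric continuous function of the spectrum is determined by its power sums, so $G$ is a function of the moments
\[
F_\alpha=\sum_{\mathbf{x},\mathbf{z}}|a_{\mathbf{x},\mathbf{z}}|^{2\alpha}.
\]
Pauli-spectrum continuity guarantees that this representation is continuous in the moments, not merely set-theoretic.

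Stage two uses additivity. For a product state the spectrum is the outer product, $a^{\psi\otimes\phi}_{(\mathbf{x},\mathbf{x}'),(\mathbf{z},\mathbf{z}')}=a^\psi_{\mathbf{x},\mathbf{z}}\,a^\phi_{\mathbf{x}',\mathbf{z}'}$, so the moments are multiplicative:
\[
F_\alpha(\psi\otimes\phi)=F_\alpha(\psi)\,F_\alpha(\phi).
\]
Writing $Q=H(\{\log F_\alpha - n\log 2\})$, where the normalization is chosen so that stabilizer states map to $0$ (they have exactly $2^n$ unit entries and so $F_\alpha=2^n$), additivity becomes Cauchy's equation $H(u+v)=H(u)+H(v)$. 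It holds on the additive semigroup of achievable log-moment vectors. Continuity then forces $H$ to be linear on the closed cone these vectors generate, so
\[
Q=\sum_\alpha c_\alpha\bigl(n\log 2-\log F_\alpha\bigr),
\]
a Rényi-type combination. Faithfulness is used at the end: $F_\alpha\le 2^n$ with equality for all $\alpha$ iff every $|a_{\mathbf{x},\mathbf{z}}|\in\{0,1\}$, i.e.\ iff the state is a stabilizer state. This constrains the coefficients (nonnegative, not all zero) so that $Q$ vanishes exactly on $\mathrm{STAB}_n$.

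The main obstacle is the Cauchy step. The achievable log-moment vectors form a countable union of subsets of a vector space, not an open set, so extending $H$ from this semigroup to a linear map needs care. I would first try to show that the closure of the cone spanned by single-qubit (and few-qubit) states has nonempty interior in a finite-dimensional truncation of moment space, and then apply continuity together with standard Cauchy-equation regularity. I also expect that an infinite family of moments may be needed, which would yield a limit of finite combinations rather than a finite sum. If this fails, I would settle for the weaker but already sufficient literal conclusion of stage one.
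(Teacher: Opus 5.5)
Your stage one is a complete and correct proof of the statement as literally posed. Since $\ket{\psi}\bra{\psi}=2^{-n}\sum_{\mathbf{x},\mathbf{z}}a_{\mathbf{x},\mathbf{z}}P(\mathbf{x},\mathbf{z})$, two pure states with the same Pauli spectrum are the same ray, so any functional of the physical state factors through the spectrum. None of the four hypotheses is actually used; the paper's definitions of permutation invariance and Pauli-spectrum continuity already presuppose this factorization.

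This is a different route from the paper's. The paper defers to Appendix~\ref{app:pauli_spectrum_classification}, whose theorem \emph{assumes} from the outset that $Q$ is a continuous symmetric function of $x_P=|a_P|^2$. It approximates $e^{Q}$ by polynomials via Stone--Weierstrass, rewrites these in the power sums $F_\alpha$, and argues that multiplicativity singles out a monomial $\prod_\alpha F_\alpha^{k_\alpha}$. This gives the Rényi form of Theorem~\ref{thm:alpha_renyi}, and membership in the family is read off from that form.

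Your argument actually proves membership rather than presupposing it. At fixed $n$, your power-sum representation (Newton's identities on $4^n$ variables) is also exact rather than approximate. The paper's route aims at the stronger structural statement used later, but its key step does not hold up. It applies ``a multiplicative polynomial in the $F_\alpha$ must be a monomial'' to a polynomial that only approximates $e^{Q}$, which need not itself be multiplicative; the approximants for different $n$ even live in different variables. So that stronger conclusion is not rigorously obtained there either.

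Do not present stage two as part of this proof, for two reasons.

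First, you need the reduction from $\{a_{\mathbf{x},\mathbf{z}}\}$ to $\{a_{\mathbf{x},\mathbf{z}}^2\}$ before the even moments $F_\alpha$ suffice, and it is not justified. Clifford invariance is not among the listed hypotheses. Moreover, conjugation by a Pauli flips signs only on its anticommutant, not in arbitrary patterns. The paper simply avoids this issue by hypothesis.

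Second, the target of your Cauchy step is not forced in the form you wrote. Take $Q=n\log 2-\log F_{3/2}=\tfrac12 M_{3/2}$. It is continuous, permutation invariant, additive, and faithful, because $x^{3/2}\le x$ on $[0,1]$ with equality iff $x\in\{0,1\}$. Yet it is no finite combination of $\log F_k$ with integer $k$. On real single-qubit states ($\langle Y\rangle=0$) with $\langle X\rangle^2=u$ and $\langle Z\rangle^2=1-u$, every $\log F_k=\log\bigl(1+(1-u)^k+u^k\bigr)$ is analytic at $u=0$, whereas $\log F_{3/2}$ contains a $u^{3/2}$ term.

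This is exactly the obstacle you anticipated: $H$ depends on the whole moment sequence, so finite-dimensional Cauchy regularity does not close the argument. At most a closure-type statement could be hoped for.
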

\begin{proof} See Appendix \ref{app:pauli_spectrum_classification} \end{proof}

Therefore, by imposing these stricter conditions, we can convert the discussion of magic generation into one based on the Pauli-spectrum family. This family is particularly well suited to our analysis, because its elements can be inferred from Pauli expectation data and admit efficient sampling strategies~\cite{xiao2026exponentially,bittel2025operationalinterpretationstabilizerentropy}. Also, its analytical form is preserved under the circuit transformations of our interest, as discussed later in Secs.~\ref{sec:shallow_layer} and~\ref{sec:deep_layer}. 
The structural classification is summarized schematically in Fig.~\ref{Fig:3}.

\begin{figure}[t]
        \includegraphics[width=\linewidth]{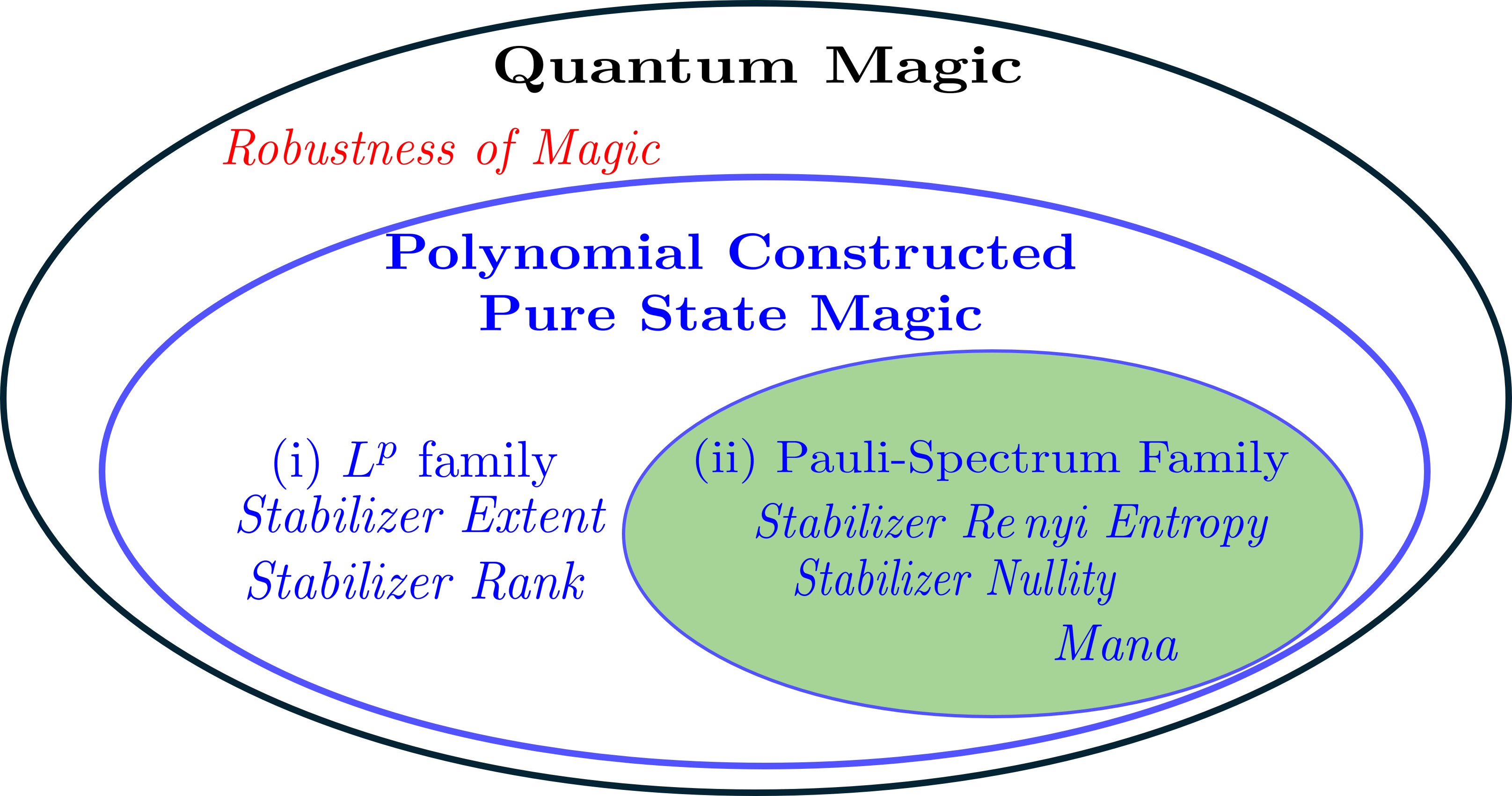}
  \caption{
Structural organization of the pure-state magic measures considered in this work. We highlight two useful operational families: 
(i) stabilizer-expansion measures, represented by the \(L^p\) hierarchy, and 
(ii) Pauli-spectrum measures, represented by functions of Pauli expectation values. 
The \(L^p\) family provides a natural convex-geometric benchmark, but is generally difficult to evaluate because it involves optimization over stabilizer decompositions. 
The Pauli-spectrum family contains quantities such as stabilizer Rényi entropy and stabilizer nullity, and is particularly suitable for early-FTQC resource tracking because it can be related to Pauli measurement data and admits tractable analytical propagation under the circuit transformations studied in this work.
}
  \label{Fig:3}
\end{figure}

\subsection{Characterization of Pauli-spectrum family}
\label{subsec:pauli_spectrum}
While the Pauli-spectrum family is of practical importance because of its direct experimental accessibility, there can be many detailed definitions in the family, which are not convenient for discussing the maximization of magic generation. Fortunately, we can further claim that any magic measure in the Pauli-spectrum family must take a Rényi-type form, and we can focus on a specific function to discuss the whole Pauli-spectrum family of magic measures. We formalize this property through the following uniqueness theorem.

\begin{theorem}[Uniqueness of Pauli-spectrum magic measures]
\label{thm:alpha_renyi}
Suppose $Q(|\psi\rangle)$ is a magic measure that belongs to a Pauli-spectrum family and satisfies permutation invariance, faithfulness, tensor-product additivity, and Pauli-spectrum continuity. Then it must take the Rényi-type form
\begin{equation}
Q(|\psi\rangle)= q\log d-\sum_{\alpha\ge2} k_\alpha \log F_\alpha(|\psi\rangle),
\label{eq:alpha_renyi_function}
\end{equation}
where $d=2^n$, the coefficients $k_\alpha>0$, $\alpha$ is integer, $q$ is the normalization factor to satisfy faithfulness, and
\begin{equation}
F_\alpha(|\psi\rangle)=\sum_{(\mathbf{x},\mathbf{z})\in \mathbb{F}_2^{2n}} |\langle\psi|P(\mathbf x, \mathbf z)|\psi\rangle|^{2\alpha}.
\label{eq:F_alpha}
\end{equation}
\end{theorem}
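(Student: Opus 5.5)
Permutation invariance lets me treat $Q$ as a symmetric function of the multiset of Pauli expectation values $\{a_{\mathbf{x},\mathbf{z}}\}$, equivalently of the distribution $\Xi_\psi(\mathbf{x},\mathbf{z}) = a_{\mathbf{x},\mathbf{z}}^2/d$. This is a probability distribution on $\mathbb{F}_2^{2n}$ because $\sum a^2 = d$ for pure states. Since $|a|$ enters only through its square (a sign flip is always realizable by a Pauli conjugation), I will use the \emph{characterization of additive symmetric functionals of product distributions}. Under tensor product the Pauli spectrum factorizes, $a_{(\mathbf{x},\mathbf{x}'),(\mathbf{z},\mathbf{z}')}(\psi\otimes\phi)=a_{\mathbf{x},\mathbf{z}}(\psi)\,a_{\mathbf{x}',\mathbf{z}'}(\phi)$, so $\Xi_{\psi\otimes\phi}=\Xi_\psi\otimes\Xi_\phi$. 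The axioms then say that $Q$ is a symmetric, continuous functional of $\Xi$ that is additive on products. This is exactly the setting of the Rényi/Aczél--Daróczy characterization of additive entropies.

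\textbf{Step 1: reduce to moments.} A symmetric continuous function of a finite multiset is a continuous function of its power sums. For the squared spectrum these power sums are $F_\alpha$, or equivalently $\sum_{\mathbf{x},\mathbf{z}}\Xi^\alpha = d^{-\alpha}F_\alpha$. The trivial moments are fixed: $F_0=d^2$ counts Paulis and $F_1=d$ is the purity, so only $\alpha\ge2$ carries state dependence. Thus $Q=G_n(F_2,F_3,\dots)$.

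\textbf{Step 2: additivity forces logarithms.} Multiplicativity of moments under tensor products, $F_\alpha(\psi\otimes\phi)=F_\alpha(\psi)F_\alpha(\phi)$, turns tensor additivity into the Cauchy-type equation
\[
G_{n+m}(\{F_\alpha F'_\alpha\})=G_n(\{F_\alpha\})+G_m(\{F'_\alpha\}).
\]
Write $G_n(\{F_\alpha\})=H_n(\{\log F_\alpha\})$. Continuity, plus the fact that product states realize enough variation of the log-moments, then forces $H$ to be affine: $H_n(u)=c_n-\sum_\alpha k_\alpha u_\alpha$, with $c_{n+m}=c_n+c_m$, hence $c_n = q\,n\log 2 = q\log d$. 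This is the main obstacle. The achievable set of log-moment vectors is a strange semialgebraic subset, and I must show that the Cauchy equation on the additive semigroup it generates, together with continuity, pins down linearity. I would first try the classical route via $n$-fold tensor powers, $G_{kn}(\{F_\alpha^k\})=kG_n(\{F_\alpha\})$, combined with mixing one-qubit families such as $\cos\theta\ket0+\sin\theta\ket1$ to cover an open set in finitely many moment coordinates. Honestly, a fully rigorous version likely needs an extra finiteness or regularity assumption on which $\alpha$ appear, so I would state that assumption explicitly.

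\textbf{Step 3: faithfulness fixes constants and signs.} For a stabilizer state, exactly $d$ Paulis have $|a|=1$, so $F_\alpha=d$ for all $\alpha$. Faithfulness requires $Q=0$ there, which gives $q=\sum_\alpha k_\alpha$. For a non-stabilizer state, some $|a|<1$ while $\sum a^2=d$, so $F_\alpha<d$ strictly for every $\alpha\ge2$. Hence $-\log F_\alpha>-\log d$. Positivity of $Q$ on all non-stabilizer states, tested on one-qubit families that separate individual moments, forces each $k_\alpha\ge0$. Discarding zero terms gives $k_\alpha>0$. Conversely, if every $k_\alpha>0$ then $Q>0$ off $\mathrm{STAB}_n$, which confirms faithfulness and completes Eq.~\eqref{eq:alpha_renyi_function}.
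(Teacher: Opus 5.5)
\textbf{Step 2 cannot be completed under the stated hypotheses.} The difficulty is not a technicality that an extra regularity argument will fix: without further assumptions the conclusion is false. The ``finiteness assumption'' you propose to add therefore carries the whole content of the uniqueness claim.

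For a counterexample, take $Q(|\psi\rangle)=\log d-\log F_{5/2}(|\psi\rangle)$ with $F_{5/2}=\sum_P|a_P|^{5}$. This is $\tfrac32 M_{5/2}$, the stabilizer R\'enyi entropy of order $5/2$. It satisfies every hypothesis:
\begin{itemize}
\item it is permutation invariant and continuous;
\item it is tensor-additive, because $F_{5/2}$ is multiplicative on products;
\item it is faithful, by your own Step~3 argument, since $|a_P|^5\le a_P^2$ with equality iff $|a_P|\in\{0,1\}$.
\end{itemize}
Now consider the one-qubit family with $a_I=1$, $a_Z=\cos\theta$, $a_X=\sin\theta$, $a_Y=0$. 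Every $\log F_\alpha$ with integer $\alpha$ is real-analytic at $\theta=0$. By contrast, $\log F_{5/2}$ contains a $|\theta|^5$ term and is not even $C^5$ there. So $Q$ is not of the form $q\log d-\sum_\alpha k_\alpha\log F_\alpha$ with finitely many integer $\alpha$. The order-one entropy $-d^{-1}\sum_P a_P^2\log a_P^2$ is another counterexample.

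Structurally, your Step~1 makes $G_n$ a function of $F_2,\dots,F_{4^n}$, a list whose length grows with $n$. Tensor additivity only says that $H$ is additive on the semigroup of achievable log-moment sequences; nothing singles out finitely many integer coordinates. The R\'enyi/Acz\'el--Dar\'oczy characterization you invoke also needs an extra mean-value (quasi-arithmetic) postulate. Even then it returns R\'enyi entropies of arbitrary real order and Shannon entropy, not integer orders.

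For comparison, the paper takes a different route. It applies Stone--Weierstrass to $e^{Q}$, reduces to power sums, and asserts that a multiplicative polynomial in the $F_\alpha$ must be a monomial. The version it states in the appendix is only an $\epsilon$-approximation by finite combinations whose coefficients depend on $\epsilon$, so it does not supply the missing ingredient either.

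\textbf{Step 3's sign argument also fails.} Positivity cannot ``separate individual moments,'' because the moments are mutually comparable. With weights $a_P^2/d$ and $y_P=a_P^2\in[0,1]$, you have $F_\beta/d=\mathbb{E}[y^{\beta-1}]$. Hence $F_\beta\le F_\alpha$ for $\beta\ge\alpha$, and Jensen gives $F_\beta/d\ge (F_\alpha/d)^{(\beta-1)/(\alpha-1)}$. Together,
\[
\log\frac{d}{F_\alpha}\;\le\;\log\frac{d}{F_\beta}\;\le\;\frac{\beta-1}{\alpha-1}\,\log\frac{d}{F_\alpha}.
\]
It follows that $Q=\log(d/F_2)-\tfrac13\log(d/F_3)\ge\tfrac13\log(d/F_2)$. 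This $Q$ is faithful, continuous, symmetric and additive, yet has $k_3=-\tfrac13<0$. So $k_\alpha>0$ cannot be derived from faithfulness and has to be imposed as an assumption. The paper's sketch effectively does this by appealing to known monotonicity of each stabilizer entropy. Your derivation $q=\sum_\alpha k_\alpha$ and the converse direction ($k_\alpha>0\Rightarrow$ faithful) are correct.

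\textbf{A smaller point about Step~1.} Pauli conjugation negates the entire anticommuting half of the spectrum at once, not a single entry. It therefore does not by itself show that $Q$ depends only on $a_P^2$. The paper simply builds this into the setup by working with $x_P=|a_P|^2$ from the start.
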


\begin{proof}[Proof sketch]
Since $Q$ depends only on the Pauli expectation values and is symmetric under their permutations, it must be constructed from permutation-invariant functions of the Pauli spectrum. Additivity then forces these functions to be built from multiplicative spectral moments, whose logarithms are additive. A natural closed set of multiplicative building blocks is given by the Rényi-type Pauli moments in Eq.~\eqref{eq:F_alpha}. Thus, any admissible functional in this class must take the form of a linear combination of $-\log F_\alpha$, up to the dimension-dependent offset $q\log d$ to make it vanish on the stabilizer state. This characterization is already implied at the level of Clifford-invariant Pauli-spectrum functionals. Since $Q$ is a positive linear combination of these Rényi-type moments. When the coefficients are chosen so that the corresponding Rényi-type quantities are magic monotones, known monotonicity results~\cite{Leone_2024} imply that their positive linear combinations also satisfy the required monotonicity property. The full proof is presented in Appendix~\ref{app:pauli_spectrum_classification}.
\end{proof}

Theorem~\ref{thm:alpha_renyi} is crucial for our analysis. It should be viewed not as a universal statement about all conceivable magic monotones, but as the tool that makes the rest of the paper analytically tractable. It shows that, within the moment-based Pauli-spectrum framework adopted here, the relevant additive pure-state magic measures can be organized in terms of the same basic moments \(F_\alpha\). Consequently, bounds or no-go results proved later for \(F_\alpha\) can be transferred to this operational family whenever the measure is expressed as a monotone function of these moments.

In the remainder of this work, we adopt the Pauli-spectrum family as the operational resource framework for early FTQC. The next section uses this framework in the shallow-layer model~(ansatz~1), where we derive the exact Pauli spectrum after a diagonal hierarchy gate acts on an arbitrary stabilizer input and convert that result into rigorous bounds on magic generation. We then extend the same framework to the $N$-layer model~(ansatz~2), where the evolving Pauli spectrum becomes the natural state-dependent object for optimization.

\section{Main Result II: Shallow Layer Model}
\label{sec:shallow_layer}

With the operational family of magic measures fixed in Sec.~\ref{sec:magic_category}, we can now return to the circuit-design problem posed in the Introduction: how much magic can a constrained early FTQC circuit generate, and what determines the optimum? The simplest nontrivial setting is
a single diagonal non-Clifford layer acting on a stabilizer
input. Structurally, this ansatz can be viewed as a generalization of Instantaneous Quantum Polynomial-time~(IQP) circuits~\cite{Shepherd_2009}, which cannot be efficiently simulated classically under standard complexity-theoretic assumptions~\cite{PhysRevLett.117.080501}. This shallow-layer regime already captures the essential structural tension of early fault tolerance: the diagonal layer is costly, but its ability to generate magic is not determined solely by the gate.

In practical early FTQC circuits, the computation is typically initialized in $|0\rangle^{\otimes n}$, after which a Clifford layer prepares the logical input to the non-Clifford stage~\cite{Litinski_2019}. Since Clifford operations map stabilizer states to stabilizer states, this preparation step can generate an arbitrary stabilizer state. Therefore, the shallow-layer ansatz
\begin{equation}
    |0\rangle^{\otimes n}
    \xrightarrow{\,C\,}
    |\psi\rangle \in \mathrm{STAB}_n
    \xrightarrow{\,U\,}
    |\psi_{\mathrm{out}}\rangle
\end{equation}
is mathematically equivalent to studying an arbitrary stabilizer input $|\psi\rangle \in \mathrm{STAB}_n$ acted on by a diagonal hierarchy gate $U \in \mathcal{D}_n^{(k)}$.
This viewpoint is especially natural for architecture-level analysis, since it separates the freely configurable Clifford preparation from the costly diagonal resource layer.

\begin{figure}[t]
    \includegraphics[width=\linewidth]{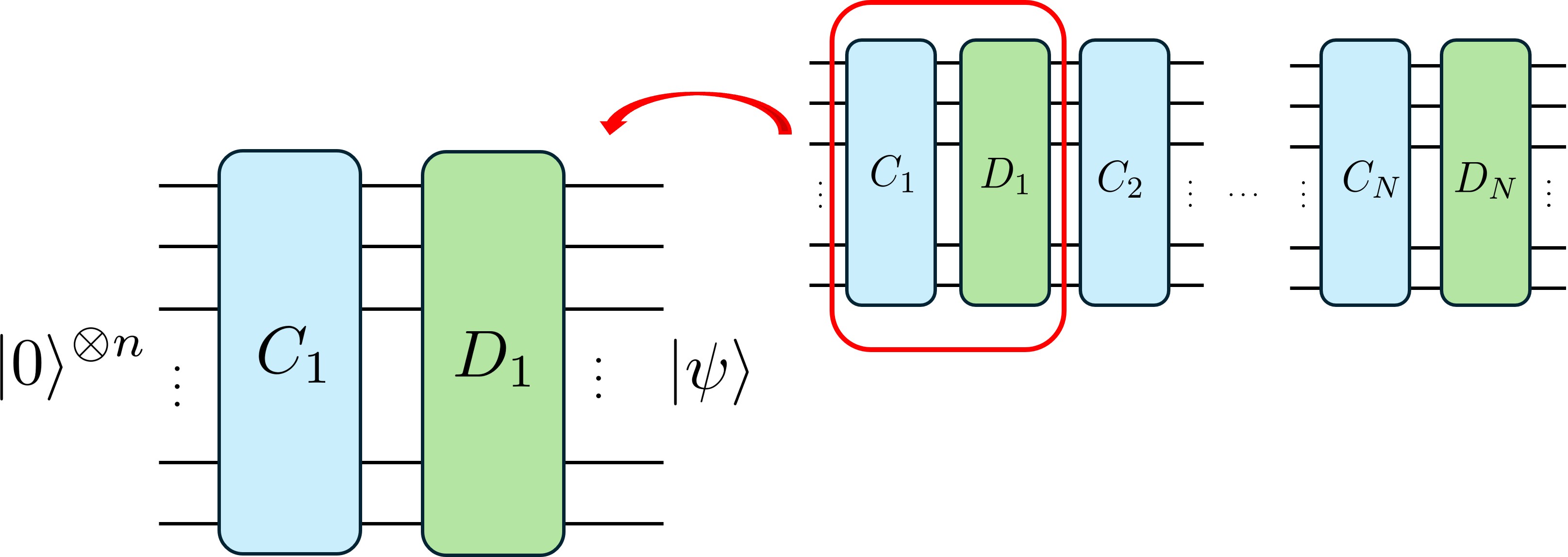}
  \caption{Architectural models for early fault-tolerant quantum computing (FTQC). A shallow-layer architectural model utilizing generalized operations from the diagonal Clifford hierarchy. Blue boxes $C_i$ denote Clifford gates; Green boxes denote diagonal hierarchy gates. The shallow-layer model permits arbitrary, multi-qubit general diagonal hierarchy gates.}
  \label{Fig:4}
\end{figure}

Here, we show that we can derive a closed-form Pauli spectrum. The purpose of the following derivation is not merely to obtain a closed-form representation, but to identify the structural variables that control magic generation in the shallow-layer model. In this setting, the key competition is between the phase interference introduced by the diagonal hierarchy gate and the support constraints imposed by the input stabilizer state. As we show below, the relevant stabilizer parameter is the number $r$ of independent pure-$Z$ generators, which directly limits how broadly the Pauli spectrum can spread.

Following standard stabilizer formalism, we represent the input stabilizer state $|\psi\rangle\in \text{STAB}_n$ by a phase vector $\mathbf h \in \mathbb{Z}_2^n$ and a stabilizer generator matrix $G$. After Gaussian elimination, $G$ can be brought to the form~\cite{gottesman1997stabilizercodesquantumerror}
\begin{equation}
    G=
    \begin{pmatrix}
        X_M & Z_M \\
        0   & Z_{\mathrm{pure}}
    \end{pmatrix},
\end{equation}
where the $r\times n$ submatrix $Z$ consists of $r$ linearly independent pure-$Z$ generators, defining the stabilizer group $\mathcal{S}$.

For a Pauli string $P(\mathbf{x},\mathbf{z})$ in symplectic form, define
\begin{equation}
    a_{P(\mathbf{x},\mathbf{z})}
    :=
    \langle \psi|U^\dagger P(\mathbf{x},\mathbf{z})U|\psi\rangle .
\end{equation}
Using the phase-polynomial representation of $U\in\mathcal{D}_n^{(k)}$, this expectation value through Eq.~\eqref{eq:diagonal_hierarchy} can be written as
\begin{align}
    a_{P(\mathbf{x},\mathbf{z})}
    &=
    \sum_{\mathbf{b}\in\mathbb{Z}_2^n}
    \exp\!\left(
        2\pi i \sum_m^{m'} \frac{\theta_m(\mathbf{b},\mathbf{x})}{2^m}
    \right)
    \langle \mathbf{b}\oplus\mathbf{x}|\psi\rangle
    \langle \psi|\mathbf{b}\rangle \notag\\
    &\qquad\times
    i^{\mathbf{x}\cdot\mathbf{z}}
    (-1)^{\mathbf{z}\cdot\mathbf{b}},
\end{align}
where $m'$ is the maximum phase resolution, $f_m(b)$ is the phase polynomial for diagonal Hierarchy and
\begin{equation}
    \theta_m(\mathbf{b},\mathbf{x})
    =
    f_m(\mathbf{b})-f_m(\mathbf{b}\oplus\mathbf{x}).
\end{equation}

This expression immediately makes the physical mechanism transparent. Since diagonal gates act only by attaching phases in the computational basis, magic can be generated only through interference between basis amplitudes already present in the stabilizer input. The quantity $ \langle \mathbf{b}\oplus\mathbf{x}|\psi\rangle\langle \psi|\mathbf{b}\rangle$ therefore encodes the geometric constraint imposed by the input stabilizer state, while the exponential factor encodes the non-Clifford phase structure of the diagonal gate. The shallow-layer problem is precisely the interplay between these two ingredients.

\begin{figure}[t]
\centering
\includegraphics[width=\linewidth]{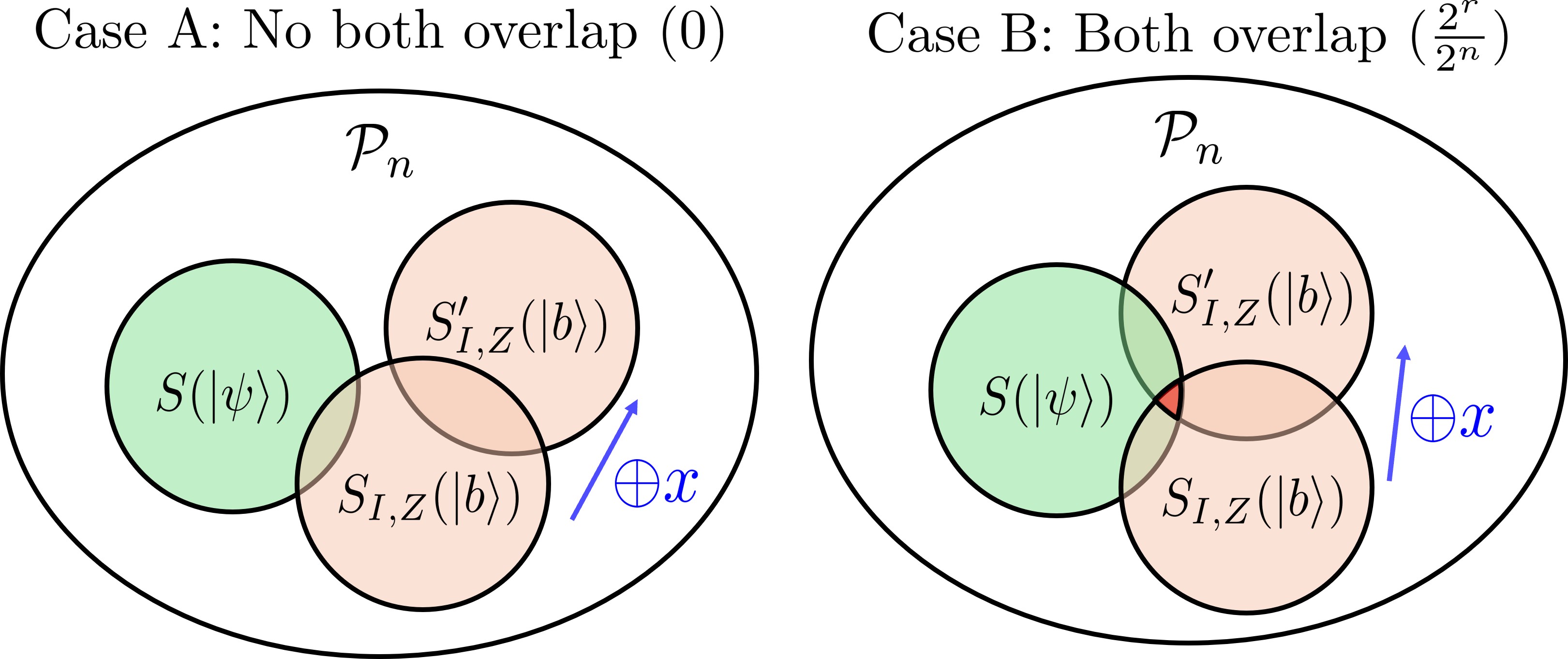}
\caption{Schematic diagram of the nontrivial group overlap, excluding the identity element. The red region denotes the simultaneous overlap between two computational-basis states and the stabilizer state, which determines the nonzero contributions to the Pauli spectrum.}
\label{Fig:5}
\end{figure}

Since computational-basis states $|\mathbf{b}\rangle$ are simultaneous eigenstates of pure-$Z$ stabilizers, the configurations $|\mathbf{b}\rangle$ and $|\mathbf{b}\oplus\mathbf{x}\rangle$ share the same generator matrix while differing only in their phase vectors. Consequently, the factor $\langle \mathbf{b}\oplus\mathbf{x}|\psi\rangle\langle \psi|\mathbf{b}\rangle$ directly measures the non-trivial overlap~(not including $I$) between the stabilizer group $\mathcal{S}$ and the corresponding computational-basis components. As illustrated schematically in Fig.~\ref{Fig:5}, this overlap is nonzero only when the two configurations are compatible with the same stabilizer structure.

This overlap can be evaluated explicitly in terms of the $r$ pure-$Z$ generators of $\mathcal{S}$. For $\mathbf{x}\in\mathbb{L}_x$, one finds
\begin{equation}
    \left|
    \langle \mathbf{b}\oplus\mathbf{x}|\psi\rangle
    \langle \psi|\mathbf{b}\rangle
    \right|
    =
    \begin{cases}
        \dfrac{2^r}{2^n},
        & \text{if }\mathbf{h}'\oplus(\mathbf{b} Z_{\mathrm{pure}})\oplus\mathbf{K}=\mathbf{0}^r,\\[6pt]
        0,
        & \text{otherwise},
    \end{cases}
\end{equation}
where $\mathbb{L}_x$ is the $(n-r)$-dimensional projected Lagrangian subspace generated by $X_M$, $\mathbf{K}$ is determined by
\begin{equation}
    2\mathbf{K}=\mathbf{x} Z_{\mathrm{pure}} \pmod 4,
\end{equation}
and $\mathbf{h}'\in\mathbb{Z}_2^r$ is the projected phase subvector associated with the pure-$Z$ generators. 

Substituting this condition yields the exact closed-form Pauli expectation value for $\mathbf{x}\in\mathbb{L}_x$:
\begin{align}
    a_{P(\mathbf{x},\mathbf{z})}
    &=
    \sum_{\mathbf{b}\in\mathbb{Z}_2^n}
    \exp\!\left(
        2\pi i \sum_{m=1}^{m'}
        \frac{\theta_m(\mathbf{b},\mathbf{x})}{2^m}
    \right)
    (-1)^{\mathbf{z}\cdot\mathbf{b}}
    \frac{2^r}{2^n}
    \notag\\
    &\times
    \bigl[
        (-1)^{s_0+\mathbf{x}\cdot\mathbf{z}_{\mathrm{ref}}+\mathbf{b}\cdot\mathbf{z}_{\mathrm{ref}}}
        i^{\mathbf{x}\cdot\mathbf{z}_{\mathrm{ref}}}
    \bigr]
    \notag\\
    &\times\delta\!\bigl(
        \mathbf{0}^r,\,
        \mathbf{h}'\oplus(\mathbf{b} Z_{\mathrm{pure}})\oplus\mathbf{K}
    \bigr),
    \label{eq:shallow_model_pauli_spectrum}
\end{align}
where $s_0$ and $\mathbf{z}_{\mathrm{ref}}$ correspond to the symplectic reference vector belongs to Lagrange subspace $(\mathbf{x},\mathbf{z}_{\mathrm{ref}})\in\mathbb{L}$. This expression automatically satisfies the normalization condition, remains transparent on pure-$Z$ Pauli strings, and reproduces the expected lower-level limits. Most importantly for our purposes, it makes explicit that magic generation depends jointly on the phase-polynomial structure of $U$ and on the stabilizer geometry encoded by $r$. The detailed calculation is presented in Appendix~\ref{app:shallow_analytic}.

Summing over the valid Pauli spectrum then yields the polynomial functional underlying the experimental operational $\alpha$-Rényi function $F_\alpha$ for all non-zero integer $\alpha$ is:
\begin{align}
        &F_{\alpha}(U|\psi\rangle)  \notag \\
    =& \sum_{\mathbf x\in \mathbb{L}_x} \sum_{\bigoplus\limits_{i=1}^\alpha \mathbf b'_i \oplus \mathbf b_i = 0^n} \hspace{-3ex}
\exp\!\left(
  2\pi i \sum_{m=1}^{m'} \sum_{i=1}^\alpha\frac{\theta_m(\mathbf b_i,\mathbf x) - \theta_m(\mathbf b_i',\mathbf x)}{2^m}
\right) \notag \\
& \frac{4^{\alpha r}}{4^{\alpha n}} 2^n
\delta\!\bigl(\mathbf 0^{r},\, \mathbf h' \oplus (\mathbf b_i Z_{\mathrm{pure}}) \oplus \mathbf K\bigr)\bigr)
\label{eq:magic polynominal}
\end{align}
The combinatorial effect of the phase polynomial $\theta_m$ inside this functional can be understood naturally through a Walsh--Hadamard viewpoint~\cite{park2026samplehardwareefficientfidelityestimation}.

The exact Pauli spectrum derived above therefore provides a complete description of all experimentally operational pure-state magic measures in this model, since by Theorem~\ref{thm:alpha_renyi} every such measure reduces to a function of the moments $F_\alpha$.

This result thus indicates not only that the shallow-layer model is analytically solvable, but also that the solution already exposes the design variables relevant for early FTQC. The diagonal layer controls the available phase interference, while the Clifford preparation controls the stabilizer geometry of the input. In particular, the integer $r$ enters as a direct bottleneck on Pauli-spectrum spreading. This observation is the starting point for the bounds derived in the following section for the lower and upper bounds of magic generation.

\subsection{Lower bound: zero magic remains possible}

To state that the lower bound including zero magic generation remains possible, let $r$ denote the maximal number of independent pure-$Z$ generators in the stabilizer group of the input state. Operationally, $r$ quantifies how restricted the support of the input stabilizer state is in the computational basis: each independent pure-$Z$ generator removes one binary degree of freedom from the allowed basis strings.

\begin{theorem}[Zero-magic bound in the shallow-layer model]
\label{thm:zero_magic_shallow}
Let $|\psi\rangle \in \mathrm{STAB}_n$ be an $n$-qubit stabilizer state with $r$ independent pure-$Z$ generators. Let $k$ be an integer that satisfies 
\begin{equation}
    k \le r + 2.
\end{equation}
Then there exists a genuinely level-$k$ diagonal gate
\begin{equation}
    U \in \mathcal D_n^{(k)} \setminus \mathcal D_n^{(k-1)}
\end{equation}
such that
\begin{equation}
    U|\psi\rangle \in \mathrm{STAB}_n.
\end{equation}
In particular, all operational pure-state magic measures vanish on the output.
\end{theorem}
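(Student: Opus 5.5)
Our plan is to build $U$ explicitly so that its phase polynomial is constant on the computational-basis support of $|\psi\rangle$ up to an affine-linear correction. We begin by describing that support. Take the pure-$Z$ generators in the form $Z^{\mathbf g_1},\dots,Z^{\mathbf g_r}$ with signs $(-1)^{h'_j}$. Every basis string $\mathbf b$ with $\langle\mathbf b|\psi\rangle\neq 0$ then obeys the $r$ independent affine constraints $\mathbf g_j\cdot\mathbf b=h'_j \pmod 2$. This is exactly the $\delta$-constraint appearing in Eq.~\eqref{eq:shallow_model_pauli_spectrum}.

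Next we change coordinates with a Clifford (CNOT/X) basis change $V$, a linear reversible map on $\mathbb Z_2^n$, so that these constraints become $b_1=c_1,\dots,b_r=c_r$ for fixed bits $c_j$. After this change the support of $V|\psi\rangle$ lies in $\{\mathbf b: b_{1..r}=\mathbf c\}$. We also flip the constrained qubits with $X$ so that $c_j=1$.

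The candidate gate is $U'=C^{r}\!\text{-}Z^{1/2^{m}}$ type: a multi-controlled phase whose phase polynomial is the single monomial $b_1b_2\cdots b_r\,b_{r+1}\cdots$. The cleanest version is the following. Let $U'$ be the diagonal gate with phase $\exp\!\bigl(2\pi i\, b_1\cdots b_r/2^{m}\bigr)$ and choose $m=k-r+1$. By Eq.~\eqref{eq:hierarchy} this lies at level $(m-1)+r=k$. Its single nonvanishing coefficient $c_{m,\mathbf a}$ is odd, which makes it genuinely level $k$ (not in $\mathcal D^{(k-1)}$). The hypothesis $k\le r+2$ is used twice here:
\begin{itemize}
\item it gives $m\le 3$;
\item when $m\ge1$ is needed, we may instead use fewer controls, taking weight $w=k-m+1\le r$ with $m\in\{1,2,3\}$ as convenient; for instance $m=1$ and $w=k\le r$, or $w=r$ and $m=k-r+1$.
\end{itemize}
On the support, $b_1\cdots b_w=1$ identically, so $U'$ acts on $V|\psi\rangle$ as a global phase $e^{2\pi i/2^m}$. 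Hence $U'V|\psi\rangle$ is stabilizer, trivially so.

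Finally we set $U=V^\dagger U' V$. The remaining step is to check that conjugating a diagonal gate by the CNOT/X map $V$ keeps it diagonal and preserves the level $k$. Diagonality holds because $V$ permutes basis states affinely. The level is preserved because the hierarchy is Clifford-conjugation invariant, so $U\in\mathcal C^{(k)}$ and the diagonal property puts it in $\mathcal D^{(k)}$. Genuineness also transfers, since $U\in\mathcal D^{(k-1)}$ would imply $U'\in\mathcal C^{(k-1)}$. Then $U|\psi\rangle=V^\dagger U'V|\psi\rangle$ is a global phase times $|\psi\rangle$, so it lies in $\mathrm{STAB}_n$. By faithfulness and Theorem~\ref{thm:alpha_renyi}, every measure then vanishes. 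Equivalently, in Eq.~\eqref{eq:shallow_model_pauli_spectrum} all $\theta_m(\mathbf b,\mathbf x)$ vanish on the constrained support.

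The main obstacle is pinning down the exact level count, specifically showing that the chosen monomial really is not in level $k-1$. For this we rely on the phase-polynomial characterization of \cite{Cui_2017}: a degree-$w$ monomial with coefficient $1/2^m$ (odd numerator) sits exactly at level $m+w-1$. The precise role of "$+2$" should also be reconciled with the paper's convention. If that convention is interpreted with the controlled-$Z$ offset, the allowance of two extra levels is obtained by using $m$ up to $3$ with $w=r$. If ambiguities arise there, we would fall back to a trivial padding: multiply by a Clifford phase on the unconstrained qubits. That factor leaves the state stabilizer while adjusting nothing in the level.
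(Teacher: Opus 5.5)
Your proposal is correct and follows essentially the same route as the paper. Both arguments use a Clifford basis change that freezes $r$ computational-basis variables on the support of $|\psi\rangle$, then a genuinely level-$k$ phase monomial supported only on those frozen variables, so that it acts trivially on the state, and finally conjugation back, which preserves diagonality and the exact level. Your version differs only in detail: you use just a CNOT/$X$ affine permutation and freeze the bits to $1$, so that $U$ acts as a global phase, whereas the paper freezes them to $0$ so the top term vanishes; yours is the more explicit of the two.

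The one loose end is $r=0$. There your weight-$w\le r$ monomial degenerates to a constant global phase, which is not a genuine level-$k$ gate. In that case $k\le 2$, and any diagonal Clifford such as $S$ works trivially.

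Conversely, for $r\ge1$ the hypothesis $k\le r+2$ is never actually used. Choosing $w=1$, $m=k$ (a single $\mathrm{diag}(1,e^{2\pi i/2^k})$ on one frozen qubit) works for every $k$, so the bound only matters at $r=0$.
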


\begin{proof}[Proof sketch]
The key point is that a diagonal hierarchy gate generates magic only through phase interference on the support of the input stabilizer state. If the input has $r$ independent pure-$Z$ generators, then after a suitable Clifford change of basis, it is equivalent to
\begin{equation}
    |0\rangle^{\otimes r} \otimes |+\rangle^{\otimes (n-r)}.
\end{equation}
So the diagonal phase polynomial is effectively evaluated only on the remaining $n-r$ unfrozen variables. When $k \le r + 2$, one can choose a genuinely level-$k$ phase polynomial whose higher-level contribution vanishes on this restricted support, leaving only a Clifford-compatible phase and therefore a stabilizer output. The full proof is presented in Appendix~\ref{app:shallow_bound}.
\end{proof}

This mechanism can also be applied from the opposite perspective. For \emph{any} fixed, genuinely level-$k$ diagonal gate
\begin{equation}
U \in \mathcal{D}_n^{(k)} \setminus \mathcal{D}_n^{(k-1)},
\end{equation}
provided the system size satisfies $n \ge k-1$, one can always find a nontrivial stabilizer input $|\psi\rangle$ (i.e., a state that is not a computational-basis state, implying $r \le n-1$) such that
\begin{equation}
U|\psi\rangle \in \mathrm{STAB}_n.
\end{equation}
Combining the zero-magic condition $r \ge k-2$ from Theorem~\ref{thm:zero_magic_shallow} with the non-triviality condition $r \le n-1$ mathematically guarantees that such a valid stabilizer geometry exists whenever $n \ge k-1$. Thus, the zero-magic phenomenon is not merely an artifact of specific input families, but a generic obstruction for any diagonal gate given an appropriately tailored input stabilizer state.

Theorem~\ref{thm:zero_magic_shallow} gives the absolute lower envelope of magic generation in the shallow-layer model. It shows that even genuinely higher-level diagonal gates can completely fail to generate magic. Consequently, the Clifford hierarchy level, taken in isolation, cannot serve as a guaranteed proxy for non-stabilizer resource generation.

\subsection{Upper bound: maximal magic requires graph-state geometry}

We now pose the converse question: which stabilizer inputs are best suited for converting a single diagonal layer into as much operational magic as possible? Since every operational measure in our framework is determined by the moments $F_\alpha$, this question reduces to minimizing $F_\alpha$ under the Pauli normalization constraint
\begin{equation}
    \sum_{(\mathbf{x},\mathbf{z})\in \mathbb{F}_2^{2n}} |a_{P(\mathbf{x},\mathbf{z})}|^2 = 2^n.
\end{equation}

\begin{theorem}[Flat minimum and graph-state condition]
\label{thm:graph_state_condition}
Let
\begin{equation}
    |\psi\rangle \in \mathrm{STAB}_n,
    \qquad
    U \in \mathcal D_n^{(k)} \setminus \mathcal D_n^{(k-1)},
\end{equation}
and let $r$ denote the number of independent pure-$Z$ generators of $|\psi\rangle$. If the output $U|\psi\rangle$ attains the minimum of $F_\alpha$ allowed by the shallow-layer ansatz, then the following conditions are necessary:

\begin{enumerate}
    \item The non-identity Pauli spectrum is as flat as the constraints allow. In other words, for all  $\mathbf{x} \neq \mathbf{0}$ satisfy 
    \begin{equation}
        |a_{P(\mathbf{x},\mathbf{z})}|^2
        =
        2^{-n}.
        \label{eq:shallow_pauli}
    \end{equation}

    \item The input stabilizer must satisfy
    \begin{equation}
        r = 0.
    \end{equation}
    Equivalently, the input has no pure-$Z$ stabilizers.
\end{enumerate}

Consequently, for every $\alpha \ge 2$, the magic generation is strictly bounded by
\begin{equation}
    F_\alpha(U|\psi\rangle)
    \ge
    1 + (2^n-1)2^{n(1-\alpha)},
    \label{eq:shallow_max_magic}
\end{equation}
with equality if and only if the accessible non-identity Pauli spectrum is exactly flat.
\end{theorem}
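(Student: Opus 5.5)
The plan is to reduce everything to the closed-form Pauli spectrum \eqref{eq:shallow_model_pauli_spectrum} and a convexity argument on the squared amplitudes. Write $p_P := |a_{P}|^2$ for the output state $U|\psi\rangle$. Purity gives the normalization $\sum_P p_P = 2^n$, and the identity always contributes $p_I = 1$, so the non-identity weights satisfy $\sum_{P\neq I} p_P = 2^n - 1$. The structure of \eqref{eq:shallow_model_pauli_spectrum} also tells us which non-identity weights can be nonzero. Since $U$ is diagonal, it commutes with pure-$Z$ strings. For $\mathbf{x}=\mathbf{0}$ we therefore get $a_{P(\mathbf 0,\mathbf z)}(U|\psi\rangle)=\langle\psi|Z^{\mathbf z}|\psi\rangle$, which equals $\pm1$ on the $2^r$ elements of the pure-$Z$ stabilizer subgroup and $0$ elsewhere. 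For $\mathbf{x}\neq\mathbf 0$, the support is restricted to $\mathbf{x}\in\mathbb{L}_x$, an $(n-r)$-dimensional space. For each such $\mathbf x$, the $\mathbf z$-sum of $p_{P(\mathbf x,\mathbf z)}$ is fixed by Parseval over the $\mathbf b$-sum in \eqref{eq:shallow_model_pauli_spectrum}: it is a Walsh--Hadamard transform in $\mathbf z$ of a function supported on the $2^{n-r}$ allowed strings, each carrying weight $2^r/2^n$.

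The second step is the lower bound \eqref{eq:shallow_max_magic}. The map $t\mapsto t^\alpha$ is convex for $\alpha\ge2$, so by Jensen (or power-mean), $\sum_{P\in S} p_P^\alpha \ge |S|^{1-\alpha}\bigl(\sum_{P\in S}p_P\bigr)^\alpha$ over any support set $S$, with equality if and only if the $p_P$ are constant on $S$. The largest conceivable non-identity support has size $|S|\le 4^n-1$. This bound is too weak for the statement, so we must instead use the per-$\mathbf x$ structure. Assume, as in the ansatz analysis, that the accessible non-identity support has $2^n-1$ effective classes with total weight $2^n-1$ and per-entry weight $2^{-n}$. The target is that flatness at $p_P=2^{-n}$ across the accessible set gives exactly $1+(2^n-1)2^{n(1-\alpha)}$. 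Concretely, I will show the accessible non-identity mass is distributed over $(2^n-1)2^n$ entries, namely all $\mathbf x\neq\mathbf 0$ times all $\mathbf z$, with total $2^n-1$. Jensen then gives $F_\alpha \ge 1 + (2^n-1)2^n\cdot(2^{-n})^{\alpha} = 1+(2^n-1)2^{n(1-\alpha)}$, with equality exactly when every $p_{P(\mathbf x,\mathbf z)}=2^{-n}$. This also yields item~1, the flatness condition \eqref{eq:shallow_pauli}.

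The third step is necessity of $r=0$. If $r\ge1$, the $\mathbf x=\mathbf 0$ sector contains $2^r-1\ge1$ non-identity pure-$Z$ stabilizers with $p_P=1$, which is far from $2^{-n}$. Moreover $\mathbf x$ ranges only over $\mathbb{L}_x$, so $2^n-2^{n-r}$ of the nonzero $\mathbf x$ carry zero weight. Either observation contradicts flatness, so the bound cannot be attained and $r=0$ is necessary. For $r=0$, the state is Clifford-equivalent to a graph state, $Z_{\mathrm{pure}}$ is empty, and the $\mathbf x=\mathbf 0$ sector contains only the identity. This justifies restricting the accessible set to $\mathbf x\neq\mathbf0$ in step two.

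The main obstacle is making precise why, even when $r=0$, the mass at each fixed $\mathbf x\ne\mathbf 0$ sums to exactly $1$ over $\mathbf z$, so that the accessible set is all of $\{\mathbf x\neq0\}\times\mathbb F_2^n$ with total $2^n-1$. I would first try Parseval on \eqref{eq:shallow_model_pauli_spectrum} with $r=0$. Here $a_{P(\mathbf x,\cdot)}$ is, up to a phase, the Walsh--Hadamard transform of $2^{-n}e^{2\pi i\sum_m\theta_m(\mathbf b,\mathbf x)/2^m}$, whose squared $\ell^2$ norm over $\mathbf z$ is $2^n\cdot 2^n\cdot 2^{-2n}=1$. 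Given that, Jensen applies sector by sector.
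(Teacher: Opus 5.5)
Your proposal is correct and is essentially the paper's argument. It uses Jensen's inequality on the squared amplitudes, invariance of the $\mathbf{x}=\mathbf{0}$ sector under the diagonal $U$, and $r=0$ forced by the pure-$Z$ constraints, so that the residual weight $2^n-1$ spreads over the $4^n-2^n$ entries with $\mathbf{x}\neq\mathbf{0}$. The paper applies Jensen to that set globally, so your per-$\mathbf{x}$ Parseval row sum, while correct, is not needed.

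The one step you should make explicit is excluding $r\ge1$. Your Jensen bound was derived only under $r=0$, so ``violates flatness'' does not by itself show that $F_\alpha$ exceeds the bound. The $\mathbf{x}=\mathbf{0}$ sector alone gives $F_\alpha\ge 2^r\ge 2>1+(2^n-1)2^{n(1-\alpha)}$ for $\alpha\ge2$, which settles it.
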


\begin{proof}[Proof sketch]
For fixed $\sum |a_P|^2 = 2^n$, the quantity $F_\alpha = \sum |a_P|^{2\alpha}$ is minimized when the Pauli weight is distributed as uniformly as possible. The exact shallow-layer spectrum shows that pure-$Z$ generators impose Kronecker-delta constraints on the allowed support, so any $r>0$ forces part of the Pauli distribution to remain concentrated rather than spread. Hence, maximal flattening requires $r=0$. In that case, the identity contribution remains fixed, while the remaining weight $2^n-1$ must be distributed over the $4^n-2^n$ non-identity coefficients, which gives the flat value $2^{-n}$ and the stated lower bound on $F_\alpha$. The full proof is presented in Appendix~\ref{app:shallow_bound}.
\end{proof}

Theorem~\ref{thm:graph_state_condition} provides an actionable design principle for shallow early FTQC circuits: if one wants a single diagonal layer to generate as much operational magic as possible, then the preceding Clifford layer should first prepare the input into a stabilizer state with $r=0$. By definition, a stabilizer state that lacks pure-$Z$ generators corresponds uniquely to a graph state, up to local Clifford operations~\cite{Hein_2004}. In other words, the relevant optimization variable is not solely the non-Clifford gate, but also the specific structure of the stabilizer state immediately before it.

\subsection{No-go theorem: no hierarchy ordering by operational magic}

We can now connect the two preceding results. The lower bound shows that zero magic is always attainable at a fixed hierarchy level on suitable stabilizer inputs. The upper-bound analysis, in turn, shows that approaching maximal magic requires a highly specific input geometry, namely $r=0$ or, equivalently, a graph-state orbit. Thus, even before considering deeper circuits, the shallow-layer model already reveals that magic generation is controlled at least as much by input stabilizer structure as by hierarchy level itself. Since all operational pure-state magic measures reduce to the same family $\{F_\alpha\}$, no state-independent ordering principle based solely on hierarchy level can hold. This conceptual misalignment is illustrated schematically in Fig.~\ref{Fig:6}.

\begin{corollary}[No strict ordering of diagonal hierarchy levels]
\label{cor:no_go_shallow}
There exists no operational pure-state magic measure within the Pauli-spectrum family that can strictly order the diagonal Clifford hierarchy by level. Equivalently, for any $k \ge 3$, one cannot have a universal separation of the form
\begin{equation}
    Q\bigl(U_k |\psi\rangle\bigr)
    >
    Q\bigl(U_{k-1} |\psi\rangle\bigr),
\end{equation}
or
\begin{equation}
    Q\bigl(U_k |\psi\rangle\bigr)
    <
    Q\bigl(U_{k-1} |\psi\rangle\bigr),
\end{equation}
holding for arbitrary stabilizer inputs $|\psi\rangle$ and all choices
\begin{equation}
    U_k \in \mathcal D_n^{(k)} \setminus \mathcal D_n^{(k-1)},
    \qquad
    U_{k-1} \in \mathcal D_n^{(k-1)} \setminus \mathcal D_n^{(k-2)}.
\end{equation}
\end{corollary}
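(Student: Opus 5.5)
The plan is to exhibit, for each $k\ge 3$, explicit counterexamples to both inequalities. Each is built from a stabilizer input and a pair of gates $U_k\in\mathcal D_n^{(k)}\setminus\mathcal D_n^{(k-1)}$, $U_{k-1}\in\mathcal D_n^{(k-1)}\setminus\mathcal D_n^{(k-2)}$. Both directions reduce to properties of the moments $F_\alpha$. By Theorem~\ref{thm:alpha_renyi}, every admissible $Q$ has the form $q\log d-\sum_\alpha k_\alpha\log F_\alpha$ with $k_\alpha>0$. Faithfulness then gives $Q=0$ exactly on stabilizer outputs and $Q>0$ on any non-stabilizer output. So it suffices to produce one configuration where the level-$k$ output is a stabilizer state while the level-$(k-1)$ output is not, and one configuration with the reverse. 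Since the universal statement quantifies over all $|\psi\rangle$ and all gate choices, a single witness in each direction kills the corresponding strict inequality.

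\emph{Ruling out $Q(U_k|\psi\rangle)>Q(U_{k-1}|\psi\rangle)$.} I will use Theorem~\ref{thm:zero_magic_shallow}. Take $n\ge k-1$ and an input with $r=k-2$ pure-$Z$ generators, say $|0\rangle^{\otimes (k-2)}\otimes|+\rangle^{\otimes(n-k+2)}$. Concretely, let $U_k=C^{k-2}T$ act on the first $k-1$ qubits; its phase $b_1\cdots b_{k-1}/8$ gives level $2+(k-1)-1=k$ by Eq.~\eqref{eq:hierarchy}. Because $b_1=\dots=b_{k-2}=0$ on the support, $U_k|\psi\rangle=|\psi\rangle$ is a stabilizer state, so $Q=0$. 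For $U_{k-1}$ I take a gate whose phase polynomial is not frozen by the fixed qubits, e.g.\ $C^{k-3}T$ acting on qubits $k-1,\dots,2k-3$ among the $|+\rangle$ qubits, or simply a level-$(k-1)$ single-qubit rotation $\mathrm{diag}(1,e^{2\pi i/2^{k-1}})$ on one $|+\rangle$ qubit. This needs $n$ large enough; I will take $n\ge k$. The output is then a product containing $\mathrm{diag}(1,e^{2\pi i/2^{k-1}})|+\rangle$, which is non-stabilizer for $k-1\ge 3$ (check: its $X,Y$ expectations are $\cos,\sin$ of $2\pi/2^{k-1}$, not in $\{0,\pm1\}$). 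Hence $Q(U_{k-1}|\psi\rangle)>0=Q(U_k|\psi\rangle)$.

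\emph{Ruling out $Q(U_k|\psi\rangle)<Q(U_{k-1}|\psi\rangle)$.} I will swap roles. Take an input with $r=k-3$ frozen qubits, $|0\rangle^{\otimes(k-3)}\otimes|+\rangle^{\otimes(n-k+3)}$. Let $U_{k-1}=C^{k-3}T$ on qubits $1,\dots,k-2$, which has level $k-1$ and acts trivially on this input, so $Q=0$. Let $U_k$ be the single-qubit rotation $\mathrm{diag}(1,e^{2\pi i/2^{k}})$ on an unfrozen qubit, which is level $k$ and yields a non-stabilizer output. Then $Q(U_k|\psi\rangle)>0=Q(U_{k-1}|\psi\rangle)$. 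The edge case $k=3$ needs separate handling, because $U_{k-1}$ is then Clifford and maps every stabilizer state to a stabilizer state. In that case $Q(U_2|\psi\rangle)=0$ always, and the reverse strict inequality \emph{fails} only through equality: the input $|0\rangle^{\otimes n}$ gives $Q(T|0\rangle)=0=Q(U_2|0\rangle)$. Equality still refutes the strict ``$>$'', so both strict separations fail for $k=3$ as well. I will present it this way.

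The main obstacle is bookkeeping rather than depth. First, I must verify that each chosen gate is \emph{genuinely} in level $k$ and not in $k-1$. I will do this by reading off Eq.~\eqref{eq:hierarchy} for a single monomial, noting that a single nonvanishing monomial with $c$ odd cannot be reduced to lower level. Second, I must confirm non-stabilizerness of the witness outputs. For that I will compute one Pauli expectation value lying outside $\{0,\pm1\}$, which violates the stabilizer-state spectrum, and then invoke faithfulness. The remaining care is choosing $n$ large enough that all the supports fit, which is harmless since the corollary only asks for existence.
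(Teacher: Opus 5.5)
Your plan (freeze qubits with pure-$Z$ stabilizers so that a higher-level diagonal gate acts trivially, then use faithfulness) is the same mechanism the paper relies on. However, your explicit multi-controlled-$T$ witnesses sit at the wrong hierarchy levels, and this is exactly the bookkeeping step you flagged as the main obstacle.

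By Eq.~\eqref{eq:hierarchy}, $C^{c}T$ has the single monomial $b_1\cdots b_{c+1}/2^3$. So $m=3$, $\mathrm{wt}(\mathbf a)=c+1$, and its level is $(3-1)+(c+1)=c+3$, consistent with $T$ at level $3$ and $CT$ at level $4$. Your count $2+(k-1)-1$ subtracts a spurious $1$. Under that count $T=C^{0}T$ would be Clifford, which contradicts your own $k=3$ discussion.

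Consequently $C^{k-2}T$ lies in $\mathcal D_n^{(k+1)}\setminus\mathcal D_n^{(k)}$ and $C^{k-3}T$ lies in $\mathcal D_n^{(k)}\setminus\mathcal D_n^{(k-1)}$. As written, none of the following is an admissible choice:
\begin{itemize}
\item your $U_k$ in the first direction;
\item your $U_{k-1}$ in the second direction;
\item the $C^{k-3}T$ option for $U_{k-1}$ in the first direction.
\end{itemize}

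The repair is to drop one control everywhere. Take $U_k=C^{k-3}T$ on the frozen qubits $1,\dots,k-2$ of $|0\rangle^{\otimes(k-2)}\otimes|+\rangle^{\otimes(n-k+2)}$. For $k\ge 4$, take $U_{k-1}=C^{k-4}T$ on the frozen qubits $1,\dots,k-3$. Both act trivially on the support, and a single frozen qubit used as one of the controls would already suffice. Your single-qubit rotations $\mathrm{diag}(1,e^{2\pi i/2^{j}})$ are correctly placed at level $j$ and correctly shown to give non-stabilizer outputs on $|+\rangle$ for $j\ge 3$. With this re-indexing the argument goes through.

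Once repaired, your route differs from the paper's only in doing more than is needed. The paper invokes Theorem~\ref{thm:zero_magic_shallow} at level $k$ (resp.\ $k-1$) and then uses only $Q\ge 0$. The relation $Q(U_k|\psi\rangle)=0\le Q(U_{k-1}|\psi\rangle)$ already contradicts the strict inequality. So no magic-generating witness at the other level is required, and neither is any appeal to Theorem~\ref{thm:alpha_renyi}. Your extra witnesses do buy something: for $k\ge 4$ the ordering is actually strictly reversed, not merely violated.

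Note also that the observation you reserve for $k=3$ works for every $k$. Every diagonal gate fixes $|0\rangle^{\otimes n}$ up to a global phase, so $Q(U_k|0\rangle^{\otimes n})=Q(U_{k-1}|0\rangle^{\otimes n})=0$ for all admissible $U_k,U_{k-1}$. This single input refutes both strict inequalities at once, with no case split and no level bookkeeping.
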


\begin{proof}[Proof]
The argument uses only the zero-magic bound in Theorem~\ref{thm:zero_magic_shallow} together with faithfulness of the Pauli-spectrum family. Suppose first that a universal strict ordering existed with
$Q\bigl(U_k|\psi\rangle\bigr) > Q\bigl(U_{k-1}|\psi\rangle\bigr)$
for arbitrary stabilizer input $|\psi\rangle$. Choosing $|\psi\rangle$ with $r\ge k-2$, Theorem~\ref{thm:zero_magic_shallow} guarantees a genuinely level-$k$ gate $U_k$ such that
$U_k|\psi\rangle \in \mathrm{STAB}_n$.
Hence $Q(U_k|\psi\rangle)=0$, which contradicts strict positivity over the adjacent lower level since every admissible $Q$ is nonnegative.

A reverse strict ordering is ruled out identically: choosing instead a stabilizer input with $r\ge k-3$, Theorem~\ref{thm:zero_magic_shallow} applied at level $k-1$ gives a genuinely level-$(k-1)$ gate $U_{k-1}$ with
$U_{k-1}|\psi\rangle \in \mathrm{STAB}_n,$
so again $Q(U_{k-1}|\psi\rangle)=0$, contradicting strict separation in the opposite direction. Therefore, no operational pure-state magic measure in the Pauli-spectrum family can induce a universal, state-independent strict ordering of adjacent diagonal hierarchy levels.
\end{proof}

\begin{figure}[t]
\centering
\includegraphics[width=\linewidth]{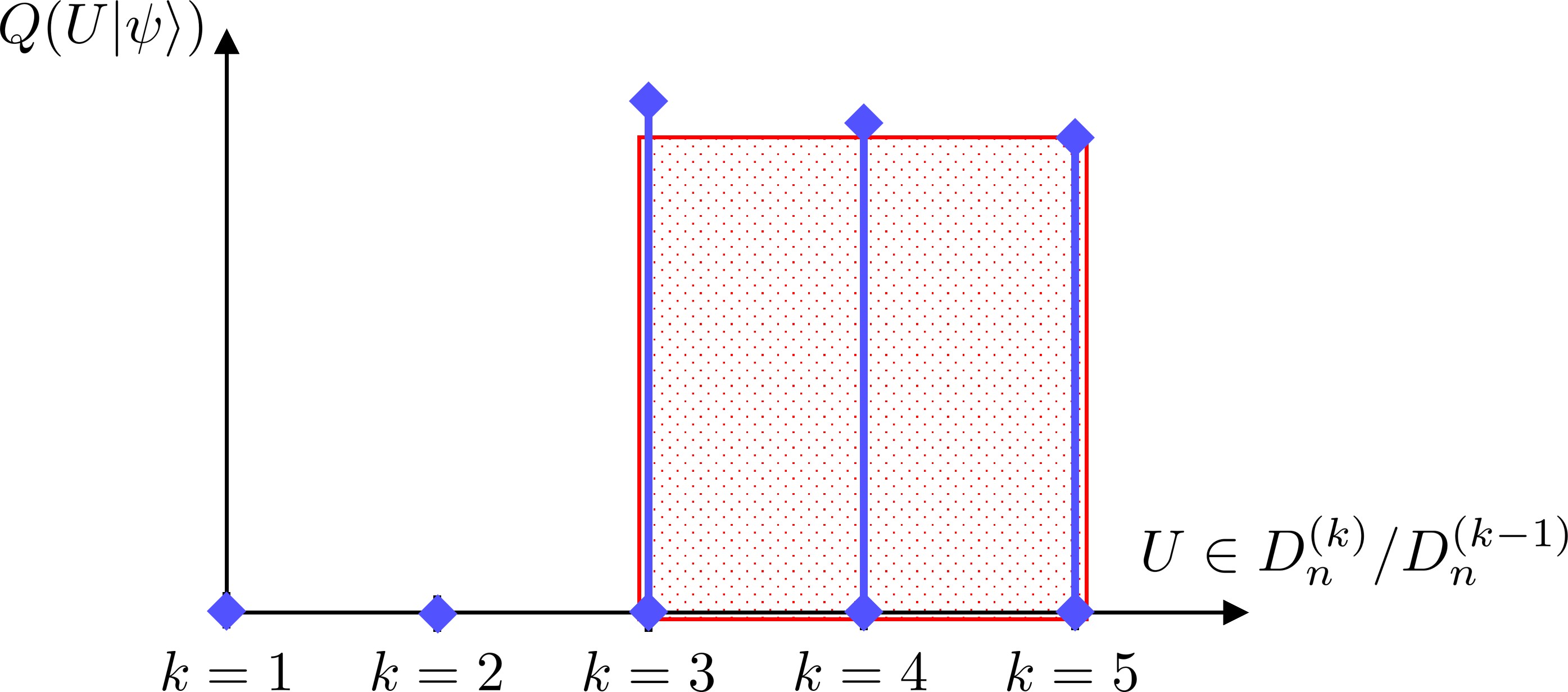}
  \caption{Schematic illustration of the no-go theorem. Owing to significant overlap between the upper and lower bounds, gates within the $D_n^{(k)}/D_n^{(k-1)}$ levels cannot be distinguished based on their quantum magic values. Since the diagonal hierarchy is a subgroup of the Clifford hierarchy, this operational indistinguishability extends to the Clifford hierarchy as well.}
  \label{Fig:6}
\end{figure}

This corollary serves as the central conclusion of the shallow-layer analysis. The diagonal Clifford hierarchy is an algebraic classification of gates, whereas quantum magic is a state-dependent geometric resource. In the shallow-layer regime, these two notions do not collapse into a single monotone ordering. Therefore, the relevant design lesson for early fault tolerance is not to choose the highest hierarchy level available, but rather to choose the gate and stabilizer geometry that allow it to maximally spread the Pauli spectrum.

\subsection{Implication for circuit design}

The shallow-layer model establishes a concrete, actionable guideline for early FTQC architecture: the configuration of the initial Clifford layer is itself a critical component of magic engineering. Rather than treating this initial stage as a trivial preparatory step, our bounds demonstrate that it fundamentally dictates the efficiency with which expensive non-Clifford operations are converted into operational magic.

From a pessimistic perspective, Theorem~\ref{thm:zero_magic_shallow}  highlights a severe architectural pitfall: it is physically possible to entirely squander an expensive diagonal non-Clifford layer. Despite being genuinely outside the Clifford group, a diagonal gate acting on a highly restricted input ($r \ge k-2$) can produce a stabilizer output, generating strictly zero operational magic. From a resource-tracking perspective, such a costly operation is rendered indistinguishable from a free Clifford gate. To circumvent this, gate selection $\{C_i,D_i\}$ must actively avoid choosing configurations containing large pure-$Z$ stabilizer sectors.

Conversely, Theorem~\ref{thm:graph_state_condition}  provides the optimal design target. Maximizing shallow-layer magic explicitly requires preconditioning the input into a graph state, up to local Clifford equivalence. This unique stabilizer geometry completely eliminates pure-$Z$ bottlenecks ($r=0$), enabling the Pauli spectrum to spread as uniformly as the algebraic constraints of the architecture permit. Operationally, this dictates that the optimal initial preparation step in a magic-generating circuit is a graph state.

Taken together, these conclusions show that the first layer in early FTQC cannot be treated merely as a preparatory step for later dynamics. It already sets the efficiency with which the first costly diagonal layer is converted into operational magic. As we will show in the next section for the $N$-layer model (ansatz~2), this initial choice continues to matter at later depths because subsequent layers act on the Pauli spectrum generated by earlier layers. The optimization problem is therefore state-dependent already at depth one, and hierarchy level alone does not provide a reliable design principle.

\section{Main Result III: \texorpdfstring{$N$}{N}-layer SQR Generalization}
\label{sec:deep_layer}
While the shallow-layer analysis in Sec.~\ref{sec:shallow_layer} establishes how the initial diagonal layer should be configured to maximize magic generation, we now transition to the more physically realistic early FTQC regime. In this setting, Clifford and non-Clifford gates are applied alternately across $N$ layers. The relevant question is therefore no longer solely how much magic a single layer can generate, but rather how this resource dynamically evolves throughout a restricted multi-layer architecture. To address this, we study the $N$-layer model (ansatz~2), which is directly motivated by early FTQC hardware constraints and the STAR architecture depicted in Fig.~\ref{Fig:7}. Crucially, this ansatz restricts the non-Clifford layers strictly to single-qubit $Z$-rotations (SQR). By excluding multi-qubit diagonal interactions, this model represents a highly constrained subset of the general diagonal framework analyzed in Sec.~\ref{sec:shallow_layer}.

\begin{figure}[b]
        \includegraphics[width=\linewidth]{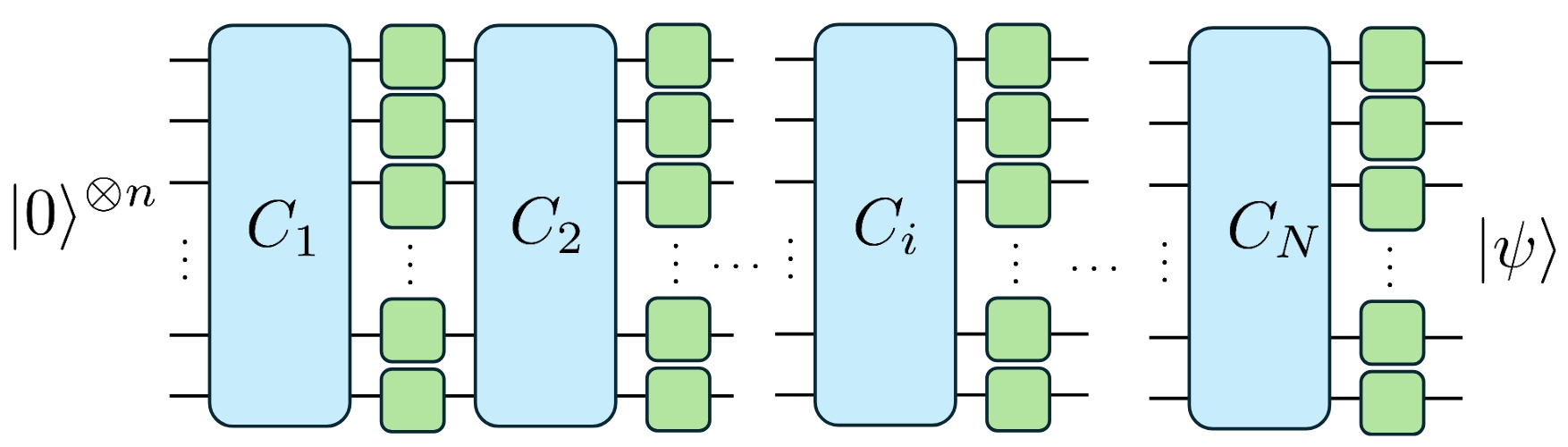}
    \caption{Architectural model for early fault-tolerant quantum computing (FTQC). The $N$-layer model (ansatz~2) alternates Clifford layers $C_i$ with native single-qubit diagonal $Z$-rotation layers. Blue boxes denote Clifford gates and green boxes denote diagonal hierarchy gates. In contrast to the shallow unrestricted model, the non-Clifford layer here is restricted to $n$ parallel single-qubit diagonal operations.}
    \label{Fig:7}
\end{figure}

For a single SQR layer acting on a stabilizer input, the exact Pauli spectrum simplifies from the general diagonal expression in Sec.~\ref{sec:shallow_layer} and SQR gates in Eq.~\eqref{eq:SQR} to
\begin{equation}
a_{P(\mathbf{x},\mathbf{z})}
=
\frac{2^r}{2^n}
C(\mathbf{x})
\sum_{\substack{\mathbf{b}\in\mathbb{Z}_2^n\\ \mathbf{b}\cdot Z=\mathbf{h}'\oplus \mathbf{K}}}
\exp\!\left(2\pi i\mathbf{V}(\mathbf{x},\mathbf{z})\cdot \mathbf{b}\right),
\label{eq:shallow_STAR_spectrum}
\end{equation}
where $C(x)$ is an overall phase term canceled later
and the effective phase vector is
\begin{equation}
    \mathbf{V}(\mathbf{x},\mathbf{z})
    =
    2(\mathbf{w}\circ \mathbf{x})
    +
    \frac{\mathbf{z}\oplus \mathbf{z}_{\mathrm{ref}}}{2}.
    \label{eq:star_phase_vector}
\end{equation}
with $\circ$ denoting the element-wise (Hadamard) product. The significance of Eq.~\eqref{eq:shallow_STAR_spectrum} is not merely that it gives a one-layer closed form, but that it explicitly delineates how STAR differs from the unrestricted shallow model: each analog layer can only induce a highly structured redistribution of Pauli weight, controlled by the rotation vector $\mathbf{w}$.

This localized structure makes the multilayer problem analytically tractable. A Clifford layer acts on Pauli operators by symplectic permutation and phase reweighting~\cite{gottesman_knill,Aaronson_2004}, while the subsequent analog $Z$-rotation layer mixes only a restricted subset of Pauli expectation values. Combining these two steps yields an exact iterative update rule for the Pauli spectrum. If $a_{P(x,z)}$ denotes the Pauli expectation values before the new layer, then after one additional Clifford-plus-rotation block, they become
\begin{align}
a'_{P(\mathbf{x},\mathbf{z})}
=&
\sum_{\mathbf{u}\le \mathbf{x}}
\Gamma_{\mathbf{x}}(\mathbf{u};\mathbf w)
(-1)^{|\mathbf{u}|+\mathbf{u}\cdot \mathbf{z}+h(F^{-1}(\mathbf{x},\mathbf{z}\oplus \mathbf{u}))} \notag \\
&\times a_{P(F^{-1}(\mathbf{x},\mathbf{z}\oplus \mathbf{u}))},
\label{eq:STAR_iterative_update}
\end{align}
where the Clifford layer acts via the symplectic transformation $F\in\mathrm{Sp}(2n, \mathbb{F}_2)$ and $h(\cdot)$ accounts for the phase factors induced by Clifford conjugation. The coefficients $\Gamma_x(u)$ are determined by the analog rotation angles $\mathbf{w}$:
\begin{equation}
     \Gamma_{\mathbf x}(\mathbf u;\mathbf w) = \prod_{j: x_j=1} \left( \cos(2\pi w_j)^{1-u_j} ( \sin(2\pi w_j))^{u_j} \right)
\end{equation}
Here, the sum runs over all binary vectors $\mathbf{u} \in \mathbb{Z}_2^n$ that are covered by the support of $\mathbf{x}$, denoted as $\mathbf{u} \le \mathbf{x}$, meaning $u_j = 0$ whenever $x_j = 0$.

This iterative structure in Eq.~\eqref{eq:STAR_iterative_update} forms the central analytical result for the ansatz~2. It reveals that each new layer acts as a restricted transfer rule on the existing Pauli expectation value: the Clifford part permutes and rephases the coefficients, while the analog rotation layer performs only a constrained local mixing. In particular, this demonstrates that magic generation is not an independent, additive process. Instead, the non-stabilizer resource yielded by any new layer is intrinsically state-dependent and strictly dictated by the specific Pauli distribution prepared by all preceding layers. By exposing this structural dependence, our formulation naturally motivates the next critical inquiry: Does there exist any universal, state-independent rule for selecting operations that can guarantee monotonic magic enhancement? As we establish in the following subsection, the answer is fundamentally negative.

\subsection{No-go theorem for state-independent layer choice}

Equation~\eqref{eq:STAR_iterative_update} shows that, in the $N$-layer model (ansatz~2), the Pauli spectrum after each additional layer is obtained from the previous one by a linear transformation. If we collect the Pauli expectation values into a real vector
\begin{equation}
    \mathbf{a}
    =
    \{a_{P(\mathbf x,\mathbf z)}\}
    \in \mathbb{R}^{4^n},
\end{equation}
then the update induced by one Clifford-plus-rotation block can be written abstractly as $\mathbf{a}\longmapsto \mathbf{a}'=M\mathbf{a}$.

Because the Pauli expectation values satisfy the normalization
\begin{equation}
    \|\mathbf{a}\|_2^2 = \sum_{(\mathbf x,\mathbf z) \in \mathbb{F}_2^{2n}}|a_{P(\mathbf x,\mathbf z)}|^2 = 2^n,
\end{equation}
the transfer matrix $M$ preserves the Euclidean norm and therefore acts orthogonally on the Pauli vector. In this form, the SQR layer update can be viewed as an orthogonal redistribution of Pauli weight.

This reformulation allows us to ask a natural architecture-level question: does there exist a state-independent rule for choosing the next layer, namely the next Clifford mixing and analog rotation parameters, such that magic always improves? Equivalently, can one choose a transfer map that universally flattens the Pauli spectrum, regardless of the current state? The following theorem shows that the answer is no.

\begin{theorem}[No-go for universal Pauli-spectrum flattening under orthogonal transfer]
\label{thm:no_go_universal_flattening}
Let $d=4^n$, and let $\mathbf{a}\in\mathbb{R}^d$ denote the Pauli-coefficient vector of a pure $n$-qubit state, normalized so that $\|\mathbf{a}\|_2^2 = 2^n$. Let $M\in O(d)$ be a real orthogonal matrix acting linearly on $\mathbf{a}$, $\mathbf{a}' = M\mathbf{a}$.
For any $\alpha>1$, define
\begin{equation}
    F_\alpha(\mathbf{a})
    :=
    \sum_{j=1}^d |a_j|^{2\alpha}
    =
    \|\mathbf{a}\|_{2\alpha}^{2\alpha}.
\end{equation}
Then there does not exist an orthogonal matrix $M$ such that
\begin{equation}
    F_\alpha(M\mathbf{a}) < F_\alpha(\mathbf{a})
\end{equation}
for all pure-state Pauli vectors $\mathbf{a}$ with $\|\mathbf{a}\|_2^2=2^n$, with strict inequality on a set of nonzero measure. Equivalently, an orthogonal transformation cannot universally flatten the Pauli spectrum: if it decreases $F_\alpha$ for some states, it must increase $F_\alpha$ for others.
\end{theorem}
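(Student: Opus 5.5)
The plan is to argue by contradiction using the fact that an orthogonal map is invertible and its inverse is again orthogonal. Suppose some $M\in O(d)$ satisfied $F_\alpha(M\mathbf{a})<F_\alpha(\mathbf{a})$ for every admissible pure-state Pauli vector $\mathbf{a}$. The key observation is that the quantity $F_\alpha$ cannot decrease forever along the orbit $\mathbf{a},M\mathbf{a},M^2\mathbf{a},\dots$. This orbit stays on the sphere $\|\mathbf{a}\|_2^2=2^n$, and the orbit of any point under an element of the compact group $O(d)$ is recurrent.

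First, I will record the closure property needed for the iteration. In the setting of Eq.~\eqref{eq:STAR_iterative_update}, $M$ is induced by a Clifford layer followed by an SQR layer. It therefore maps pure-state Pauli vectors to pure-state Pauli vectors, and the hypothesis can be applied to $M\mathbf{a}$ again. Telescoping then gives the strictly decreasing sequence $F_\alpha(M^t\mathbf{a})$.

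Second, I will invoke recurrence. Since $\overline{\{M^t\}}$ is a compact abelian subgroup of $O(d)$, there is a subsequence $t_j\to\infty$ with $M^{t_j}\to I$. By continuity of $F_\alpha$, this yields $F_\alpha(M^{t_j}\mathbf{a})\to F_\alpha(\mathbf{a})$. On the other hand, $F_\alpha(M^{t_j}\mathbf{a})\le F_\alpha(M\mathbf{a})<F_\alpha(\mathbf{a})$ for all $t_j\ge 1$. These two facts contradict each other.

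For the "equivalently" clause, I will use a simpler measure-theoretic version that also covers arbitrary orthogonal $M$. If the set of admissible states is invariant under $M$ and carries an $M$-invariant probability measure $\mu$ (for example, the image of Haar measure on pure states, which is preserved when $M$ comes from a unitary circuit), then
\[
\int F_\alpha(M\mathbf{a})\,d\mu=\int F_\alpha(\mathbf{a})\,d\mu .
\]
Hence $\int \bigl(F_\alpha(M\mathbf{a})-F_\alpha(\mathbf{a})\bigr)d\mu=0$. Consequently, a strict decrease on a set of positive measure forces a strict increase on another set of positive measure.

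The main obstacle is justifying this invariance or closure: a generic $M\in O(d)$ need not map pure-state Pauli vectors to pure-state Pauli vectors. If that fails, I would fall back on the full sphere of radius $2^{n/2}$ with its rotation-invariant measure. The averaging identity holds there exactly, and it gives the "must increase for others" conclusion on the whole sphere. The step relating this back to pure states is where the most care is needed.
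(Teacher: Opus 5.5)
Your proposal is correct, but its main argument differs from the paper's; the paper's proof is essentially your fallback. The paper takes the whole sphere $S=\{\mathbf a:\|\mathbf a\|_2^2=2^n\}$ with its rotation-invariant surface measure $\mu$. Since $M$ preserves $\mu$, $\int_S (F_\alpha-F_\alpha\circ M)\,d\mu=0$, which contradicts an integrand that is nonnegative and positive on a set of positive measure.

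\textbf{What the paper's argument buys.} It covers every $M\in O(d)$, but only as a statement about all of $S$. Pure-state Pauli vectors lie in the slice $a_I=1$ and form a $\mu$-null set, so the argument never engages the hypothesis as stated. Moreover, the main text's ``a fortiori'' passage from the ambient sphere to the physical subset runs the wrong way: ruling out a decrease on all of $S$ does not rule out a decrease on a subset of $S$.

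\textbf{What your argument buys.} Your recurrence argument uses that a unitary-induced transfer matrix (Clifford plus SQR, or any unitary block) maps the compact set of pure-state Pauli vectors onto itself. So the orbit $M^t\mathbf a$ stays admissible, and $M^{t_j}\to I$ contradicts the monotone decrease. This needs no measure at all. It only requires $F_\alpha(M\mathbf b)\le F_\alpha(\mathbf b)$ along the orbit with strict inequality at a single point. It also has a direct operational reading: repeating a fixed block that lowers $F_\alpha$ on some input must eventually raise it along that input's orbit. Your Haar-measure variant is the paper's averaging idea, run with the correct invariant measure on pure states.

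\textbf{The trade-off.} What your argument gives up is arbitrary $M\in O(d)$, which the statement literally quantifies over. That restriction is the right one, and the step you flag is a genuine gap in the paper rather than a technicality. For general $M$, the pure-state claim cannot be reached by averaging. By Fubini, for Haar-random $M\in O(4^n)$ one gets $\mathbb{E}_M\mathbb{E}_\psi F_2(M\mathbf a_\psi)=3d^2/(d^2+2)$. This lies below $\mathbb{E}_\psi F_2(\mathbf a_\psi)=4d/(d+3)$ once $d=2^n>8$, so some orthogonal $M$ lowers the Haar average of $F_2$ over pure states.
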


\begin{proof}[Proof sketch]
The claim follows from the geometry of $\ell_p$ norms on a fixed $\ell_2$ sphere. Since $M$ is orthogonal, it preserves $\|\mathbf{a}\|_2$ and therefore only redistributes weight among the Pauli coefficients. For $\alpha>1$, the functional $F_\alpha(\mathbf{a})=\|\mathbf{a}\|_{2\alpha}^{2\alpha}$ is minimized by flatter distributions and increased by more concentrated ones. If a single orthogonal map $M$ were able to decrease $F_\alpha$ for every Pauli vector, then $M^{-1}$ would also be orthogonal and would have to reverse this flattening on the image set, producing a contradiction. Equivalently, orthogonal transformations can reshuffle Pauli weight, but no fixed orthogonal reshuffling can move all states monotonically toward flatter spectra. The full proof is presented in Appendix~\ref{app:no_go_2}.
\end{proof}

Theorem~\ref{thm:no_go_universal_flattening} imposes a direct physical constraint on multilayer SQR design. It proves that there is no state-independent prescription for choosing the next Clifford layer or the next set of analog rotation angles that guarantees monotonic magic enhancement for every input spectrum. In particular, no universal rule of the form \textbf{always apply specific Clifford mixing} or \textbf{always choose certain rotation angles} can succeed throughout the evolution. Even at the level of the ambient Pauli-vector space, no fixed orthogonal transfer can decrease $F_{\alpha}$ for all inputs; hence a fortiori no state-independent universal rule exists on the physical subset. The effectiveness of the next layer necessarily depends on the current Pauli spectrum produced by the previous layers.

Furthermore, this no-go result also explains why the first-layer preparation discussed in Sec.~\ref{sec:shallow_layer} continues to matter in the multilayer setting. Since each new SQR block acts only by redistributing and mixing the existing Pauli coefficients, later layers inherit the spectral structure created earlier rather than replacing it. Therefore, optimizing magic generation in early FTQC must be done adaptively and layer by layer. This observation motivates the next subsection, where we formulate an explicit state-dependent optimization strategy for choosing the analog angles and Clifford mixing at each step.

\subsection{Spectrum-aware optimization of the next SQR layer}

The preceding no-go theorem eliminates the possibility of a universal, state-independent rule for selecting the next SQR block to guarantee monotonic magic enhancement. Consequently, the relevant question shifts: whether such a fixed rule exists, but rather how the $(N+1)$th layer should be chosen once the current Pauli spectrum is known. Equation~\eqref{eq:STAR_iterative_update} provides the answer: the next SQR block must be selected adaptively from the current Pauli-coefficient vector.

A crucial structural point is that the Clifford and analog parts of an SQR layer play fundamentally different roles. A Clifford layer does not create new amplitudes; it acts only by a signed symplectic permutation of the Pauli coefficients. By contrast, the analog $Z$-rotation layer performs the actual redistribution of spectral weight. In this sense, the optimization of the next SQR block is naturally hybrid: the Clifford layer rearranges the spectrum, while the analog layer carries out the local flattening.

More precisely, Eq.~\eqref{eq:STAR_iterative_update} shows that the analog SQR layer acts independently within each fixed-$\mathbf x$ sector. For a given $(\mathbf x,\mathbf z)$, the updated coefficient receives contributions only from terms labeled by $\mathbf u \le \mathbf x$, so that coefficients with the same $X$-support $\mathbf x$ but different $Z$ components are mixed together. Equivalently, the analog step may be viewed as a spectrum-spreading process over the family of Pauli strings that share a common $\mathbf x$. The size of the accessible mixing space is therefore determined by the number of allowed subsets $\mathbf u \le \textbf x$, namely $2^{|\mathbf x|}$. Thus, the Hamming weight $|\mathbf x|$ directly quantifies the mixing capacity of the sector, or equivalently, the Pauli support rank of the corresponding block. Sectors with larger support admit broader redistribution of spectral weight, whereas the $x=0^n$ sector remains completely unmixed.

This immediately yields a spectrum-aware design principle for the Clifford part of the next SQR block. Since the subsequent analog evolution has greater spreading power in sectors with larger Pauli support rank, the Clifford layer should be chosen to relocate those coefficients that deviate most strongly from the target flattened profile into sectors with larger $|x|$. Conversely, coefficients already close to the target distribution should be left in sectors with smaller support, where they are disturbed less. In this sense, the Clifford layer acts as a preconditioner for the restricted analog mixing.

Once this preconditioning is fixed, the analog part of the next SQR block is determined by a continuous optimization over the single-qubit rotation angles $\mathbf w$. For a given preconditioned spectrum, the optimal analog layer is obtained by minimizing the resulting $F_\alpha$:
\begin{equation}
\min_{\mathbf{w}\in[0,1)^n}
\hspace{-1ex}f(\mathbf{w})
:=
\hspace{-2.5ex}\sum_{(\mathbf{x},\mathbf{z})\in\mathbb{F}_2^{2n}}
\hspace{-0.5ex}\bigg|\hspace{-0.5ex}
\sum_{\mathbf{u}\le \mathbf{x}}
\Gamma_{\mathbf{x}}(\mathbf{u};\mathbf{w})
(-1)^{\mathbf{u}\cdot \mathbf{z}} a_{P(\mathbf{x},\mathbf{z}\oplus \mathbf{u})}
\bigg|^{2\alpha}
\label{eq:analog_layer_optimization}
\end{equation}
with $w_j\in[0,1)$ for $j=1,\dots,n$.
Here $f(\mathbf w)$ is precisely the post-layer $F_\alpha$ functional expressed through the SQR transfer coefficients $\Gamma_{\mathbf x}(\mathbf u;\mathbf w)$. Choosing the analog layer is therefore equivalent to choosing the local mixing amplitudes that most effectively flatten the current Pauli spectrum. 

A highly advantageous feature of Eq.~\eqref{eq:analog_layer_optimization} is that the objective function decouples across the fixed-$\mathbf{x}$ sectors. Although the global optimization landscape is inherently non-convex, it remains entirely analytically differentiable with respect to the continuous angles $\mathbf{w}$. This makes the analog optimization directly amenable to standard gradient-based classical solvers, such as L-BFGS or Adam~\cite{f186b552-bbfe-3ddc-b531-329e36ac4b69,kingma2017adammethodstochasticoptimization}, supplemented by multiple random restarts to avoid local minima.

When the Clifford preconditioning and analog mixing are combined, the full update takes the form
\begin{align}
a'_{P(x,z)}
=
&\sum_{\mathbf u\le\mathbf x}
\Gamma_{\mathbf x}(\mathbf u;\mathbf w)\,
(-1)^{|\mathbf u|+\mathbf u\cdot \mathbf z+h(F^{-1}(x,z\oplus u))}
\notag \\ 
&\times a_{F^{-1}P(\mathbf x,\mathbf z\oplus \mathbf u)},
\label{eq:full_star_update_with_clifford}
\end{align}
where $h(\cdot)$ accounts for the phase factors induced by Clifford conjugation. Eq.~\eqref{eq:full_star_update_with_clifford} makes the logic of the optimization transparent: the Clifford layer first redistributes Pauli weight among sectors of different support rank, and the analog layer then performs the strongest local spreading allowed within each sector.

Ultimately, this formulation turns magic optimization in the $N$-layer model (ansatz~2) into an explicit state-dependent design strategy for early FTQC circuits. Rather than relying on any universal gate-selection rule, one should choose each new SQR block adaptively from the current Pauli spectrum: first use the Clifford layer to precondition the spectrum according to sector-dependent mixing capacity, and then optimize the analog angles to achieve the strongest local flattening permitted by the transfer rule in Eq.~\eqref{eq:STAR_iterative_update}.

At the same time, the structure of Eq.~\eqref{eq:STAR_iterative_update} already reveals an intrinsic limitation of the ansatz~2. Since the analog layer mixes only within fixed-$x$ sectors, its spreading power is fundamentally constrained by its ability to enlarge the nonzero Pauli spectrum. The next subsection shows that this restriction is not merely technical but leads to a genuine architectural gap between SQR-gate choice and the unrestricted shallow diagonal mechanism of Sec.~\ref{sec:shallow_layer}.

\subsection{Architectural limitation and improvement}

We now return to the architectural question underlying this section: even with optimal layer-by-layer tuning, can the ansatz~2 generate magic as efficiently as the unrestricted shallow diagonal model of Sec.~\ref{sec:shallow_layer}? The answer is no. The reason is structural: the iterative SQR update in Eq.~\eqref{eq:STAR_iterative_update} is built entirely from restricted local mixing of the existing Pauli spectrum. As a result, the magic-generation capacity of the full $N$-layer architecture is governed by how efficiently each layer can enlarge the set of nonzero Pauli expectation values. The first non-Clifford layer is especially important because it determines the initial spectral pattern on which all subsequent restricted mixing acts.

To quantify this limitation, we introduce the Pauli support size
\begin{equation}
    \mathcal{N}{\mathrm{nz}}
    :=
    \sum_{\mathbf{x}\in\mathbb{F}_2^n}
    \left|
    \left\{
    \mathbf{z}\in\mathbb{F}_2^n \;\middle|\;
    a_{P(\mathbf{x},\mathbf{z})}\neq 0
    \right\}
    \right|,
    \label{eq:pauli_support}
\end{equation}
which counts the number of nonzero Pauli expectation values. Mathematically, this quantity evaluates the zeroth moment of the Pauli spectrum, 
$\lim_{\alpha \to 0} F_\alpha = \mathcal{N}_{\mathrm{nz}}$. This quantity measures the geometric expressibility of the architecture at the level of Pauli spreading.

For ansatz~2, the shallow-layer Pauli spectrum factorizes because the native phase polynomial is a linear phase polynomial. This locality severely restricts the number of nonzero Pauli spectra. Maximizing over the rotation parameters yields
\begin{equation}
    \mathcal{N}_{\mathrm{nz}}^{\max}(\mathrm{SQR}) = 3^n.
\end{equation}
By contrast, the unrestricted shallow diagonal model can, in principle, populate the entire non-identity Pauli sector,
\begin{equation}
    \mathcal{N}_{\mathrm{nz}}^{\max}(\mathrm{diag}) = 4^n - 2^n.
\end{equation}
Hence, for $n>1$,
\begin{equation}
\mathcal{N}_{\mathrm{nz}}^{\max}(\mathrm{SQR}) = 3^n
\;<\;
4^n - 2^n
=
\mathcal{N}_{\mathrm{nz}}^{\max}(\mathrm{diag}).
\end{equation}
So SQR falls short of the unrestricted model by an exponential factor. 

This comparison identifies a genuine architectural bottleneck. In Sec.~\ref{sec:shallow_layer}, we saw that maximal magic generation requires both broad Pauli support and sufficiently flat amplitudes. SQR already fails at the first requirement: it cannot generate the support maximality available to general diagonal-hierarchy gates. Since the $N$-layer model (ansatz~2) evolution is obtained by iterating the same restricted transfer rule, this shallow-layer support limitation remains and even amplifies the relevant bottleneck for the multilayer architecture as well.

The implication for hardware design is immediate. The limitation of using SQR (ansatz~2) is not simply that it uses fewer non-Clifford degrees of freedom, but that its native diagonal layer is structurally too linear. To overcome this bottleneck, one must enlarge the phase polynomial beyond the linear form. A minimal way to do so is to replace isolated single-qubit $Z$ rotations by nonlinear diagonal interactions, such as multi-qubit $Z$ rotations ~\cite{Litinski_2019},
or more general multi-qubit diagonal couplings. These interactions introduce the nonlinear phase structure needed to increase Pauli spreading and thereby improve the magic-generation capacity of the architecture.

Therefore, the ansatz~2 analysis leads to a concrete design principle for early FTQC: maximizing logical non-stabilizer resource generation requires going beyond the linear single-qubit diagonal layer. While spectrum-aware optimization is necessary to guarantee increasing magic monotonicity, overcoming the fundamental kinematic bottleneck for early FTQC architectures that use SQR as their non-Clifford gate. It ultimately demands hardware support for nonlinear diagonal interactions.

\section{Conclusion and outlook}
\label{sec:conclusion}
In this work, we have established a comprehensive analytical framework for the early FTQC design problem: maximizing the generation of quantum magic in logical circuits composed of alternating Clifford layers and diagonal non-Clifford resources. By utilizing Pauli-spectrum functionals to strictly quantify both quantum magic and classical simulability, we demonstrate that resource optimization cannot be reduced to simple algebraic heuristics. Neither the gate class nor the hierarchy level is sufficient to predict magic growth without explicit reference to the dynamically evolving Pauli spectrum.

In the shallow-layer setting, we identified when a diagonal hierarchy gate fails to generate magic on a stabilizer input and characterized the structure of stabilizer states that achieve the flattest attainable spectra. This analysis yields our first no-go theorem: there exists no state-independent gate-selection rule—including any metric based solely on the diagonal-hierarchy level—that guarantees the monotonic improvement of Pauli-spectrum magic measures. Extending this to the $N$-layer model (ansatz~2), we derived an explicit iterative update of Pauli coefficients, revealing that each new block acts as a restricted redistribution of the existing spectrum. This establishes our second no-go theorem: universal Pauli-spectrum flattening under fixed orthogonal transfer is mathematically impossible. Any layer choice that decreases $F_\alpha$ for a certain subset of inputs must inevitably increase it for others. Together, these no-go theorems fundamentally recast early FTQC design from a static instruction sequence into an adaptive control problem: the optimal subsequent layer must be synthesized in a spectrum-aware manner, guided directly by the current Pauli distribution.

Crucially, these structural insights also provide a constructive, state-dependent design rule. The first non-Clifford layer is decisive because it sets the accessible support on which later layers can only reshuffle and refine. At depth $(N\!+\!1)-$th, the Clifford layer should be used as a preconditioner that permutes the largest-magnitude Pauli coefficients into large-$|\textbf x|$ sectors, where the subsequent mixing is strongest, while leaving near-target coefficients in small-$|\textbf x|$ sectors. Conditioned on this permutation, the analog single-qubit $Z$ angles $\mathbf{w}$ can be selected by continuous optimization of the differentiable objective induced by the transfer rule of ansatz~2. This hybrid discrete--continuous strategy provides a practical, spectrum-aware workflow for designing partially fault-tolerant circuits.

Finally, our analysis exposes a severe architecture limitation within the class of architectures represented by ansatz~2: the linear phase structure of parallel single-qubit $Z$ rotations imposes a maximal Pauli-support ceiling $\mathcal{N}_{\mathrm{nz}}^{\max}(\mathrm{SQR}) = 3^n$, which is exponentially smaller than the $(4^n-2^n)$ non-identity support available to general diagonal hierarchy gates. Therefore, even perfect spectrum-aware optimization cannot close the expressibility gap unless the native diagonal layer is upgraded beyond a linear-phase polynomial. A minimal upgrade is to include nonlinear diagonal interactions such as multi-qubit $Z$ rotations, which break the linear bottleneck and provide a clear hardware-level route toward stronger magic generation in early FTQC architectures. Our results thus establish magic capacity—complementing physical error suppression—as the foundational benchmark for evaluating an architecture’s logical reach and its potential for quantum advantage.

Looking forward, our results offer concrete prescriptions for both experimental implementations and architectural benchmarking. We recommend evaluating each logical layer by: (i) estimating a small set of Pauli moments (or a classical-shadow summary) sufficient to track $F_\alpha$ and the nonzero Pauli support size, (ii) applying a greedy or randomized Clifford preconditioner targeting large-$|x|$ sectors, and (iii) optimizing $\mathbf{w}$ via gradient-based search with restarts under hardware constraints. Comparing single-qubit rotations against multi-qubit $Z$ rotations at a fixed depth provides a direct experimental test of the predicted support bottleneck and the benefits of introducing nonlinear diagonal phases.

\section*{Acknowledgments}
We thank Bartosz Reguła, and Ryuji Takagi for useful discussions. HHL acknowledges financial support from the Japan Science and Technology Agency (JST) SPRING program (Grant No. JPMJSP2108). YS is supported by MEXT Q-LEAP Grant No.~JPMXS0120319794, MEXT Feasibility Study on the future HPCI, JST Moonshot R\&D Grant No.~JPMJMS2061, JST CREST Grant No.~JPMJCR23I4, JPMJCR24I4, and JPMJCR25I4. EJK acknowledges financial support from the National Science and Technology Council (NSTC) of Taiwan under Grant No.~NSTC~114-2112-M-A49-036-MY3. This research was partly funded by the Ministry of Education, Culture, Sports, Science and Technology (MEXT) Quantum Leap Flagship Program (Q-LEAP) (Grant No.~JPMXS0118068682) and the Japan Science and Technology Agency (JST) as part of Adopting Sustainable Partnerships for Innovation Research Ecosystem (ASPIRE) (Grant No.~JPMJAP2513).

\section*{Author Contributions}

HHL conceived the project, identified the primary research direction, proposed the central concepts underlying Main Results I--III, and derived the analytical solutions and proofs for Main Results II and III. YS formalized the optimization problem and established the theoretical framework utilized throughout the manuscript. EJK completed the rigorous proof for Main Result I and developed the optimization strategy for Main Result III. YN supervised the project. All authors discussed the results and contributed to the writing and revision of the manuscript.

\bibliography{references}

\appendix
\widetext

\section*{SUPPLEMENTAL MATERIAL}

\section{An \texorpdfstring{$L^p$}{Lp} hierarchy of stabilizer-expansion quantifiers}
\label{app:Lp_stabilizer_family}

Several standard magic quantifiers, including the robustness of magic (RoM)~\cite{hamaguchi2024handbook}, stabilizer extent~\cite{heimendahl2021stabilizer}, and stabilizer rank~\cite{peleg2022lower}, arise naturally from decompositions of a pure state into stabilizer states,
\begin{equation}
|\psi\rangle=\sum_i c_i |s_i\rangle,
\qquad
|s_i\rangle\in\mathrm{STAB}_n,
\end{equation}
where $\mathrm{STAB}_n$ denotes the set of $n$-qubit stabilizer states and $c_i\in\mathbb C$. These quantities may be organized into a unified $L^p$-type family for pure-state magic~\cite{Regula_2017}. For $p>0$, define
\begin{equation}
\gamma_p(|\psi\rangle)
:=
\inf\Bigl\{
\|c\|_p:
|\psi\rangle=\sum_i c_i |s_i\rangle
\Bigr\},
\end{equation}
where
\begin{equation}
\|c\|_p
=
\left(\sum_i |c_i|^p\right)^{1/p}.
\end{equation}

Important special cases are obtained in familiar limits. For $p=1$, $\gamma_1$ is the stabilizer gauge, whose square coincides with the stabilizer extent,
\begin{equation}
\xi(|\psi\rangle)=\gamma_1(|\psi\rangle)^2,
\end{equation}
and, for pure states, also coincides with the robustness of magic. In the formal \(p\to0\) limit, the objective \(\sum_i |c_i|^p\) approaches the support size of the decomposition. In this sense, the \(L^p\) family connects to the stabilizer-rank optimization problem, which asks for the minimum number of stabilizer states appearing in a decomposition corresponding to the minimal support size of a stabilizer decomposition. Unlike entropic magic measures, the family $\gamma_p$ is defined through atomic decompositions and therefore has a more geometric character. In general, these quantities are submultiplicative under tensor products, but not multiplicative.

\begin{theorem}[$L^1$ upper bound for subadditive stabilizer functionals]
\label{thm:maximality_gamma1}
Let $M(|\psi\rangle)$ be a function on state vectors satisfying:
\begin{enumerate}
    \item \textbf{Stabilizer normalization:}
    \begin{equation}
    M(|s\rangle)=1
    \qquad
    \text{for all } |s\rangle\in\mathrm{STAB}_n.
    \end{equation}

    \item \textbf{Positive homogeneity:}
    \begin{equation}
    M(\lambda|\psi\rangle)=|\lambda|\,M(|\psi\rangle)
    \qquad
    \text{for all } \lambda\in\mathbb{C}.
    \end{equation}

    \item \textbf{Subadditivity:}
    \begin{equation}
    M(|\phi\rangle+|\psi\rangle)
    \le
    M(|\phi\rangle)+M(|\psi\rangle).
    \end{equation}
\end{enumerate}
Then, for every pure state $|\psi\rangle$,
\begin{equation}
M(|\psi\rangle)\le \gamma_1(|\psi\rangle).
\end{equation}
In this sense, $\gamma_1$ is maximal among all stabilizer-normalized seminorm-like functionals on state vectors.
\end{theorem}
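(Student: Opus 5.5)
The plan is to push the three axioms through an arbitrary stabilizer decomposition of $|\psi\rangle$ and then take the infimum. Fix any decomposition $|\psi\rangle=\sum_{i=1}^{m} c_i|s_i\rangle$ with $|s_i\rangle\in\mathrm{STAB}_n$. We may assume the sum is finite: $\mathrm{STAB}_n$ is a finite set, so terms with the same stabilizer state can be merged, and by the triangle inequality this merging never increases $\|c\|_1$.

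First apply subadditivity repeatedly, by induction on the number of terms, to obtain
\begin{equation}
M(|\psi\rangle)\le \sum_{i=1}^m M(c_i|s_i\rangle).
\end{equation}
Next, positive homogeneity gives $M(c_i|s_i\rangle)=|c_i|\,M(|s_i\rangle)$. Stabilizer normalization gives $M(|s_i\rangle)=1$. Combining these,
\begin{equation}
M(|\psi\rangle)\le \sum_i |c_i| = \|c\|_1.
\end{equation}
The left-hand side does not depend on the decomposition. Taking the infimum over all stabilizer decompositions therefore yields $M(|\psi\rangle)\le\gamma_1(|\psi\rangle)$.

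Two small points need care. The first is that the infimum is over a nonempty set. Every state vector admits at least one decomposition, because the computational basis states are stabilizer states and span the whole space; hence $\gamma_1$ is finite. The second is whether zero coefficients or unnormalized vectors cause trouble. They do not: homogeneity with $\lambda=0$ gives $M(0)=0$, which is consistent with the bound.

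For the maximality remark, I would also note that $\gamma_1$ itself satisfies all three hypotheses. On a stabilizer state $|s\rangle$, the trivial decomposition gives $\gamma_1(|s\rangle)\le1$. The reverse inequality $\gamma_1(|s\rangle)\ge 1$ follows because every stabilizer state has unit norm: for any decomposition, $1=\||s\rangle\|\le\sum|c_i|$. Homogeneity and subadditivity follow by scaling a decomposition, and by concatenating two decompositions, respectively. This shows that $\gamma_1$ attains the bound, so it is the maximal such functional.

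I expect no real obstacle. The only step that needs care is justifying the use of finite sums together with the infimum, since subadditivity is stated only for two summands and must be extended by induction. A countable decomposition would also need a limiting argument, and restricting to finite sums (justified by the finiteness of $\mathrm{STAB}_n$) avoids this entirely.
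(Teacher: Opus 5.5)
Your proof is correct and follows the same argument as the paper: repeated subadditivity over an arbitrary stabilizer decomposition, then positive homogeneity and stabilizer normalization to reach $\sum_i |c_i|$, then the infimum over decompositions. Your extra remarks go slightly beyond what the paper writes but do not change the route: you reduce to finite sums (note that $\mathrm{STAB}_n$ is finite only up to global phase, which you can absorb into the coefficients), and you check that $\gamma_1$ itself satisfies the three axioms, which is what actually justifies calling it ``maximal''.
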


\begin{proof}
Let
\begin{equation}
|\psi\rangle=\sum_i c_i |s_i\rangle
\end{equation}
be an arbitrary stabilizer decomposition. By repeated subadditivity,
\begin{equation}
M(|\psi\rangle)
=
M\!\left(\sum_i c_i |s_i\rangle\right)
\le
\sum_i M(c_i |s_i\rangle).
\end{equation}
Using positive homogeneity and stabilizer normalization,
\begin{equation}
\sum_i M(c_i |s_i\rangle)
=
\sum_i |c_i|\,M(|s_i\rangle)
=
\sum_i |c_i|.
\end{equation}
Hence,
\begin{equation}
M(|\psi\rangle)\le \sum_i |c_i|.
\end{equation}
Since this holds for every stabilizer decomposition of $|\psi\rangle$, taking the infimum over all such decompositions gives
\begin{equation}
M(|\psi\rangle)\le \gamma_1(|\psi\rangle).
\end{equation}
\end{proof}

\begin{theorem}[Upper bound by the stabilizer $L^p$ gauge]
\label{thm:maximality_gamma_p}
Let $p>0$. Let $M_p(|\psi\rangle)$ be a function on state vectors satisfying:
\begin{enumerate}
    \item \textbf{Stabilizer normalization:}
    \begin{equation}
    M_p(|s\rangle)=1
    \qquad
    \text{for all } |s\rangle\in\mathrm{STAB}_n.
    \end{equation}

    \item \textbf{Positive homogeneity:}
    \begin{equation}
    M_p(\lambda|\psi\rangle)=|\lambda|\,M_p(|\psi\rangle)
    \qquad
    \text{for all } \lambda\in\mathbb{C}.
    \end{equation}

    \item \textbf{$p$-subadditivity:}
    \begin{equation}
    M_p(|\phi\rangle+|\psi\rangle)^p
    \le
    M_p(|\phi\rangle)^p+M_p(|\psi\rangle)^p.
    \end{equation}
\end{enumerate}
Then, for every pure state $|\psi\rangle$,
\begin{equation}
M_p(|\psi\rangle)\le \gamma_p(|\psi\rangle).
\end{equation}
In this sense, $\gamma_p$ is maximal among all stabilizer-normalized
$p$-subadditive homogeneous functionals on state vectors.
\end{theorem}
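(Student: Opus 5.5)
The plan mirrors the proof of Theorem~\ref{thm:maximality_gamma1}, with the triangle inequality replaced by $p$-subadditivity. Fix a pure state $|\psi\rangle$ and an arbitrary finite stabilizer decomposition $|\psi\rangle=\sum_{i=1}^m c_i|s_i\rangle$ with $|s_i\rangle\in\mathrm{STAB}_n$. The goal is to show $M_p(|\psi\rangle)^p\le\sum_i|c_i|^p$ for every such decomposition. Taking the $p$-th root, which is monotone on $[0,\infty)$, and then the infimum over decompositions gives $M_p(|\psi\rangle)\le\gamma_p(|\psi\rangle)$.

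\textbf{Step 1: extend $p$-subadditivity to finite sums.} We need
\begin{equation}
M_p\Bigl(\sum_{i=1}^m|\phi_i\rangle\Bigr)^p\le\sum_{i=1}^m M_p(|\phi_i\rangle)^p.
\end{equation}
This follows by induction on $m$, splitting $\sum_{i\le m}|\phi_i\rangle=\bigl(\sum_{i<m}|\phi_i\rangle\bigr)+|\phi_m\rangle$. The induction needs the two-term inequality for arbitrary vectors, not just normalized states. The hypothesis is stated for state vectors, so I would read it, as the paper does in Theorem~\ref{thm:maximality_gamma1}, as holding on all vectors of the Hilbert space. Homogeneity already forces this, since it is applied to $\lambda|\psi\rangle$ with arbitrary $\lambda$. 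Partial sums may vanish; for that case, homogeneity with $\lambda=0$ gives $M_p(0)=0$, so nothing breaks.

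\textbf{Step 2: evaluate each term.} Apply Step 1 with $|\phi_i\rangle=c_i|s_i\rangle$. Positive homogeneity gives $M_p(c_i|s_i\rangle)=|c_i|\,M_p(|s_i\rangle)$, and stabilizer normalization gives $M_p(|s_i\rangle)=1$. Hence each term contributes $|c_i|^p$, so $M_p(|\psi\rangle)^p\le\sum_i|c_i|^p=\|c\|_p^p$.

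\textbf{Step 3: infimum and the remaining caveats.} Since $t\mapsto t^{1/p}$ is increasing, Step 2 gives $M_p(|\psi\rangle)\le\|c\|_p$ for every admissible decomposition, and taking the infimum yields the claim. Some decomposition always exists, since the computational basis states are stabilizer states, so $\gamma_p$ is finite and the infimum is over a nonempty set.

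Two points need care. First, $M_p$ must be nonnegative for the $p$-th powers to make sense; this is implicit in a $p$-th power inequality and follows from homogeneity plus the normalization on stabilizer states. Second, if infinite decompositions are admitted in the definition of $\gamma_p$, Step 1 would require a continuity or limiting argument. I would avoid this by restricting to finite decompositions. There is no loss: $\mathrm{STAB}_n$ is finite, so any decomposition can be merged into a finite one by combining coefficients of repeated states, and this merging does not increase $\|c\|_p$ for $p\le1$. For $p>1$ I would simply note that the definition ranges over finite sums.

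The only genuinely delicate point is this domain issue, that is, applying $p$-subadditivity to unnormalized partial sums. Everything else is the same telescoping argument as for $\gamma_1$.
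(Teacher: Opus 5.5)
Your proposal is correct and is essentially the paper's proof: apply $p$-subadditivity repeatedly to an arbitrary stabilizer decomposition, use homogeneity and normalization to reduce each term to $|c_i|^p$, and take the infimum over decompositions. Your extra care about unnormalized partial sums and finite decompositions is sound, with two small remarks: nonnegativity of $M_p$ is an implicit standing assumption rather than a consequence of homogeneity and normalization, and for $p>1$ the hypotheses cannot all hold (taking $|\phi\rangle=|\psi\rangle=|s\rangle$ forces $2^p\le 2$), so that case is vacuous.
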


\begin{proof}
Let
\begin{equation}
|\psi\rangle=\sum_i c_i |s_i\rangle
\end{equation}
be an arbitrary stabilizer decomposition. By repeated $p$-subadditivity,
\begin{equation}
M_p(|\psi\rangle)^p
=
M_p\!\left(\sum_i c_i |s_i\rangle\right)^p
\le
\sum_i M_p(c_i |s_i\rangle)^p.
\end{equation}
Using positive homogeneity and stabilizer normalization,
\begin{equation}
\sum_i M_p(c_i |s_i\rangle)^p
=
\sum_i |c_i|^p M_p(|s_i\rangle)^p
=
\sum_i |c_i|^p.
\end{equation}
Hence,
\begin{equation}
M_p(|\psi\rangle)^p \le \sum_i |c_i|^p.
\end{equation}
Since this holds for every stabilizer decomposition of $|\psi\rangle$, taking the infimum over all such decompositions gives
\begin{equation}
M_p(|\psi\rangle)^p
\le
\inf_{\psi=\sum_i c_i s_i}\sum_i |c_i|^p
=
\gamma_p(|\psi\rangle)^p.
\end{equation}
Therefore,
\begin{equation}
M_p(|\psi\rangle)\le \gamma_p(|\psi\rangle).
\end{equation}
\end{proof}

\begin{remark}[$L^p$ maximality and natural appearance]
The preceding theorem shows that each $\gamma_p$ can be viewed as the maximal stabilizer-normalized, positively homogeneous functional compatible with a $p$-type decomposition rule. The case $p=1$ reduces to ordinary convexity and coincides with the stabilizer gauge/extent, while other $p\neq 1$ impose an $L^p$-geometry on the decomposition. Clifford-invariance rules out assigning different intrinsic weights to individual stabilizer states within the Clifford orbit. This makes symmetric functions of the coefficient magnitudes, and in particular the \(L^p\) costs, natural choices for stabilizer-expansion quantifiers.

The preceding result does not imply that a natural stabilizer-expansion quantifier must be a single \(L^p\) cost.  One may also consider symmetric combinations of several such costs, for example
\[
M(|\psi\rangle)
=
\max\{\gamma_{p_1}(|\psi\rangle),\gamma_{p_2}(|\psi\rangle)\},
\qquad
M(|\psi\rangle)
=
\sum_a w_a \gamma_{p_a}(|\psi\rangle),
\]
or more generally, functions of \((\gamma_{p_1},\gamma_{p_2},\ldots)\) that preserve the desired homogeneity and monotonicity properties. Thus, the \(L^p\) family should be viewed as a natural hierarchy of symmetric stabilizer-decomposition costs, rather than as a uniqueness classification of all possible magic monotones.
\end{remark}

We next record the basic tensor-product property of the $L^p$ family.

\begin{proposition}[Tensor submultiplicativity of the $L^p$ stabilizer gauges]
\label{prop:gamma_p_submult}
For any $p>0$ and any pure states $|\psi\rangle,|\phi\rangle$, the stabilizer $L^p$ gauge satisfies
\begin{equation}
\gamma_p(|\psi\rangle\otimes|\phi\rangle)
\le
\gamma_p(|\psi\rangle)\,\gamma_p(|\phi\rangle).
\end{equation}
\end{proposition}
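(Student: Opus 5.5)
The plan is to prove submultiplicativity by tensoring near-optimal decompositions, exactly as in the proofs of Theorems~\ref{thm:maximality_gamma1} and~\ref{thm:maximality_gamma_p}. Fix $\epsilon>0$. By the definition of $\gamma_p$ as an infimum, choose stabilizer decompositions
\begin{equation}
|\psi\rangle=\sum_i c_i|s_i\rangle,\qquad |\phi\rangle=\sum_j d_j|t_j\rangle,
\end{equation}
with $\|c\|_p\le\gamma_p(|\psi\rangle)+\epsilon$ and $\|d\|_p\le\gamma_p(|\phi\rangle)+\epsilon$. If decompositions with countably many terms are allowed, I first truncate to finitely many terms; the definition already uses finite sums, so this step is not needed.

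Taking the tensor product gives
\begin{equation}
|\psi\rangle\otimes|\phi\rangle=\sum_{i,j}c_i d_j\,|s_i\rangle\otimes|t_j\rangle.
\end{equation}
Each $|s_i\rangle\otimes|t_j\rangle$ lies in $\mathrm{STAB}_{n+m}$. The stabilizer group of the product is generated by $S_i\otimes I$ and $I\otimes T_j$, which gives $n+m$ independent commuting generators and no $-I$. So this is an admissible stabilizer decomposition of the product state, and the definition of $\gamma_p$ yields
\begin{equation}
\gamma_p(|\psi\rangle\otimes|\phi\rangle)^p\le\sum_{i,j}|c_i|^p|d_j|^p=\|c\|_p^p\,\|d\|_p^p .
\end{equation}
The key point is that the $\ell_p$ quasi-norm is exactly multiplicative on outer products of coefficient vectors, for every $p>0$. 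Hence there is no convexity issue when $p<1$.

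Taking $p$-th roots gives
\begin{equation}
\gamma_p(|\psi\rangle\otimes|\phi\rangle)\le(\gamma_p(|\psi\rangle)+\epsilon)(\gamma_p(|\phi\rangle)+\epsilon),
\end{equation}
and letting $\epsilon\to0$ finishes the proof.

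The only point that needs care is collisions: several pairs $(i,j)$ may produce the same product stabilizer state up to phase. Merging such terms could change the coefficient norm. This causes no problem, because $\gamma_p$ takes the infimum over all expansions, including those with repeated atoms. So the unmerged list is a legitimate decomposition, and no merging is required.

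I expect no genuine obstacle. The main thing to verify is closure of $\mathrm{STAB}$ under tensor products, which is standard.
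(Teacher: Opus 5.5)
Your proposal is correct and follows the paper's proof essentially step for step. The paper also picks $\epsilon$-optimal decompositions, tensors them, uses closure of $\mathrm{STAB}$ under tensor products and the exact identity $\sum_{i,j}|c_i d_j|^p=\|c\|_p^p\|d\|_p^p$, and then lets $\epsilon\to 0$. Your worry about collisions is in fact vacuous: $|s\rangle\otimes|t\rangle\propto|s'\rangle\otimes|t'\rangle$ forces $|s\rangle\propto|s'\rangle$ and $|t\rangle\propto|t'\rangle$, so distinct atoms in the factors already give distinct product atoms.
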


\begin{proof}
Fix $\varepsilon>0$. Choose stabilizer decompositions
\begin{equation}
|\psi\rangle=\sum_i c_i |s_i\rangle,
\qquad
|\phi\rangle=\sum_j d_j |t_j\rangle,
\end{equation}
such that
\begin{equation}
\|c\|_p\le \gamma_p(|\psi\rangle)+\varepsilon,
\qquad
\|d\|_p\le \gamma_p(|\phi\rangle)+\varepsilon.
\end{equation}
Taking the tensor product yields
\begin{equation}
|\psi\rangle\otimes|\phi\rangle
=
\sum_{i,j} c_i d_j \,
\bigl(|s_i\rangle\otimes |t_j\rangle\bigr),
\end{equation}
and each tensor product $|s_i\rangle\otimes|t_j\rangle$ is again a stabilizer state.

Let $e$ denote the coefficient array $e_{ij}=c_i d_j$. Then
\begin{equation}
\|e\|_p^p
=
\sum_{i,j}|c_i d_j|^p
=
\left(\sum_i |c_i|^p\right)
\left(\sum_j |d_j|^p\right)
=
\|c\|_p^p\,\|d\|_p^p,
\end{equation}
so
\begin{equation}
\|e\|_p=\|c\|_p\,\|d\|_p.
\end{equation}
By the definition of $\gamma_p$,
\begin{equation}
\gamma_p(|\psi\rangle\otimes|\phi\rangle)
\le
\|e\|_p
=
\|c\|_p\,\|d\|_p
\le
\bigl(\gamma_p(|\psi\rangle)+\varepsilon\bigr)
\bigl(\gamma_p(|\phi\rangle)+\varepsilon\bigr).
\end{equation}
Taking $\varepsilon\to 0$ proves the claim.
\end{proof}

This tensor-product behavior also distinguishes the individual members of the \(L^p\) hierarchy. Although one may form symmetric or general linear combinations of several $\gamma_p$, such combinations do not, in general, preserve the strict tensor submultiplicativity exhibited by each individual $\gamma_p$. Thus, the single-parameter family \(\{\gamma_p\}_{p>0}\) provides the natural basic building blocks for stabilizer-expansion costs.

\section{Classification of Pauli-spectrum magic functionals}
\label{app:pauli_spectrum_classification}

In this appendix, we classify continuous pure-state magic functionals that depend only on the Pauli expectation values and are additive under tensor products. The main result is that this class is exhausted by logarithmic combinations of the moments
\begin{equation}
F_\alpha(|\psi\rangle)
:=
\sum_{P\in\mathcal P_n}
|\langle\psi|P|\psi\rangle|^{2\alpha},
\end{equation}
or equivalently by linear combinations of stabilizer Rényi entropies.

We begin with the standard fact that continuous functions on the compact domain of Pauli expectation values may be uniformly approximated by polynomials.

\begin{theorem}[Stone--Weierstrass theorem~\cite{stone1937applications}]
Let $K\subset \mathbb R^m$ be compact. Then every continuous real-valued function on $K$ can be uniformly approximated by polynomials.
\end{theorem}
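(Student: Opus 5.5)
We would prove the statement in the classical Weierstrass form: reduce to a cube, then approximate explicitly with multivariate Bernstein polynomials. This avoids the abstract algebra/lattice version of Stone's theorem, which we do not need here. Fix a continuous $g:K\to\mathbb R$ and $\varepsilon>0$.

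\emph{Step 1: reduction to a cube.} Since $K$ is compact, it lies in some box $[-R,R]^m$. By the Tietze extension theorem ($K$ is closed in the metric space $\mathbb R^m$), $g$ extends to a continuous $\tilde g$ on $[-R,R]^m$ with the same sup norm. An affine change of variables $x\mapsto (x+R)/(2R)$ maps the box onto $[0,1]^m$ and carries polynomials to polynomials. It therefore suffices to approximate a continuous $f$ on $[0,1]^m$ uniformly by polynomials; restricting back to $K$ then loses nothing.

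\emph{Step 2: Bernstein approximation.} For $N\ge 1$ define
\begin{equation*}
B_N f(x)=\sum_{k\in\{0,\dots,N\}^m} f\!\left(\tfrac{k}{N}\right)\prod_{j=1}^m \binom{N}{k_j}x_j^{k_j}(1-x_j)^{N-k_j},
\end{equation*}
which is a polynomial. Its weights are the law of $S/N$, where $S=(S_1,\dots,S_m)$ has independent coordinates $S_j\sim\mathrm{Bin}(N,x_j)$. Hence $B_Nf(x)-f(x)=\mathbb E[f(S/N)-f(x)]$.

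\emph{Step 3: the uniform estimate, the only nontrivial step.} By uniform continuity of $f$ on the compact cube, choose $\delta>0$ with $|f(y)-f(x)|<\varepsilon/2$ whenever $\|y-x\|<\delta$. Split the expectation according to whether $\|S/N-x\|<\delta$.
\begin{itemize}
\item On the near event the contribution is at most $\varepsilon/2$.
\item On the far event the contribution is at most $2\|f\|_\infty\,\Pr(\|S/N-x\|\ge\delta)$.
\end{itemize}
By Chebyshev's inequality and independence,
\begin{equation*}
\Pr\bigl(\|S/N-x\|\ge\delta\bigr)\le \frac{\sum_j x_j(1-x_j)}{N\delta^2}\le\frac{m}{4N\delta^2}.
\end{equation*}
This bound is uniform in $x$, so for $N> m\|f\|_\infty/(\varepsilon\delta^2)$ we get $\|B_Nf-f\|_\infty<\varepsilon$.

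The main care needed is exactly this uniformity in $x$: the variance bound $x_j(1-x_j)\le 1/4$ must be independent of the point, and $\delta$ must come from uniform rather than pointwise continuity. Once that is in place, pulling the polynomial back through the affine map and restricting to $K$ completes the argument.
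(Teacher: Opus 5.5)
Your proof is correct, and it takes a different route from the paper. The paper gives no proof of this statement: it only cites Stone's 1937 theorem, the general subalgebra version.

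From that version the statement follows directly. Polynomial functions restricted to $K$ form a subalgebra of $C(K,\mathbb R)$ that contains the constants and separates points (the coordinate functions already do), so they are dense. This route works intrinsically on $K$, with no extension theorem and no ambient cube. It also applies unchanged to any compact Hausdorff space and any unital point-separating subalgebra. Its engine is a lattice argument that rests on approximating $|t|$ by polynomials on an interval.

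Your route uses Tietze to move to a cube and then Bernstein's probabilistic construction. This is fully constructive: the tensor-product Bernstein polynomials are explicit approximants with the uniform error bound $\varepsilon/2 + m\|f\|_\infty/(2N\delta^2)$. That bound depends only on the modulus of continuity and the dimension, which the abstract theorem does not give you. You handle uniformity correctly: $\delta$ comes from uniform continuity, and the variance bound $x_j(1-x_j)\le 1/4$ does not depend on the point.

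For the paper's use of the theorem (approximating a continuous function on the compact set of Pauli-expectation vectors), either version suffices.
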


In our setting, the Pauli expectation vector
\begin{equation}
a_P := \langle \psi|P|\psi\rangle
\end{equation}
ranges over a bounded subset of $\mathbb R^{4^n}$, since $|a_P|\le 1$ for all $P\in\mathcal P_n$. Hence, any continuous Pauli-spectrum functional can be approximated uniformly by polynomial functions of the variables $\{a_P\}$.

We also use the standard characterization of symmetric polynomials.

\begin{lemma}[Symmetric polynomials generated by power sums]
\label{lemma:symmetric_power_sum_app}
Let $x_1,\dots,x_m$ be variables. Any symmetric polynomial in $x_1,\dots,x_m$ can be expressed as a polynomial in the power sums
\begin{equation}
p_k=\sum_{i=1}^m x_i^k,
\qquad k=1,\dots,m.
\end{equation}
\end{lemma}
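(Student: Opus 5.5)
The plan is the classical two-step route: first reduce to the elementary symmetric polynomials, then convert those into power sums through Newton's identities. Throughout we work over $\mathbb{R}$ (or $\mathbb{C}$), which is the setting of the Pauli expectation values $a_P$. This matters because the second step requires dividing by the integers $1,\dots,m$.

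\emph{Step 1 (fundamental theorem of symmetric polynomials).} Let $e_k=\sum_{i_1<\dots<i_k}x_{i_1}\cdots x_{i_k}$ for $k=1,\dots,m$. I would show that every symmetric polynomial $f$ is a polynomial in $e_1,\dots,e_m$.
\begin{itemize}
\item Order monomials lexicographically and let $c\,x_1^{\lambda_1}\cdots x_m^{\lambda_m}$ be the leading term of $f$.
\item Since $f$ is symmetric, every permutation of the exponent vector also occurs in $f$. Maximality of the leading term therefore forces $\lambda_1\ge\lambda_2\ge\dots\ge\lambda_m$.
\item The product $e_1^{\lambda_1-\lambda_2}e_2^{\lambda_2-\lambda_3}\cdots e_m^{\lambda_m}$ has leading monomial $x^\lambda$ with coefficient $1$.
\item Hence $f-c\,e_1^{\lambda_1-\lambda_2}\cdots e_m^{\lambda_m}$ is symmetric and has a strictly smaller leading monomial.
\item Only finitely many exponent vectors of a fixed total degree lie below $\lambda$, and we may treat each homogeneous component separately. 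The descent therefore terminates, which gives the claim by induction.
\end{itemize}

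\emph{Step 2 (Newton's identities).} I would derive
\begin{equation}
k\,e_k=\sum_{i=1}^{k}(-1)^{i-1}e_{k-i}\,p_i,\qquad e_0=1,\quad 1\le k\le m .
\end{equation}
The cleanest route is through the generating function $E(t)=\prod_{i=1}^m(1+x_it)=\sum_k e_kt^k$. Taking the logarithmic derivative gives
\begin{equation}
\frac{E'(t)}{E(t)}=\sum_{i=1}^m\frac{x_i}{1+x_it}=\sum_{j\ge1}(-1)^{j-1}p_j\,t^{j-1}.
\end{equation}
Multiplying both sides by $E(t)$ and comparing the coefficients of $t^{k-1}$ yields the identity.

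Because $k$ is invertible over $\mathbb{R}$, induction on $k$ expresses each $e_k$ as a polynomial in $p_1,\dots,p_k$. Substituting these expressions into the representation from Step 1 proves the lemma. In particular, power sums $p_k$ with $k>m$ are never needed.

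\emph{Main obstacle.} I expect the termination argument in Step 1 to need the most care: one must check that the leading term of the product of $e_j$'s is exactly $x^\lambda$ and that the descent is well-founded. The cleanest way to handle this is to fix the total degree, where the lexicographic order on exponent vectors is a finite total order. Step 2 is routine once the generating-function identity is written down. Its only subtlety is the characteristic-zero assumption, which holds here.
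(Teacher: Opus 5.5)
Your proof is correct. The paper gives no proof of this lemma; it simply invokes it as the standard characterization of symmetric polynomials, and your argument (lexicographic descent to the elementary symmetric polynomials, then Newton's identities from the logarithmic derivative of $\prod_i(1+x_it)$) is exactly the textbook proof it is relying on. Keep your characteristic-zero caveat: the lemma as stated, with no coefficient ring specified, fails over $\mathbb{Z}$ (e.g.\ $e_2=(p_1^2-p_2)/2$), and it holds in the paper's setting only because the variables $x_P=|\langle\psi|P|\psi\rangle|^2$ are real.
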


We can now prove the main classification result.

\begin{theorem}[Classification of symmetric tensor-additive Pauli-spectrum functionals]
\label{thm:classification_app}
Let $Q$ be a continuous function of the Pauli expectation values
\begin{equation}
x_P := |\langle\psi|P|\psi\rangle|^2,
\end{equation}
satisfying:
\begin{enumerate}
    \item \textbf{Symmetry:} $Q$ is symmetric under permutations of the variables $x_P$;
    \item \textbf{Tensor additivity:}
    \begin{equation}
        Q(\psi\otimes\phi)=Q(\psi)+Q(\phi).
    \end{equation}
\end{enumerate}
Then $\forall \epsilon>0$ there exist real coefficients $C_\alpha$ with finite support such that
\begin{equation}
    \left| Q(\psi)-
    \sum_{\alpha\ge 0} C_\alpha \log F_\alpha(\psi)\right| \leq \epsilon
\end{equation}
where
\begin{equation}
    F_\alpha(\psi)
    :=
    \sum_{P\in\mathcal P_n}
    |\langle\psi|P|\psi\rangle|^{2\alpha}.
\end{equation}
The term $\alpha=0$ contributes only the dimension-dependent constant
\begin{equation}
F_0 = |\mathcal P_n| = d^2.
\end{equation}
\end{theorem}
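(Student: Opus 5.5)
The plan is to first reduce $Q$ exactly, not just approximately, to a function of the moments $F_\alpha$, and only then use tensor additivity to force logarithmic linearity. Fix $n$ and write $m=4^n$. Since $Q$ is continuous and symmetric in the variables $x_P=|\langle\psi|P|\psi\rangle|^2$, it depends only on the multiset $\{x_P\}$. By Lemma~\ref{lemma:symmetric_power_sum_app} (Newton's identities), this multiset is determined by the power sums $p_k=\sum_P x_P^k=F_k(\psi)$, $k=1,\dots,m$. The moment map $\Phi_n:\psi\mapsto (F_1,\dots,F_m)$ is continuous on a compact set, and the multiset is recovered continuously from its power sums (roots depend continuously on coefficients). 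Hence there is a continuous $g_n$ on the compact set $K_n=\Phi_n(\text{pure states})$ with $Q=g_n\circ\Phi_n$. Since every $F_\alpha\ge 1$ (the identity contributes $1$), we may pass to logarithmic coordinates: set $t_\alpha=\log F_\alpha$ and $h_n(t)=g_n(e^{t})$, a continuous function on the compact set $L_n=\log K_n\subset[0,\infty)^m$.

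The second step uses multiplicativity of the moments. Because $\mathcal P_{n+n'}=\mathcal P_n\otimes\mathcal P_{n'}$ (up to phases) and $x_{P\otimes P'}(\psi\otimes\phi)=x_P(\psi)\,x_{P'}(\phi)$, we get $F_\alpha(\psi\otimes\phi)=F_\alpha(\psi)F_\alpha(\phi)$ for every $\alpha$. Tensor additivity of $Q$ then becomes the Cauchy-type equation
\begin{equation}
h_{n+n'}(t+t')=h_n(t)+h_{n'}(t'),\qquad t\in L_n,\ t'\in L_{n'},
\end{equation}
where we compare the first $\min$ coordinates and note that higher moments of the larger system are again products. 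Taking $\psi^{\otimes N}$ gives $h_{Nn}(Nt)=N\,h_n(t)$, so $Q$ is positively homogeneous along rays in log-moment space. The stabilizer value fixes the constant: stabilizer states have $t_\alpha=n(1-\alpha)\log 2$ for all $\alpha$, which is where the $\alpha=0$ term $\log F_0=2\log d$ enters and absorbs the offset.

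The third step extracts linearity. Consider the union $L=\bigcup_n L_n$, which is closed under addition by the displayed equation. The function $h$ defined on it is additive and continuous on each piece. We will show that, on the closed convex cone generated by $L$, $h$ agrees with a linear functional $\ell(t)=\sum_\alpha C_\alpha t_\alpha$. Additivity together with $h(Nt)=Nh(t)$ gives $\mathbb Q_{+}$-linearity on the additive monoid, and continuity, after normalizing by $N$ and taking limits, extends this to the generated cone. Since $\ell$ may a priori involve infinitely many $\alpha$ as $n$ grows, we then truncate. On each compact $L_n$ we use Stone--Weierstrass, applied to the finitely many relevant coordinates, together with the observation that $\log F_\alpha$ for large $\alpha$ is uniformly close to $\log(1+\#\{P:x_P=1\})$, so that tail coordinates are approximately redundant. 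This yields finitely supported $C_\alpha$ with $|Q-\sum C_\alpha\log F_\alpha|\le\epsilon$, as the theorem asserts.

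I expect the third step to be the main obstacle. The Cauchy equation holds only on the non-convex, highly structured set $L$ of physically realizable log-moments, not on a group or an open set. Showing that an additive continuous function there coincides with a linear functional requires $L$ to be rich enough: its differences should span the relevant coordinate space, and its normalized tensor powers should be dense in a cone. My first attempt would be to exhibit explicit product families, such as $|T\rangle^{\otimes a}\otimes|\mathrm{CCZ}\rangle^{\otimes b}$ together with single-qubit states at continuously varying angles, whose normalized log-moment vectors fill a cone with nonempty interior in the first $K$ coordinates. Linearity of $h$ on that cone would then follow from the classical result that continuous additive functions on a cone are linear. If the cone is degenerate, I would fall back to the $\epsilon$-approximate statement, approximating $h$ on each $L_n$ directly in the span of the $t_\alpha$.
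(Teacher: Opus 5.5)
Your Steps 1 and 2 are sound. Step 1 is in fact sharper than the paper's approximation step: continuity and symmetry give $Q=h(\log F_1,\log F_2,\dots)$ exactly, and multiplicativity of the $F_\alpha$ makes $h$ additive on the realizable set $L$. One slip: for a stabilizer state $F_\alpha=d$ for all $\alpha\ge 1$, so $t_\alpha=n\log 2$; the value $n(1-\alpha)\log 2$ is $\log(d^{-\alpha}F_\alpha)$.

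The genuine gap is Step 3, and it cannot be closed. An additive continuous $h$ on $L$ need not be a finitely supported linear combination of the coordinates $t_\alpha$. Take $Q=\log F_{3/2}=\log\sum_P x_P^{3/2}$. It is continuous and symmetric, and it is tensor additive because $x_{P\otimes P'}(\psi\otimes\phi)=x_P(\psi)\,x_{P'}(\phi)$. Every $\ell=\sum_\alpha C_\alpha\log F_\alpha$ is tensor additive too. So if $|Q-\ell|\le\epsilon$ held on all states, evaluating on $\psi^{\otimes N}$ would give $N|Q(\psi)-\ell(\psi)|\le\epsilon$ for every $N$, i.e.\ $Q=\ell$ exactly. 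Your $\epsilon$-fallback with $n$-independent coefficients therefore collapses to the exact claim. Now restrict to single-qubit states with Bloch vector $(\cos\theta,\sin\theta,0)$. There $F_\alpha=1+\cos^{2\alpha}\theta+\sin^{2\alpha}\theta$ for integer $\alpha\ge1$ and $F_0=4$, so $\ell$ is real-analytic in $\theta$. Then $Q=\ell$ would make $e^{Q}-1-\cos^3\theta=|\sin\theta|^3$ analytic at $\theta=0$, which it is not. The functional $-\frac{1}{d}\sum_P x_P\log x_P$ (the stabilizer entropy $M_1$ up to $\log d$) fails in the same way. So the conclusion you are aiming for is false under the stated hypotheses, and no choice of product families or cone argument can rescue it.

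Concretely, the moves in Step 3 break as follows.
\begin{itemize}
\item Extending $\mathbb{Q}_+$-linearity from the monoid $L$ to a cone ``by continuity'' is unjustified. $h$ is only continuous on each $L_n$ separately, with no control uniform in $n$, while the rescaled points $t/N$ come from ever larger $n$.
\item The truncation claim is false. $F_\alpha$ converges to the stabilizer-group size $s(\psi)$ only pointwise, since that limit is discontinuous: a coefficient $x_P=1-\delta$ keeps $x_P^\alpha\approx 1$ until $\alpha\sim 1/\delta$. So tail coordinates are not uniformly negligible.
\item Stone--Weierstrass yields polynomials in the $t_\alpha$, which lie outside their linear span. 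If you let the coefficients depend on $n$, tensor additivity no longer does any work. You would then need a separate density theorem for $\mathrm{span}\{\log F_\alpha\}$ on $n$-qubit states, which nothing in your argument provides.
\end{itemize}

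The paper's proof does not supply the missing idea either. It approximates $e^{Q}$ by a symmetric polynomial in power sums and asserts that a multiplicative polynomial must be a monomial. But the approximant is not multiplicative, and $e^{Q}=F_{3/2}$ is an exactly multiplicative symmetric function that is not of the form $C\prod_\alpha F_\alpha^{k_\alpha}$ over finitely many integer $\alpha$. The statement needs either an additional structural hypothesis on $Q$ or a weaker, $n$-dependent conclusion.
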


\begin{proof}
Define
\begin{equation}
F_{\mathrm{eff}}(\psi):=e^{Q(\psi)}.
\end{equation}
Because $Q$ is continuous on the bounded domain of Pauli expectation values, $F_{\mathrm{eff}}$ is also continuous. By the Stone--Weierstrass theorem, $F_{\mathrm{eff}}$ can be uniformly approximated by polynomials in the variables $\{x_P\}$. Since $Q$ is symmetric in the $x_P$, the same is true for $F_{\mathrm{eff}}$, and by Lemma~\ref{lemma:symmetric_power_sum_app} any such symmetric polynomial can be expressed as a polynomial in the power sums
\begin{equation}
F_\alpha(\psi)=\sum_{P\in\mathcal P_n} x_P^\alpha.
\end{equation}

Tensor additivity of $Q$ implies multiplicativity of $F_{\mathrm{eff}}$:
\begin{equation}
F_{\mathrm{eff}}(\psi\otimes\phi)
=
F_{\mathrm{eff}}(\psi)\,F_{\mathrm{eff}}(\phi).
\end{equation}
At the same time, each $F_\alpha$ is multiplicative under tensor products,
\begin{equation}
F_\alpha(\psi\otimes\phi)=F_\alpha(\psi)\,F_\alpha(\phi),
\end{equation}
since Pauli expectation values factorize on product states. Therefore, any monomial in the variables $\{F_\alpha\}$ is multiplicative. Conversely, a nontrivial sum of distinct monomials cannot satisfy multiplicativity identically for arbitrary product states. It follows that the multiplicative symmetric polynomial $F_{\mathrm{eff}}$ must take the form
\begin{equation}
F_{\mathrm{eff}}(\psi)
=
C\prod_{\alpha\ge 1}F_\alpha(\psi)^{k_\alpha}
\end{equation}
for some real exponents $k_\alpha$ and constant $C>0$.

Now note that
\begin{equation}
F_1(\psi)=\sum_{P\in\mathcal P_n}|\langle\psi|P|\psi\rangle|^2=d,
\end{equation}
which is state-independent for pure states. Hence, the $\alpha=1$ contribution can be absorbed into a dimension-dependent prefactor, which we write as
\begin{equation}
C=d^q.
\end{equation}
Taking logarithms yields
\begin{equation}
Q(\psi)
=
q\log d-\sum_{\alpha\ge 2}k_\alpha \log F_\alpha(\psi),
\end{equation}
which proves the claim.
\end{proof}

It is often convenient to rewrite the result in terms of stabilizer Rényi entropies.

\begin{definition}[Stabilizer $\alpha$-Rényi entropy]
\label{def:SRE_app}
For an $n$-qubit pure state $|\psi\rangle$, define
\begin{equation}
M_\alpha(|\psi\rangle)
=
\frac{1}{1-\alpha}
\log\!\left(
d^{-\alpha}
\sum_{P\in\mathcal P_n}
|\langle\psi|P|\psi\rangle|^{2\alpha}
\right)
-\log d,
\end{equation}
where $d=2^n$.
\end{definition}

Since
\begin{equation}
\log F_\alpha
=
(1-\alpha)M_\alpha+\alpha\log d,
\end{equation}
Theorem~\ref{thm:classification_app} is equivalent to the statement that every continuous symmetric tensor-additive Pauli-spectrum functional can be written as a finite linear combination of stabilizer Rényi entropies, up to a dimension-dependent term.

As an immediate consequence, stabilizer nullity lies in the same class.

\begin{corollary}[Stabilizer nullity as a Rényi-type functional]
\label{cor:nullity_in_class_app}
The stabilizer nullity
\begin{equation}
\nu(|\psi\rangle)
:=
n-\log_2 s(|\psi\rangle),
\end{equation}
where
\begin{equation}
s(|\psi\rangle)
=
\bigl|
\{P\in\mathcal P_n:\,P|\psi\rangle=\pm|\psi\rangle\}
\bigr|,
\end{equation}
is obtained as the limit
\begin{equation}
\nu(|\psi\rangle)=\lim_{\alpha\to\infty}M_\alpha(|\psi\rangle).
\end{equation}
\end{corollary}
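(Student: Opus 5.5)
The plan is to compute the limit directly from Definition~\ref{def:SRE_app}, using only two facts about the Pauli spectrum of a pure state: $F_1=d$ and $|a_P|\le 1$.

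\textbf{Step 1: collapse of $F_\alpha$.} Split $\mathcal P_n$ into the stabilizer set $S_\psi=\{P: P|\psi\rangle=\pm|\psi\rangle\}$, where $|a_P|=1$, and its complement, where $|a_P|<1$ because the vector is pure. Then
\begin{equation}
F_\alpha(|\psi\rangle)=s(|\psi\rangle)+\sum_{P\notin S_\psi}|a_P|^{2\alpha}.
\end{equation}
The remainder is a finite sum of terms $c^{\alpha}$ with $0\le c<1$. Hence it decreases monotonically to $0$, and $F_\alpha\downarrow s(|\psi\rangle)$ with $s\ge 1$, since the identity always lies in $S_\psi$. In particular $\log F_\alpha$ stays bounded, between $0$ and $\log d$ because $F_\alpha\le F_1=d$, and $\log_2 F_\alpha\to\log_2 s$. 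This step is routine.

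\textbf{Step 2: exact rewriting of $M_\alpha$.} Using $\log(d^{-\alpha}F_\alpha)=-\alpha\log d+\log F_\alpha$, I will rewrite
\begin{equation}
M_\alpha=\frac{-\alpha\log d+\log F_\alpha}{1-\alpha}-\log d=\frac{\log d-\log F_\alpha}{\alpha-1},
\qquad\text{i.e.}\qquad
(\alpha-1)\,M_\alpha=\log d-\log F_\alpha .
\end{equation}
This identity is the heart of the argument. It shows that the finite, state-dependent content of $M_\alpha$ at large $\alpha$ is exactly $\log d-\log F_\alpha$. Combined with Step 1 and $\log_2 d=n$, it gives
\begin{equation}
\log_2 d-\log_2 F_\alpha \longrightarrow n-\log_2 s(|\psi\rangle)=\nu(|\psi\rangle).
\end{equation}
The limit is monotone nondecreasing in $\alpha$, as in Step 1.

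\textbf{Step 3: the main obstacle, normalization.} The delicate point is matching this with the limit exactly as the statement writes it. With the displayed convention $M_\alpha=\frac{1}{1-\alpha}\log(d^{-\alpha}F_\alpha)-\log d$, the identity of Step 2 shows $M_\alpha=\frac{\log d-\log F_\alpha}{\alpha-1}=O(1/\alpha)$. The bare limit is therefore $0$ for every state. The stated equality $\lim_{\alpha\to\infty}M_\alpha=\nu$ thus holds precisely when $M_\alpha$ is read with the $(\alpha-1)$ prefactor carried along, i.e.\ for the rescaled entropy $(\alpha-1)M_\alpha=\log d-\log F_\alpha$ in base $2$.

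What I would check first is whether the convention $\log(d^{-\alpha}F_\alpha)/(1-\alpha)$ was intended with the dimension offset $\frac{\alpha}{1-\alpha}\log d$ in place of $\log d$. That is exactly the choice that makes the $\alpha$-linear terms cancel, and with it the statement follows verbatim from Steps 1--2. I would then record the relation $\log F_\alpha=(1-\alpha)M_\alpha+\alpha\log d$, stated after Theorem~\ref{thm:classification_app}, under this same normalization, so that the corollary sits consistently inside that classification. Either way, the proof reduces to Step 1 plus the exact cancellation in Step 2; no estimate beyond $|a_P|<1$ off the stabilizer group is needed.
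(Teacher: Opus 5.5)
Your Steps 1 and 2 are correct and follow the paper's own argument. The paper also shows $F_\alpha\to s(|\psi\rangle)$, because only the $|a_P|=1$ terms survive, and then ``substitutes into the definition of $M_\alpha$.'' Your Step 3 diagnosis is also correct, and more careful than the paper. Definition~\ref{def:SRE_app} gives exactly $M_\alpha=(\log d-\log F_\alpha)/(\alpha-1)$. Since $1\le F_\alpha\le d$, this tends to $0$ for every pure state. The paper's substitution step therefore silently drops the factor $1/(\alpha-1)$, and the corollary as literally written is false. For example, $T|+\rangle$ has $F_\alpha=1+2^{1-\alpha}$, so $\nu=1$ while $M_\alpha\to 0$. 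What Steps 1--2 actually establish, and what you should state as the result, is
\begin{equation*}
\lim_{\alpha\to\infty}(\alpha-1)M_\alpha(|\psi\rangle)=\lim_{\alpha\to\infty}\bigl(\log_2 d-\log_2 F_\alpha(|\psi\rangle)\bigr)=n-\log_2 s(|\psi\rangle)=\nu(|\psi\rangle).
\end{equation*}

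The genuine error is your proposed alternative repair in the last paragraph. No state-independent additive offset $c_\alpha(d)$ can make the statement hold, including either sign of $\frac{\alpha}{1-\alpha}\log d$. The reason is
\begin{equation*}
\frac{1}{1-\alpha}\log\bigl(d^{-\alpha}F_\alpha\bigr)+c_\alpha(d)=\frac{\alpha}{\alpha-1}\log d+c_\alpha(d)-\frac{\log F_\alpha}{\alpha-1},
\end{equation*}
where the only state-dependent term is bounded in absolute value by $\log d/(\alpha-1)$. The limit, if it exists, is therefore the same for all states; your two sign choices give $0$ and $2\log d$. But $\nu$ ranges over $[0,n]$. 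Cancelling the $\alpha$-linear terms does not help, because the prefactor $1/(1-\alpha)$ still multiplies $\log F_\alpha$. The repair has to be multiplicative (rescaling by $\alpha-1$), not additive.

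For the same reason, do not try to align with the relation $\log F_\alpha=(1-\alpha)M_\alpha+\alpha\log d$ quoted after Theorem~\ref{thm:classification_app}. It is inconsistent with Definition~\ref{def:SRE_app}, which gives $\log F_\alpha=(1-\alpha)M_\alpha+\log d$. The normalization it does fit, $M_\alpha=\frac{1}{1-\alpha}\log(d^{-\alpha}F_\alpha)$ with no offset, tends to $\log d=n$, not to $\nu$.
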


\begin{proof}
By definition,
\begin{equation}
F_\alpha(|\psi\rangle)=\sum_{P\in\mathcal P_n}|\langle\psi|P|\psi\rangle|^{2\alpha}.
\end{equation}
As $\alpha\to\infty$, only those terms with $|\langle\psi|P|\psi\rangle|=1$ survive, so
\begin{equation}
\lim_{\alpha\to\infty}F_\alpha(|\psi\rangle)=s(|\psi\rangle).
\end{equation}
Substituting into the definition of $M_\alpha$ gives
\begin{equation}
\lim_{\alpha\to\infty}M_\alpha(|\psi\rangle)
=
n-\log_2 s(|\psi\rangle)
=
\nu(|\psi\rangle),
\end{equation}
as claimed.
\end{proof}

\begin{corollary}[Mana in the closure of the Pauli-spectrum family]
For odd-prime qudit systems, mana is contained in the uniform closure of the Pauli-spectrum functional family.
\end{corollary}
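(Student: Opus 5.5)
The plan is to show directly that mana is a continuous function of the qudit Pauli (Heisenberg--Weyl) spectrum on a compact domain. The Stone--Weierstrass theorem, already used in the proof of Theorem~\ref{thm:classification_app}, then places it in the uniform closure of polynomial Pauli-spectrum functionals. Throughout I work in odd prime dimension $p$ with $n$ qudits, $D=p^n$, and Weyl operators $T_{\mathbf a}$, $\mathbf a\in\mathbb Z_p^{2n}$. The qudit Pauli spectrum is the characteristic function $\chi_\psi(\mathbf a):=\langle\psi|T_{\mathbf a}^\dagger|\psi\rangle$. Since these values are complex, I take the real coordinates $(\mathrm{Re}\,\chi_\psi(\mathbf a),\mathrm{Im}\,\chi_\psi(\mathbf a))$ as the spectral variables.

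\textbf{Step 1: rewrite the Wigner function.} I would use Gross's phase-point operators $A_{\mathbf u}=D^{-1}\sum_{\mathbf a}\omega^{[\mathbf a,\mathbf u]}T_{\mathbf a}$, with $[\cdot,\cdot]$ the symplectic form. These give
\begin{equation}
W_\psi(\mathbf u)=\frac{1}{D}\langle\psi|A_{\mathbf u}|\psi\rangle=\frac{1}{D^2}\sum_{\mathbf a}\omega^{[\mathbf a,\mathbf u]}\,\overline{\chi_\psi(\mathbf a)},
\end{equation}
so each $W_\psi(\mathbf u)$ is a fixed $\mathbb C$-linear function of the spectrum, and it is real because $A_{\mathbf u}$ is Hermitian. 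Consequently $\mathcal M(\psi)=\log\sum_{\mathbf u}|W_\psi(\mathbf u)|$ is a composition of linear maps, absolute values, a finite sum and a logarithm.

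\textbf{Step 2: continuity on a compact domain.} Since $\sum_{\mathbf u}W_\psi(\mathbf u)=1$, we have $\sum_{\mathbf u}|W_\psi(\mathbf u)|\ge 1$, so the logarithm is evaluated on $[1,\infty)$ and mana is a continuous function of the spectrum. The domain $K$ is the set of attainable spectra, the image of the compact set of pure states under the continuous map $\psi\mapsto\chi_\psi$. It is therefore a compact subset of $\mathbb R^{2D^2}$, lying inside the box $|\chi|\le 1$.

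\textbf{Step 3: polynomial approximation.} Stone--Weierstrass then gives, for every $\epsilon>0$, a real polynomial $P_\epsilon$ in the variables $\mathrm{Re}\,\chi,\mathrm{Im}\,\chi$ with $\sup_K|\mathcal M-P_\epsilon|\le\epsilon$. Each $P_\epsilon$ is by definition a Pauli-spectrum functional, which proves membership in the uniform closure. In fact $\mathcal M$ is itself a continuous function of the spectrum, so the closure statement is the weak form.

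I would add two remarks. First, mana is additive under tensor products, because $A_{\mathbf u\oplus\mathbf v}=A_{\mathbf u}\otimes A_{\mathbf v}$ and so the Wigner function factorizes. Second, mana is \emph{not} invariant under arbitrary permutations of the spectrum, only under the Clifford (symplectic) ones. This explains why the result places mana in the closure of the broader family rather than in the symmetric Rényi-type class of Theorem~\ref{thm:classification_app}.

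The main obstacle I expect is bookkeeping rather than analysis. The qubit Pauli spectrum used elsewhere in the paper is real, whereas the qudit one is complex, so one must either pass to real and imaginary parts or use $\chi(-\mathbf a)=\overline{\chi(\mathbf a)}\,\omega^{c(\mathbf a)}$ to identify the independent real coordinates. One must also fix the phase conventions for $T_{\mathbf a}$ so that $A_{\mathbf u}$ is Hermitian and the inversion formula above holds. I would handle both by citing Gross's construction explicitly and stating the variables used in the Stone--Weierstrass step.
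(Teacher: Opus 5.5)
Your proof is correct and uses the same ingredients as the paper's: the Wigner function is linear in the Weyl/Pauli expectation values, the attainable spectra form a compact set, $\sum_{\mathbf u}|W(\mathbf u)|\ge 1$ controls the logarithm, and Stone--Weierstrass supplies polynomial approximants. The difference is the order of the steps: you apply Stone--Weierstrass once to the continuous composite $\mathcal M$, whereas the paper approximates each $|W_\rho(\mathbf b)|$ separately and then passes the logarithm through by uniform continuity. Your ordering yields genuine polynomial approximants, while the paper's route literally produces $\log$ of polynomials. Your use of real and imaginary parts also correctly handles the complex qudit spectrum, which the paper treats as real.
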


\begin{proof}
By definition, each discrete Wigner coefficient can be expressed via the symplectic Fourier transform as
\begin{equation}
W_\rho(\mathbf b)=\sum_{\mathbf a} c_{\mathbf b,\mathbf a}\,\Tr(\rho T_{\mathbf a}),
\end{equation}
where $T_{\mathbf{a}}$ are the Heisenberg--Weyl operators~\cite{Ahmadi_2024}. Thus, each $W_\rho(\mathbf b)$ is strictly a linear combination of Pauli-spectrum expectation values. 

Because the space of valid physical density matrices is compact, the physical domain of these expectation values is a compact subset of $\mathbb{R}^{d^2}$. Since the absolute value map \(x\mapsto |x|\) is continuous on this compact domain, the Stone-Weierstrass theorem guarantees that each \(|W_\rho(\mathbf b)|\) can be uniformly approximated by polynomials in the Pauli spectrum. 

Consequently, the sum $\sum_{\mathbf b}|W_\rho(\mathbf b)|$ is also uniformly approximated by such polynomials. Furthermore, the normalization of the Wigner function ($\sum_{\mathbf{b}} W_\rho(\mathbf{b}) = 1$) ensures that $\sum_{\mathbf b}|W_\rho(\mathbf b)| \geq 1$. Because the logarithm is uniformly continuous on the interval $[1, \infty)$, the composition
\begin{equation}
\mathcal M(\rho)=\log \left( \sum_{\mathbf b}|W_\rho(\mathbf b)| \right)
\end{equation}
preserves this uniform convergence. Therefore, mana is a continuous functional strictly contained within the uniform closure of the polynomial Pauli-spectrum functionals.
\end{proof}

\section{The Shallow Layer Model (Ansatz 1)}
\label{app:shallow_analytic}

We begin by expressing arbitrary stabilizer states using the symplectic formalism. For any binary vectors $(\mathbf{x}, \mathbf{z}) \in \mathbb{F}_2^{2n}$, the $n$-qubit Pauli operator is defined as
\begin{equation}
    P(\mathbf{x}, \mathbf{z}) := i^{\mathbf{x}\cdot \mathbf{z}} X^{\mathbf{x}} Z^{\mathbf{z}},
\end{equation}
where $X^{\mathbf{x}} := X_1^{x_1} \cdots X_n^{x_n}$ and $Z^{\mathbf{z}} := Z_1^{z_1} \cdots Z_n^{z_n}$. The commutation relation between two Pauli operators is dictated by the symplectic inner product $\omega((\mathbf{x}, \mathbf{z}), (\mathbf{x}', \mathbf{z}')) = \mathbf{x}\cdot \mathbf{z'} + \mathbf{z}\cdot \mathbf{x'} \pmod{2}$. Specifically, their product is
\begin{equation}
    P(\mathbf{x}, \mathbf{z}) P(\mathbf{x}', \mathbf{z}')
= i^{\mathbf{z}\cdot \mathbf{x}' - \mathbf{z}'\cdot \mathbf{x}}P(\mathbf{x}+\mathbf{x}', \mathbf{z}+ \mathbf{z}').
\end{equation}

A pure stabilizer state $|\psi\rangle$ is uniquely specified by an $n$-dimensional isotropic (Lagrangian) subspace $\mathbb{L}\subset \mathbb{F}_2^{2n}$, satisfying
\begin{equation}
\omega((\mathbf x, \mathbf z),(\mathbf x', \mathbf z'))=0,
\qquad
\forall\,(\mathbf x, \mathbf z),(\mathbf x', \mathbf z')\in\mathbb{L}.
\end{equation}
Equivalently, choosing a stabilizer generator matrix $G$ whose rows are the symplectic generator labels $(u_i,v_i)$, the projector onto $|\psi\rangle$ may be written as
\begin{equation}
|\psi\rangle\langle\psi|
=
\frac{1}{2^n}
\sum_{a\in\mathbb Z_2^n}
(-1)^{\sum_j h_j a_j+\sum_{k>j}(v_j\!\cdot u_k)a_j a_k}
\,P\!\left(\sum_i a_i (u_i,v_i)\right),
\end{equation}
where the binary phase vector $h\in\mathbb Z_2^n$ records the $\pm1$ signs of the corresponding stabilizer generators, following the notation introduced in the main text.

We evaluate the shallow-layer model by applying a diagonal hierarchy gate $U$ to this state. Since $U$ is diagonal, $U|\mathbf{b}\rangle = \exp(i\theta(\mathbf{b}))|\mathbf{b}\rangle$, where $\theta(\mathbf{b}) = 2\pi\sum_m 2^{-m} f_m(\mathbf{b})$. The Pauli expectation value $a_{P(\mathbf{x},\mathbf{z})} = \langle\psi| U^\dagger P(\mathbf{x},\mathbf{z}) U |\psi\rangle$ evaluates to
\begin{align}
    a_{P(\mathbf{x},\mathbf{z})} &= \sum_{\mathbf{b},\mathbf{b'}\in \mathbb{Z}_2^n} e^{i(\theta(\mathbf{b})-\theta(\mathbf{b'}))} \langle\psi|\mathbf{b}\rangle\langle\mathbf{b'}|P(\mathbf{x},\mathbf{z})|\mathbf{b}\rangle\langle\mathbf{b}|\psi\rangle \notag \\
    &= \sum_{\mathbf{b}\in \mathbb{Z}_2^n} e^{i(\theta(\mathbf{b})-\theta(\mathbf{b\oplus x}))} i^{\mathbf{x}\cdot \mathbf{z}} (-1)^{\mathbf{z}\cdot \mathbf{b}} \langle\mathbf{b\oplus x}|\psi\rangle\langle\psi|\mathbf{b}\rangle. \label{eq:shallow_pauli_exp}
\end{align}

To calculate the matrix element $\langle\mathbf{b\oplus x}|\psi\rangle\langle\psi|\mathbf{b}\rangle$, we substitute the symplectic expansion of the density matrix:
\begin{equation}
    \langle\mathbf{b\oplus x}|\psi\rangle\langle\psi|\mathbf{b}\rangle 
= \frac{1}{2^n} \sum_{a\in \mathbb{Z}_2^n}(-1)^{\sum\limits_j h_j\cdot a_j + \sum\limits_{k>j} (v_j\cdot u_k)a_j a_k } \langle b\oplus x|i^{(\sum_i u_i a_i)\cdot(\sum_i v_i a_i)}X^{\sum_i u_i a_i} Z^{\sum_i v_i a_i}|b\rangle
\end{equation}
Because $\langle \mathbf{b\oplus x}|X^{\mathbf{u}} Z^{\mathbf{v}}|\mathbf{b}\rangle = (-1)^{\mathbf{b}\cdot \mathbf{v}} \delta_{\mathbf{x},\mathbf{u}}$, the sum vanishes entirely unless $\mathbf{x}$ lies within the $X$-projection of the Lagrangian subspace, denoted $\mathbb{L}_x$. 

If $\mathbf{x} \notin \mathbb{L}_x$, then $a_{P(\mathbf{x},\mathbf{z})} = 0$. If $\mathbf{x} \in \mathbb{L}_x$, there exists a representative element $(\mathbf{x},\mathbf{z}_{\text{ref}}) \in \mathbb{L}$. The set of all elements in $\mathbb{L}$ with $X$-component $\mathbf{x}$ can be parameterized as $(\mathbf{x},\mathbf{z}_{\text{ref}}) \oplus \mathcal{Z}$, where $\mathcal{Z}$ is the subgroup of $r$ pure $Z$-type stabilizers generated by $\{(0, \mathbf{z}_i)\}_{i=1}^r$. By expanding the sum over the Boolean coefficients $\mathbf{c} \in \mathbb{F}_2^r$ of these $Z$-generators, the overlap becomes
\begin{equation}
    \langle\mathbf{b\oplus x}|\psi\rangle\langle\psi|\mathbf{b}\rangle = \frac{1}{2^n} \left[ (-1)^{s_0+ \mathbf x\cdot \mathbf z_{ref} +\mathbf b\cdot \mathbf z_{ref}}
    i^{\mathbf{x}\cdot \mathbf{z}_{\text{ref}}} \right] \sum_{\mathbf{c} \in \mathbb{F}_2^r} (-1)^{\sum\limits_{i=1}^r c_i A_i},
\end{equation}
where $A_i = h_i \oplus (\mathbf{b}\cdot \mathbf{z}_i) \oplus k_i$, and $k_i \in \{0,1\}$ corrects for the symplectic phase $\mathbf{x}\cdot \mathbf{z}_i = 2k_i \pmod 4$.

Exploiting the orthogonality of binary group characters, the summation over $\mathbf{c}$ cleanly factors:
\begin{equation}
    \sum_{\mathbf{c}\in\mathbb{F}_2^r} (-1)^{\sum\limits_{i=1}^r c_i A_i} = \prod_{i=1}^r \left(1+(-1)^{A_i}\right) = \begin{cases} 2^r & \text{if } A_i = 0 \quad \forall i \in \{1,\dots,r\}, \\ 0 & \text{otherwise}. \end{cases}
\end{equation}
Consequently, the amplitude enforces a strict affine condition on $\mathbf{b}$. Substituting this back into Eq.~\eqref{eq:shallow_pauli_exp}, we arrive at the exact analytical expression for the Pauli spectrum:
\begin{equation}
    a_{P(\mathbf x,\mathbf z)}
=
\sum_{b \in \mathbb{Z}_2^n}
e^{i(\theta(b)-\theta(b\oplus x))}
(-1)^{z\cdot b}
\frac{2^{r}}{2^n}\biggl[(-1)^{s_0+ \mathbf x\cdot \mathbf z_{ref} +\mathbf b\cdot \mathbf z_{ref}}i^{\mathbf x\cdot \mathbf z_{ref}}\bigg] \prod_i^r \delta(0,s_i \oplus (\mathbf b\cdot \mathbf z_i) \oplus k_i)
\label{eq: shallow_model_pauli_spectrum}
\end{equation}
When evaluating $|a_{P(\mathbf{x},\mathbf{z})}|^2$, the gauge-dependent global phases inherent to $\mathbf{z}_{\text{ref}}$ strictly cancel out, confirming that the magnitude of the spectrum relies solely on $\mathbf{x}$ and the underlying stabilizer group structure.

\section{Consistency Checks for the Pauli Spectrum}
\label{app:consistency_checks}

To verify the integrity of our generalized analytical derivations, we benchmark the spectrum formula against several fundamental constraints and established limiting cases.

\subsection{Normalization Condition}
We first verify that the calculated quantities strictly satisfy the Pauli spectrum normalization condition, $\sum_{(\mathbf{x},\mathbf{z})\in\mathbb{F}_2^{2n}} |a_{P(\mathbf{x},\mathbf{z})}|^2 = 2^n$, across the entire parameter space. Starting from our derived amplitude:
\begin{align}
\sum_{(\mathbf{x},\mathbf{z})\in\mathbb{F}_2^{2n}}|a_{P(\mathbf{x},\mathbf{z})}|^2
&= \sum_{\mathbf{x}\in\mathbb{L}_x}\sum_{\mathbf{b},\mathbf{b}' \in \mathbb{Z}_2^n}
e^{i(\theta(\mathbf{b})-\theta(\mathbf{b}\oplus \mathbf{x}) - \theta(\mathbf{b}') + \theta(\mathbf{b}'\oplus \mathbf{x}))} \notag \\
&\quad \times \frac{2^{2r}}{2^{2n}} \prod_{i=1}^r \delta_{0,\, s_i \oplus (\mathbf{b}\cdot \mathbf{z}_i) \oplus k_i} \prod_{j=1}^r \delta_{0,\, s_j \oplus (\mathbf{b}'\cdot \mathbf{z}_j) \oplus k_j} \sum_{\mathbf{z}\in\mathbb{Z}_2^n} (-1)^{\mathbf{z}\cdot (\mathbf{b}+\mathbf{b}')}.
\end{align}
The summation over $\mathbf{z}$ evaluates to $2^n \delta_{\mathbf{b},\mathbf{b}'}$. Enforcing $\mathbf{b} = \mathbf{b}'$ strictly cancels the complex phase terms (i.e., $e^0 = 1$), simplifying the expression to:
\begin{align}
\sum_{(\mathbf{x},\mathbf{z})\in\mathbb{F}_2^{2n}}|a_{P(\mathbf{x},\mathbf{z})}|^2
&= \sum_{\mathbf{x}\in\mathbb{L}_x} \frac{2^{2r}}{2^{n}} \sum_{\mathbf{b} \in \mathbb{Z}_2^{n}} \prod_{i=1}^r \delta_{0,\, s_i \oplus (\mathbf{b}\cdot \mathbf{z}_i) \oplus k_i}.
\end{align}
The product of Kronecker delta functions evaluates to 1 if and only if $\mathbf{b}\cdot \mathbf{z}_i = s_i \oplus k_i$ for all $i \in \{1, \dots, r\}$. Because the $r$ pure-$Z$ generators $\mathbf{z}_i$ are linearly independent, this imposes exactly $r$ linearly independent constraints on the vector $\mathbf{b}$. By the rank-nullity theorem, the solution space for $\mathbf{b}$ forms an affine subspace of dimension $n-r$, yielding exactly $2^{n-r}$ non-zero terms in the inner sum. Consequently:
\begin{align}
\sum_{(\mathbf{x},\mathbf{z})\in\mathbb{F}_2^{2n}}|a_{P(\mathbf{x},\mathbf{z})}|^2
= \sum_{\mathbf{x}\in\mathbb{L}_x} \frac{2^{2r}}{2^{n}} 2^{n-r} = \sum_{\mathbf{x}\in\mathbb{L}_x} 2^{r}.
\end{align}
Finally, the cardinality of the projected $X$-subspace $\mathbb{L}_x$ is intrinsically tied to the number of pure-$Z$ generators; specifically, $|\mathbb{L}_x| = 2^{n-r}$. Thus, the final summation evaluates to:
\begin{align}
\sum_{(\mathbf{x},\mathbf{z})\in\mathbb{F}_2^{2n}}|a_{P(\mathbf{x},\mathbf{z})}|^2 = 2^{n-r} \cdot 2^{r} = 2^n,
\end{align}
which perfectly preserves the global trace and normalizes the physical probability distribution.

\subsection{Consistency with Pure-\texorpdfstring{$Z$}{Z} Operators}
Since a diagonal gate $U$ commutes with pure-$Z$ Pauli strings $P(\mathbf{0},\mathbf{z})$, their expectation values should trivially match those of the unperturbed stabilizer state $|\psi\rangle$. Setting $\mathbf{x}=\mathbf{0}$ in our general formula naturally eliminates the diagonal phase polynomial:
\begin{align}
    a_{P(\mathbf{0},\mathbf{z})}
&= \sum_{\mathbf{b} \in \mathbb{Z}_2^n} \frac{2^{r}}{2^n} (-1)^{\mathbf{b}\cdot \mathbf{z}} \prod_{i=1}^r \delta_{0,\, s_i \oplus (\mathbf{b}\cdot \mathbf{z}_i)}.
\end{align}
This exactly matches the standard calculation for a pure stabilizer state:
\begin{equation}
    a_{P(\mathbf{0},\mathbf{z})} = \langle\psi| Z^{\mathbf{z}} |\psi\rangle = \sum_{\mathbf{b} \in \mathbb{Z}_2^n} (-1)^{\mathbf{z}\cdot \mathbf{b}} |\langle \mathbf{b}|\psi\rangle|^2.
\end{equation}
It is a well-established result~\cite{Aaronson_2004,garcía2017geometrystabilizerstates} that the probability distribution of a stabilizer state in the computational basis is uniform over its support subspace:
\begin{equation}
    |\langle \mathbf{b}|\psi\rangle|^2 = \begin{cases} 2^{r-n} & \text{if } \mathbf{b}\cdot \mathbf{z}_i = s_i \quad \forall i \in \{1,\dots,r\} \\ 0 & \text{otherwise}. \end{cases}
\end{equation}
Substituting this overlap confirms our general formula. In the trivial case where $\mathbf{z}=\mathbf{0}$ (the Identity operator), we find $a_{P(\mathbf{0},\mathbf{0})} = 2^{r-n} \cdot 2^{n-r} = 1$, confirming trace preservation.

\subsection{Limiting Case: Single-Qubit Rotations and Multi-Control Z Gates}
We further benchmark our formula against previously known bounds for specific diagonal gates. Without loss of generality, let $U_k = \text{diag}(1, e^{2\pi i/2^k})$ be a phase shift gate acting on the $n^{\text{th}}$ qubit. The generalized Pauli strings can be factored as $P = Q \otimes P_{n^{\text{th}}}$, where $Q \in \mathcal{P}_{n-1}$. Under similarity transformation by $U_k$, the Pauli expansion splits strictly into two families based on commutation~\cite{Garcia_2023}:

\begin{enumerate}
    \item \textbf{Commuting Set} ($\mathcal{C}$): For $P_{n^{\text{th}}} \in \{I, Z\}$, $[P, U_k] = 0$. Since $|\psi\rangle$ is a stabilizer state, $|a_{P}|^2 \in \{0, 1\}$.
    \item \textbf{Anticommuting Set} ($\mathcal{A}$): For $P_{n^{\text{th}}} \in \{X, Y\}$, the similarity transformation generates a continuous rotation in the $X$-$Y$ plane:
\begin{equation}
    U_k^\dagger X_n U_k = \cos\!\left(\frac{2\pi}{2^k}\right) X_n - \sin\!\left(\frac{2\pi}{2^k}\right) Y_n.
\end{equation}
This yields expectation values strictly proportional to $\cos^2(2\pi /2^k)$ or $\sin^2(2\pi /2^k)$ depending on whether $Q \otimes X_n$ or $Q \otimes Y_n$ resides in the stabilizer group.
\end{enumerate}

Our analytical polynomial identically recovers this behavior. For a single-qubit phase shift, the polynomial term simplifies to $e^{i(\theta(\mathbf{b})-\theta(\mathbf{b}\oplus \mathbf{x}))} = \exp\!\left(2\pi i \frac{b_n - (b_n \oplus x_n)}{2^k}\right)$. If $x_n = 0$ (the commuting set), the phase evaluates to $1$, yielding purely boolean amplitudes $|a_{P(\mathbf{x},\mathbf{z})}|^2 \in \{0,1\}$. If $x_n = 1$ (the anticommuting set), evaluating the magnitude squared gives:
\begin{equation}
|a_{P(\mathbf{x},\mathbf{z})}|^2 = \begin{cases}
    \cos^2(2\pi/2^k) & \text{for } z_n = 0,\, (\mathbf{x},\mathbf{z}) \in \mathbb{L}\\
    \sin^2(2\pi/2^k) & \text{for } z_n = 1,\, (\mathbf{x},\mathbf{z}) \notin \mathbb{L}\\
    0 & \text{otherwise},
\end{cases}
\end{equation}
in perfect agreement with the geometric derivation.

Finally, for the higher-hierarchy multi-controlled gate $C^{k-1}Z$ acting on the initial graph state $|+\rangle^{\otimes k}$, the phase polynomial takes the canonical Boolean form $e^{i(\theta(\mathbf{b})-\theta(\mathbf{b}\oplus \mathbf{x}))} = \exp\!\left( \pi i \left[\prod_{i=1}^k b_i - \prod_{j=1}^k (b_j+x_j)\right] \right)$. Because $|+\rangle^{\otimes k}$ has only pure-$X$ stabilizers ($r=0$), our analytical formula smoothly collapses to:
\begin{equation}
    a_{P(\mathbf{x},\mathbf{z})} = \frac{1}{2^k} \sum_{\mathbf{b} \in \mathbb{Z}_2^k} (-1)^{\mathbf{z}\cdot \mathbf{b}} \exp\!\left( \pi i \left[\prod_{i=1}^k b_i - \prod_{j=1}^k (b_j+x_j)\right] \right),
\end{equation}
which identically reproduces the Walsh-Hadamard transform representations foundational to Ref.~\cite{Beverland_2020}.

\section{Magic Bounds of The Shallow Layer Model}
\label{app:shallow_bound}

Since this geometric bounding relies entirely on the underlying phase polynomial structure of the diagonal hierarchy gate $U \in \mathcal{D}_n^{(k)}$, these bounds are strictly universal and completely independent of the specific polynomial coefficients used to define a given magic measure, unlike the case of the Clifford Hierarchy.

\subsection{Lower Bound: Zero Magic Remains Possible}
 Zero magic generation here implies that $U$ maps a stabilizer to another stabilizer state, which is the upper bound of $F_\alpha$. 

\begin{theorem}[Stabilizer  Maximum bounds and Zero-Magic]
\label{thm:zero_magic_diagonal}
Let $|\psi\rangle \in \mathrm{STAB}_n$ be an $n$-qubit stabilizer state with $r$ independent pure-$Z$ generators.

\begin{enumerate}
    \item \textbf{State-to-gate direction.}
    If $r \ge k-2$, then there exists a diagonal hierarchy gate 
$    U \in \mathcal D_n^{(k)} \setminus \mathcal D_n^{(k-1)}$
such that
$  U|\psi\rangle \in \mathrm{STAB}_n.$

    \item \textbf{Gate-to-state direction.}
    Conversely, let $U \in \mathcal D_n^{(k)} \setminus \mathcal D_n^{(k-1)}.$
    If $n \ge k-1$, then there exists a nontrivial stabilizer state
    $
    |\psi\rangle \in \mathrm{STAB}_n,
    $ (not computational basis)
    such that
    $U|\psi\rangle \in \mathrm{STAB}_n.$
\end{enumerate}
\end{theorem}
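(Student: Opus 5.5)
The plan is to work in the computational basis, where a stabilizer state is supported on an affine subspace and a diagonal gate simply attaches phases to the basis strings in that support. Both directions then reduce to making the restriction of the phase function $\theta(\mathbf b)=2\pi\sum_m f_m(\mathbf b)/2^m$ to the support of $|\psi\rangle$ ``Clifford-compatible''. By the amplitude analysis of Appendix~\ref{app:shallow_analytic}, $|\psi\rangle$ is, up to phases, a uniform superposition over
\[
A=\{\mathbf b:\ \mathbf b\cdot\mathbf z_i=s_i\ \ \forall i\le r\},
\qquad \dim A=n-r .
\]
The relative phases on $A$ form a quadratic-form phase of the kind produced by $S$ and $CZ$ gates. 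Hence $U|\psi\rangle$ is a stabilizer state whenever $\theta|_A$ coincides, up to a global phase, with the phase of some diagonal Clifford. A sufficient condition is that $\theta|_A$ is constant, or that $A$ is a line on which the single phase difference lies in $\tfrac{\pi}{2}\mathbb Z$.

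\textbf{State-to-gate direction.} First I dispose of the trivial cases. If $k\le 2$, any genuinely level-$k$ diagonal gate is Clifford and preserves $\mathrm{STAB}_n$. So assume $k\ge3$, hence $r\ge k-2\ge1$. Next I normalize. There is an invertible linear relabeling $L\in GL(n,\mathbb F_2)$ of basis strings, realized by a CNOT circuit, together with Pauli $X$ flips, that maps $A$ to $\{\mathbf b: b_1=\dots=b_r=0\}$. Conjugation by a Clifford preserves both membership in $\mathcal D_n^{(k)}\setminus\mathcal D_n^{(k-1)}$ and membership in $\mathrm{STAB}_n$, and conjugation by a permutation-type Clifford keeps diagonal gates diagonal. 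It therefore suffices to construct the gate in these normalized coordinates.

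In normalized coordinates I take a single monomial containing the frozen variable $b_1$, namely $c\,b_1 b_{2}\cdots b_{w}$ at resolution index $m$, chosen so that $(m-1)+w=k$ with $c$ odd. Explicitly, take $w=\min(k,n)$ and $m=k-w+1$. By Eq.~\eqref{eq:hierarchy} this gate is genuinely of level $k$. Since $b_1=0$ on $A$, the phase is identically $1$ there, so $U|\psi\rangle=|\psi\rangle$. Undoing the relabeling yields the required gate for the original $|\psi\rangle$. In fact this argument only uses $r\ge1$, which is weaker than the hypothesis $r\ge k-2$.

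\textbf{Gate-to-state direction.} Here the gate $U$ is fixed, so I will choose $|\psi\rangle$ with $r=n-1$, i.e.\ $|\psi\rangle=(|\mathbf b\rangle+|\mathbf b'\rangle)/\sqrt2$ with $\mathbf b\ne\mathbf b'$. Every such state is a stabilizer state (GHZ-type) and is not a computational-basis state. The output $(e^{i\theta(\mathbf b)}|\mathbf b\rangle+e^{i\theta(\mathbf b')}|\mathbf b'\rangle)/\sqrt2$ is again a stabilizer state provided $\theta(\mathbf b)-\theta(\mathbf b')\in\tfrac{\pi}{2}\mathbb Z$, since a relative phase in $\{1,i,-1,-i\}$ can be implemented by an $S$-power on one qubit where $\mathbf b,\mathbf b'$ differ, conjugated by CNOTs.

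To find such a pair I use pigeonhole. Every nonconstant term of a level-$k$ polynomial has $\mathrm{wt}(\mathbf a)\ge1$, so $m\le k$ and $\theta$ takes values in $\tfrac{2\pi}{2^{k}}\mathbb Z$ up to a global constant. Modulo $\tfrac{\pi}{2}$ there are therefore at most $2^{k-2}$ classes. The hypothesis $n\ge k-1$ gives $2^n\ge 2^{k-1}>2^{k-2}$, so two distinct strings $\mathbf b,\mathbf b'$ fall in the same class.

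The main obstacle I anticipate is making the Clifford-compatibility criterion airtight. In the first direction one must check that level is invariant under the CNOT relabeling, and in particular that the rewritten polynomial is genuinely level $k$ rather than collapsing to a lower level. I would verify this by arguing via conjugation invariance of $\mathcal C^{(k)}$ rather than by tracking monomials through Eq.~\eqref{eq:hierarchy}. A secondary point is handling the global constant term so that the pigeonhole count uses exactly $m\le k$.
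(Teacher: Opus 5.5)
Your proof is correct. It takes a tighter route than the paper's, and the routes genuinely diverge in the gate-to-state direction.

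For the state-to-gate direction you use the same mechanism as the paper: Clifford-normalize the support to frozen coordinates, then pick a level-$k$ phase that vanishes on them. You make this explicit with a single monomial $c\,b_1b_2\cdots b_w$ at resolution $m=k-w+1$ that contains one frozen variable, so $U|\psi\rangle=|\psi\rangle$. This shows that $r\ge 1$ already suffices once $k\ge 3$. The paper instead asks the top-level term to depend on $k-2$ frozen variables and leaves the residual ``Clifford-compatible'' phase unspecified.

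For the gate-to-state direction, the paper obtains $n\ge k-1$ by combining $r\ge k-2$ from part~1 with the nontriviality condition $r\le n-1$. But part~1 only supplies \emph{some} level-$k$ gate for a given state, so that combination does not by itself handle an arbitrary fixed $U$. Your pigeonhole argument does handle it. A level-$k$ diagonal gate has phases in $\theta_0+\tfrac{2\pi}{2^k}\mathbb Z$, so there are at most $2^{k-2}$ classes modulo $\pi/2$. The hypothesis $n\ge k-1$ gives $2^n>2^{k-2}$ strings, so two of them collide, and $(|\mathbf b\rangle+|\mathbf b'\rangle)/\sqrt2$ on a colliding pair is mapped to a stabilizer state. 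This proves the claim for every fixed $U$ and shows exactly where the threshold $n\ge k-1$ comes from.

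The only external input is the phase quantization, which you take from Eq.~\eqref{eq:hierarchy}. It can also be checked directly by induction on $k$: use $UX_jU^\dagger X_j\in\mathcal D_n^{(k-1)}$ together with the antisymmetry of $\theta(\mathbf b)-\theta(\mathbf b\oplus\mathbf e_j)$ under $\mathbf b\mapsto\mathbf b\oplus\mathbf e_j$.
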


\begin{proof}
We first observe that the number of pure-$Z$ generators, $r$, restricts the dimension of the state's coherent superposition. Any $n$-qubit stabilizer state with $r$ independent pure-$Z$ generators lies in the orbit of the canonical state
$|\phi\rangle = |0\rangle^{\otimes r}\otimes |+\rangle^{\otimes (n-r)}$
under the subgroup of Clifford operations that preserves the pure-$Z$ sector, namely those with symplectic form
\begin{equation}
    M=\begin{pmatrix}A&0\\ C&D\end{pmatrix},
\end{equation}
generated by $\mathrm{CX}$, $\mathrm{CZ}$, and $S$ gates. Let $U_C$ be such a Clifford operation such that
$|\psi\rangle = U_C |\phi\rangle.$
Since this subgroup normalizes the diagonal hierarchy, we obtain
\begin{equation}
    U|\psi\rangle \in \mathrm{STAB}_n
\iff
V|\phi\rangle \in \mathrm{STAB}_n,
\end{equation}
where
$V = U_C^\dagger U U_C \in \mathcal{D}_n^{(k)}\setminus \mathcal{D}_n^{(k-1)}$ remains a strictly level-$k$ diagonal hierarchy gate.

The action of $V$ on $|\phi\rangle$ applies a phase polynomial to the computational basis states. Because the first $r$ qubits are fixed in the $|0\rangle$ state, the phase function is restricted to the remaining $n-r$ variables. The output state takes the form
\begin{equation}
    V|\phi\rangle = |0\rangle^{\otimes r} \otimes  \frac{1}{\sqrt{2^{n-r}}} \sum_{b_x \in \mathbb{Z}_2^{n-r}} \exp\Biggl(2\pi i \sum_m \frac{f_m(\mathbf 0^r,\mathbf b_x)}{2^m}\Biggr) |\mathbf b_x\rangle ,
\end{equation}
where $f_m(0^r,b_x)$ denotes the restricted Boolean polynomial.

To make the mechanism transparent, we isolate the highest hierarchy contribution and write
\begin{equation}
    f(\mathbf b)=\sum_{m=1}^{k}\frac{f_m(\mathbf b)}{2^m},
\qquad
f_m:\mathbb Z_2^n\to \mathbb Z_{2^m},
\end{equation}
with $f_k\not\equiv 0$, so that the gate is genuinely level $k$. Splitting the variables as $\mathbf b=(\mathbf y,\mathbf b_x)$, where $\mathbf y\in\mathbb Z_2^r$ are fixed to $0^r$ on the support of the input stabilizer state, the restricted phase is $f(\mathbf0^r,\mathbf b_x)$.

For the output to remain a stabilizer state, this restricted phase must reduce modulo $1$ to a Clifford phase, equivalently to a $\mathbb Z_4$-valued quadratic form. Thus, the role of the frozen variables is to eliminate the excess non-Clifford structure carried by the level-$k$ term. In particular, if the highest-level contribution is chosen to depend on at least $k-2$ frozen variables, then setting $\mathbf y = 0^r$ annihilates that term, while the restricted phase remains Clifford-compatible. This yields the sufficient condition $r\ge k-2$.

The same mechanism can also be read in the opposite direction. For any fixed genuinely level-$k$ diagonal gate
\begin{equation}
    U \in \mathcal D_n^{(k)} \setminus \mathcal D_n^{(k-1)},
\end{equation}
if $n \ge k-1$, then one can always choose a nontrivial stabilizer input, namely one that is not a computational-basis state and therefore satisfies $r\le n-1$, such that
\begin{equation}
    U|\psi\rangle \in \mathrm{STAB}_n.
\end{equation}
Combining the condition from Theorem~\ref{thm:zero_magic_shallow} with the nontriviality condition $r\le n-1$ shows that such a stabilizer input exists whenever $n\ge k-1$. Thus, the zero-magic phenomenon is not only a property of certain input families at a fixed hierarchy level, but also a generic obstruction for any fixed diagonal gate once the input stabilizer geometry is chosen appropriately.
\end{proof}

The minimum magic generation for an arbitrary stabilizer state in our shallow-circuit model implies that there always exists a diagonal hierarchy gate $U \in \mathcal{D}_n^{(k)} \setminus \mathcal{D}_n^{(k-1)}$ return input stabilizer state back the stabilizer polytope with $r \geq k-2$. This establishes the absolute lower bound for our operational magic family, where the underlying phase polynomial functional achieves its theoretical maximum, $F_{\alpha}(U|\psi\rangle) = 2^{n\alpha}$, yielding a minimum macroscopic magic value of $\mathcal{Q}(U|\psi\rangle) = 0$ for all valid operational measures.

\subsection{Upper Bound: Maximal Magic Requires Graph-State Geometry}
We now invert this perspective to investigate the maximum possible magic generation within this shallow-layer model. For any given experimental measure within our operational family, this corresponds to identifying the absolute upper bound of the quantum magic $\mathcal{Q}(U|\psi\rangle)$, which is mathematically driven by evaluating the absolute lower bound of the underlying functional $F_\alpha$ for all $\alpha \geq 2$.

\begin{theorem}[Flat Minimum and the Graph-State Condition]
\label{thm:flat_minimum_graph_state}
Let $|\psi\rangle \in \mathrm{STAB}_n$ be an $n$-qubit stabilizer state characterized by $r$ linearly independent pure $Z$-generators, and let $U \in \mathcal{D}_n^{(k)} \setminus \mathcal{D}_n^{(k-1)}$ be a strictly level-$k$ diagonal hierarchy gate. Consider the operational magic functional in Eq.~\eqref{eq:magic polynominal}
subject to the normalized constraint 
\begin{equation}
    \sum_{\mathbf{x},\mathbf{z}} |a_{P(\mathbf{x},\mathbf{z})}|^2 = 2^n
\end{equation}

If the functional $F_\alpha$ attains its global minimum under the aforementioned quadratic constraint, then the following structural conditions must hold:
\begin{enumerate}
    \item \textbf{Spectrum Flatness:} The Pauli spectrum is perfectly flat across all valid non-identity elements, satisfying
    \begin{equation}
            |a_{P(\mathbf{x},\mathbf{z})}|^2 = \frac{2^n - 1}{4^n - 2^n} \qquad \text{for all } (\mathbf{x} \neq \mathbf{0}^n, \mathbf{z}) \in \mathbb{Z}_2^{2n};
    \end{equation}

    \item \textbf{Generator Constraint:} The initial stabilizer state must strictly satisfy $r=0$, meaning the stabilizer group $\mathcal{S}$ contains no non-trivial pure $Z$-elements;
    \item \textbf{Graph-State Equivalence:} Under a change of stabilizer generators, the initial state admits generators of the canonical form
    \begin{equation}
                g_i = (-1)^{s_i} X_i \prod_{j=1}^n Z_j^{A_{ij}}, \qquad i=1,\dots,n,
    \end{equation}
    where the adjacency matrix satisfies $A = A^T \in \mathcal{M}_n(\mathbb{Z}_2)$. Consequently, the initial state is strictly of graph-state type, and in particular, is local-Clifford equivalent to a graph state.
\end{enumerate}
\end{theorem}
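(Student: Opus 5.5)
The plan is to treat the statement as a constrained-minimization claim about the vector of squared Pauli coefficients $p_{\mathbf x,\mathbf z}:=|a_{P(\mathbf x,\mathbf z)}|^2$. Since the theorem is conditional (\emph{if} $F_\alpha$ attains the global minimum, then items 1--3 hold), it suffices to do three things: compute the relaxed minimum, show it forces flatness, and show that flatness forces $r=0$ and hence graph form. I would organize the argument around the splitting of $\mathbb F_2^{2n}$ into the pure-$Z$ sector $\mathbf x=\mathbf 0$ and the complementary sector $\mathbf x\neq\mathbf 0$.

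\textbf{Step 1 (the frozen sector).} I would first use the consistency check of Appendix~\ref{app:consistency_checks}: $U$ commutes with every $P(\mathbf 0,\mathbf z)$, so the $\mathbf x=\mathbf 0$ coefficients are exactly those of the input stabilizer state. Thus $p_{\mathbf 0,\mathbf z}=1$ on the $2^r$ pure-$Z$ stabilizers (including the identity) and $p_{\mathbf 0,\mathbf z}=0$ otherwise. This sector contributes exactly $2^r$ to $F_\alpha$, and by the normalization $\sum p=2^n$ the remaining weight $2^n-2^r$ must sit on the $4^n-2^n$ coordinates with $\mathbf x\neq\mathbf 0$.

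\textbf{Step 2 (Hölder / power-mean bound).} For $\alpha\ge2$ the map $t\mapsto t^\alpha$ is strictly convex. By Jensen (equivalently, Hölder with exponents $\alpha$ and $\alpha/(\alpha-1)$), on any index set $S$ of size $N$ with $\sum_{S}p=W$ we have $\sum_{S}p^\alpha\ge W^\alpha/N^{\alpha-1}$, with equality iff all $p$ on $S$ are equal. Applying this with $S=\{\mathbf x\neq\mathbf 0\}$ gives
\begin{equation}
F_\alpha\ge 2^r+\frac{(2^n-2^r)^\alpha}{(4^n-2^n)^{\alpha-1}}=:g(r).
\end{equation}
A short monotonicity check shows $g$ is increasing in $r$ on $\{0,\dots,n\}$. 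Intuitively, moving weight from the flat sector into the frozen $\{0,1\}$-valued sector only raises the moment; concretely, $g(r+1)-g(r)\ge 2^r-2^r\cdot(\text{flat value})^{\alpha-1}>0$ because the flat value $\le 2^{-n}<1$. Hence the global minimum is $g(0)=1+(2^n-1)2^{n(1-\alpha)}$, which is Eq.~\eqref{eq:shallow_max_magic}. Equality requires $r=0$ together with equality in Jensen, i.e.\ $p_{\mathbf x,\mathbf z}=(2^n-1)/(4^n-2^n)=2^{-n}$ for all $\mathbf x\neq\mathbf 0$. This yields item~1.

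\textbf{Step 3 (support obstruction, giving item~2 independently).} As a second, structural route to $r=0$, I would invoke the exact spectrum Eq.~\eqref{eq: shallow_model_pauli_spectrum}: $a_{P(\mathbf x,\mathbf z)}=0$ whenever $\mathbf x\notin\mathbb L_x$, and $|\mathbb L_x|=2^{n-r}$. If $r>0$, then some $\mathbf x\neq\mathbf 0$ has an entirely vanishing sector, contradicting the flatness of item~1. So flatness alone already forces $r=0$. I would present this argument because it ties the conclusion to the Kronecker-delta mechanism described in the main text.

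\textbf{Step 4 (graph form, item~3).} With $r=0$, the $X$-block $X_M$ of the generator matrix is an invertible $n\times n$ matrix. Row reduction (a change of generators) brings it to the form $(I\,|\,A)$, i.e.\ generators $g_i=\pm X_iZ^{A_i}$. Pairwise commutation of $g_i$ and $g_j$ reads $A_{ij}+A_{ji}=0 \pmod 2$, so the off-diagonal part of $A$ is symmetric. Nonzero diagonal entries $A_{ii}$ correspond to $Y_i$ factors, which are removed by local $S$ gates. This gives local-Clifford equivalence to a graph state, citing~\cite{Hein_2004}.

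\textbf{Main obstacle.} The delicate step is Step~2's identification of the minimum ``allowed by the shallow-layer ansatz''. Hölder gives only a relaxation bound, and whether an actual $U\in\mathcal D_n^{(k)}$ reaches exact flatness is a separate question. Since the theorem is stated conditionally on attainment, I would deliberately phrase the argument as necessity only. The remaining care is the monotonicity of $g(r)$, which I would verify by the elementary difference estimate above rather than by calculus.
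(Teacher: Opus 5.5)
Your argument is correct in substance. It uses the same ingredients as the paper's proof: the frozen $\mathbf x=\mathbf 0$ sector, Jensen on the remaining $4^n-2^n$ coordinates, and the Kronecker-delta support obstruction. You arrange them more rigorously, though.

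The paper argues qualitatively. Jensen favours a uniform spectrum, the deltas confine the support when $r>0$, ``so set $r=0$'', and only then does it evaluate the flat value at $r=0$. It never compares the best attainable $F_\alpha$ for $r\ge1$ with the $r=0$ value. Your $r$-dependent bound $g(r)$, in which the frozen sector contributes exactly $2^r$, supplies exactly that comparison. Attaining $g(0)$ then forces $r=0$, and Jensen's equality case forces flatness.

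Consequently your Step 3 is not an independent route, since you obtain flatness only jointly with $r=0$. As an implication from item 1 to item 2 it is correct, and it is precisely the paper's delta mechanism.

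For item 3 the paper uses a different argument: $r=0$ states are exactly $|+\rangle^{\otimes n}$ acted on by diagonal Cliffords, i.e.\ quadratic phase polynomials. Your row reduction of the invertible $X$-block gives the canonical generators directly. Your treatment of $A_{ii}=1$ as $Y$ factors removed by local $S$ gates is also more careful than the real-sign form written in the statement.

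One step needs repair: the displayed estimate $g(r+1)-g(r)\ge 2^r-2^r(\text{flat value})^{\alpha-1}$ is false as written. Set $W=2^n-2^r$ and $N=4^n-2^n$. Removing mass $2^r$ from the flat sector lowers $W^\alpha/N^{\alpha-1}$ by $\alpha\,\xi^{\alpha-1}2^r/N^{\alpha-1}$ for some $\xi\in(W-2^r,W)$, by the mean value theorem. So your bound is missing a factor $\alpha$. For instance, $n=2$, $\alpha=2$, $r=0$ gives $g(1)-g(0)=7/12<3/4$.

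The correct bound is $g(r+1)-g(r)>2^r\bigl(1-\alpha\,p_r^{\alpha-1}\bigr)$ with $p_r=W/N\le 2^{-n}$. Its positivity needs $\alpha\,2^{-n(\alpha-1)}\le 1$, which holds for $n\ge1$ and $\alpha\ge2$ because $\alpha\le 2^{\alpha-1}$. The mere observation that the flat value is below $1$ is not enough. With this correction $g$ is strictly increasing in $r$, and your proof of the conditional statement goes through.
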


\begin{proof}
To minimize the strictly convex operational functional $F_\alpha$, Jensen's inequality dictates that the Pauli spectrum must be distributed as uniformly as possible. As established in Eq.~\eqref{eq: shallow_model_pauli_spectrum}, the Kronecker delta imposes $r$ linearly independent geometric constraints on the summation, artificially confining the spectrum's support to an $(n-r)$-dimensional subspace. To eliminate this bottleneck and maximally disperse the probability distribution, we must strictly set $r=0$. 

To explicitly evaluate this flat minimum, we note that diagonal operators commute with pure $Z$-type Pauli strings. Consequently, the $\mathbf{x} = \mathbf{0}^n$ subspace remains invariant: $|a_{P(\mathbf{0}^n,\mathbf{z})}|^2 = |\langle \psi | Z(\mathbf{z}) | \psi \rangle|^2$. Because an $r=0$ state contains no non-trivial pure $Z$-elements, this yields exactly $1$ for the identity and $0$ otherwise. Applying the purity constraint $\sum |a_{P}|^2 = 2^n$, the remaining probability weight for the non-identity subspace is exactly $2^n - 1$. Distributing this weight uniformly across the $4^n - 2^n$ available non-zero elements enforces the spectrum flatness condition:
\begin{equation}
       |a_{P(\mathbf{x} \neq \mathbf{0}^n,\mathbf{z})}|^2 = \frac{2^n - 1}{4^n-2^n} = 2^{-n}.
    \label{eq: flatness}
\end{equation}

Any $n$-qubit stabilizer state satisfying $r=0$ can be constructed in the symplectic formalism by applying diagonal Clifford operations to the unentangled product state $|+\rangle^{\otimes n}$. Because diagonal Clifford gates act only to shift degree-2 phase polynomials, they natively generate graph states when applied to $|+\rangle^{\otimes n}$. Thus, the optimal initial configuration for maximum magic generation is strictly local-Clifford equivalent to a graph state.
\end{proof}

Substituting this perfectly flat distribution into the operational functional establishes the absolute lower bound:
\begin{equation}
    F_\alpha(U|\psi\rangle) \geq 1 + (2^{n} - 1) 2^{n (1-\alpha)}.
\end{equation}This theoretical minimum is physically saturable for specific even-parity or single-qubit systems. For $n=1$, applying the $T$ gate to $|+\rangle$ exactly yields $F_\alpha = 1 + 2^{1-\alpha}$. For $n=2$, applying the controlled-phase ($CS$) gate to $|+\rangle^{\otimes 2}$ generates a perfectly flat non-identity spectrum, saturating the bound at $F_\alpha = 1 + 12 \cdot 2^{-2\alpha}$.

\section{\texorpdfstring{$N$}{N}-layer SQR Model}
\label{app:deep_STAR_analytic}
Motivated by the STAR architecture, a key design principle is to avoid the substantial space–time overhead associated with magic-state distillation.~\cite{PRXQuantum.5.010337} Instead, the non-Clifford layer is implemented using arbitrary-angle Pauli-$Z$ rotations as native operations, enabled by the $[[4,1,1,2]]$ subsystem code together with relatively clean state injection. By considering one common choice (SQR), diagonal gates in the Clifford hierarchy admit a simplified representation of the form
\begin{equation}
    U = \sum_{\mathbf{b}\in \mathbb{Z}_2^n} \exp\left(2\pi i (\mathbf{w} \cdot \mathbf{b})\right) |\mathbf{b}\rangle\langle\mathbf{b}| 
    \label{eq:STAR_phase_poly}
\end{equation}
where the phase weights satisfy $2^k\cdot w\in\{0,1,\dots,2^k-1\}^n$, each component is a rational number with denominator $2^k$. 

In this setting, the single-qubit nature of the native rotations constrains the phase polynomial in Eq.~\eqref{eq:STAR_phase_poly} to be linear in the Boolean variables. As a consequence, the shallow-layer model (ansatz~1) simplifies significantly, and the corresponding Pauli spectrum takes the form
\begin{align}
a_{P(\mathbf x,\mathbf z)} =  \frac{2^r}{2^n} C(x) \sum_{\substack{\mathbf b \in \mathbb{Z}_2^n \\ \mathbf b Z = \mathbf h' \oplus \mathbf K}} \exp\left( 2\pi i \, V(\mathbf x, \mathbf z) \cdot \mathbf \mathbf b \right) 
\label{eq: shallow_STAR_spectrum}
\end{align}
where the prefactor is given by $C(x) = \exp(2\pi i( w \cdot x)) (-1)^{s_0+x\cdot z_{ref}}i^{x\cdot z_{ref}}$ and and we have introduced an effective phase vector
\begin{equation}
    \mathbf{V}(\mathbf{x}, \mathbf{z}) = 2(\mathbf{w} \circ \mathbf{x}) + \frac{\mathbf{z} \oplus \mathbf{z}_{\mathrm{ref}}}{2}
\end{equation}
defined via the Hadamard (element-wise) product $\circ$.

We now proceed to the second layer, consisting of a Clifford operation followed by a diagonal (non-Clifford) layer. A Clifford unitary acts on Pauli operators by symplectic transformation, mapping each Pauli string to another Pauli string up to a phase factor $(-1)^{h(x,z)}$~\cite{gottesman_knill,Aaronson_2004}. Accordingly, if the density matrix before the Clifford layer is expanded in the Pauli basis, the transformed state can be written as 
\begin{align}
        \rho' &= \sum_{(\mathbf x,\mathbf z)\in \mathbb{F}_2^{2n}} a_{P(\mathbf x,\mathbf z)} (-1)^{h(\mathbf x,\mathbf z)}P(F(\mathbf x,\mathbf z)) 
        \notag \\
        &= \sum_{(\mathbf x',\mathbf z')\in \mathbb{F}_2^{2n}} a_{P(F^{-1}(\mathbf x',\mathbf z'))} (-1)^{h(F^{-1}(\mathbf x',\mathbf z'))}P(\mathbf x',\mathbf z')
\end{align}
where $F \in \mathrm{Sp}(2n)$ denotes the induced symplectic transformation on binary phase-space variables~\cite{Dehaene_2003}.

We next apply an arbitrary-angle Pauli-$Z$ rotation layer $U$. Such diagonal unitaries act nontrivially only on the $X$ components of Pauli strings, yielding $UP(\mathbf x',\mathbf z')U^\dagger = VP(\mathbf x',\mathbf z')$ where $V$ encodes the induced phase dressing~\cite{Cui_2017}. Exploiting the linear Boolean structure of the phase polynomial~\cite{Amy_2013}, one can rewrite the computational-basis dependence in terms of Pauli-$Z$ operators via the correspondence $\mathbf b \mapsto (I - Z)/2$~\cite{Gross_2006}. This leads to the expansion
\begin{align}
    UP(\mathbf x',\mathbf z')U^\dagger &=  \exp(2\pi i \sum_{j=1}^n w_j x'_j Z_j)P(\mathbf x',\mathbf z') = \sum_{\mathbf u \leq \mathbf x'} P(\mathbf x',\mathbf z') Z^{\mathbf u} \prod_{j: x'_j=1} \left( \cos(2\pi w_j)^{1-u_j} (i \sin(2\pi w_j))^{u_j} \right) \notag \\
    &= \sum_{\mathbf u \leq \mathbf x'} P(\mathbf x',\mathbf z'\oplus \mathbf u) i^{\mathbf x\cdot \mathbf z - \mathbf x\cdot (\mathbf z\oplus \mathbf u) + \mathbf x\cdot \mathbf u} \prod_{j: x'_j=1} \left( \cos(2\pi w_j)^{1-u_j} (i \sin(2\pi w_j))^{u_j} \right) \notag \\
    &=\sum_{\mathbf u\leq \mathbf x'} \Gamma_{\mathbf x'}(\mathbf u) (-1)^{\mathbf u\cdot \mathbf z'+|\mathbf u|} P(\mathbf x',\mathbf z'\oplus \mathbf u)
\end{align}
where the sum runs over all binary vectors $u \in \mathbb{Z}_2^n$ supported on the support of $x'$ (i.e., $u_j = 0$ whenever $x'_j = 0$), and the coefficients are given by
\begin{equation}
     \Gamma_{\mathbf x'}(\mathbf u) = \prod_{j: x'_j=1} \left( \cos(2\pi w_j)^{1-u_j} ( \sin(2\pi w_j))^{u_j} \right)
\end{equation}

Combining the Clifford and diagonal layers, the Pauli expectation values after the second layer of our model (ansatz~2) take the form by replacing variable with $x = x'$ and $z = z'\oplus u$
\begin{equation}
    a'_{P(\mathbf x,\mathbf z)} = \sum_{\mathbf u \le \mathbf x} \Gamma_{x}(\mathbf u) (-1)^{|\mathbf u|+\mathbf u\cdot \mathbf z+h(F^{-1}(\mathbf x, \mathbf z \oplus \mathbf u))} a_{P(F^{-1}(\mathbf x, \mathbf z \oplus \mathbf u))}
\end{equation}
By iterating this transformation over $N$ alternating Clifford and analog diagonal-rotation layers, one obtains the full evolution of the logical state within the $N$-layer model, motivated by early FTQC and STAR architectures.

\section{No-go Theorem: State-Independent Gate-selection Rule}
\label{app:no_go_2}
To show that there exists a state-independent gate choice that can guarantee magic generation, we collect the Pauli expectation values into a real vector
\begin{equation}
    \mathbf{a}
    =
    \{a_{P(\mathbf x,\mathbf z)}\}
    \in \mathbb{R}^{4^n},
\end{equation}
then the update induced by one Clifford-plus-rotation block can be written abstractly as $\mathbf{a}\longmapsto \mathbf{a}'=M\mathbf{a}$.

Because the Pauli expectation values satisfy the normalization
\begin{equation}
    \|\mathbf{a}\|_2^2 = \sum_{x,z}|a_{P(x,z)}|^2 = 2^n,
\end{equation}
the transfer matrix $M$ preserves the Euclidean norm and therefore acts orthogonally on the Pauli vector. Now, we assume such a linear mapping exists and is independent of state choice.

\begin{theorem}[No-go for universal Pauli-spectrum flattening under orthogonal transfer]
\label{thm:no_go_universal_flattening_supplementary}
Let $d=4^n$, and let $\mathbf{a}\in\mathbb{R}^d$ denote the Pauli-coefficient vector of a pure $n$-qubit state, normalized so that
\begin{equation}
    \|\mathbf{a}\|_2^2 = 2^n.
\end{equation}
Let $M\in O(d)$ be a real orthogonal matrix acting linearly on $\mathbf{a}$,
\begin{equation}
    \mathbf{a}' = M\mathbf{a}.
\end{equation}
For any $\alpha>1$, define
\begin{equation}
    F_\alpha(\mathbf{a}) := \sum_{j=1}^d |a_j|^{2\alpha}
    = \|\mathbf{a}\|_{2\alpha}^{2\alpha}.
\end{equation}
Then there does not exist an orthogonal matrix $M$ such that
\begin{equation}
    F_\alpha(M\mathbf{a}) \le F_\alpha(\mathbf{a})
\end{equation}
for all pure-state Pauli vectors $\mathbf{a}$ with $\|\mathbf{a}\|_2^2=2^n$, with strict inequality on a set of nonzero measure. Equivalently, an orthogonal transformation cannot universally flatten the Pauli spectrum: if it decreases $F_\alpha$ for some states, it must increase $F_\alpha$ for others.
\end{theorem}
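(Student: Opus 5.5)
The plan is a recurrence argument that uses only two facts: $M$ is orthogonal, so the group it generates has compact closure, and $F_\alpha$ is continuous. I read the statement as the paper sets it up. $M$ is a transfer map \emph{acting on} pure-state Pauli vectors, and the inequality $F_\alpha(M\mathbf{a})\le F_\alpha(\mathbf{a})$ is imposed for every pure-state vector $\mathbf{a}$. So I take $M$ to send the set $\mathcal{S}\subset\{\|\mathbf{a}\|_2^2=2^n\}$ of pure-state Pauli vectors into itself. This holds for every Clifford-plus-rotation block of Eq.~\eqref{eq:STAR_iterative_update}, since each block is induced by a unitary. Beyond this, no unitary structure of $M$ is used. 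In particular, no Haar-measure invariance of $\mathcal{S}$ is needed: $M$ is otherwise an arbitrary element of $O(d)$.

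\textbf{Step 1 (monotone orbits).} Suppose, for contradiction, that $M$ satisfies $F_\alpha(M\mathbf{a})\le F_\alpha(\mathbf{a})$ for all $\mathbf{a}\in\mathcal{S}$, with strict inequality on a set $A\subset\mathcal{S}$ of positive measure. Only one point $\mathbf{a}_0\in A$ is needed. Because $M\mathcal{S}\subseteq\mathcal{S}$, every iterate $M^k\mathbf{a}_0$ lies in $\mathcal{S}$. The hypothesis therefore applies along the whole orbit and gives
\[
F_\alpha(M^k\mathbf{a}_0)\le F_\alpha(M\mathbf{a}_0)<F_\alpha(\mathbf{a}_0)\qquad\text{for all }k\ge1 .
\]
Set $\delta:=F_\alpha(\mathbf{a}_0)-F_\alpha(M\mathbf{a}_0)>0$.

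\textbf{Step 2 (recurrence in $O(d)$).} $O(d)$ is a compact group, so the sequence $(M^k)_{k\ge1}$ has a convergent subsequence $M^{k_j}\to R$. Then $M^{k_{j+1}-k_j}=M^{k_{j+1}}(M^{k_j})^{-1}\to RR^{-1}=I$. Passing to a further subsequence makes the exponents $m_j:=k_{j+1}-k_j\ge1$ strictly increasing, and then $M^{m_j}\to I$ in operator norm. Hence $M^{m_j}\mathbf{a}_0\to\mathbf{a}_0$.

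\textbf{Step 3 (contradiction).} $F_\alpha(\mathbf{a})=\|\mathbf{a}\|_{2\alpha}^{2\alpha}$ is continuous on $\mathbb{R}^d$, so $F_\alpha(M^{m_j}\mathbf{a}_0)\to F_\alpha(\mathbf{a}_0)$. Step 1, however, gives $F_\alpha(M^{m_j}\mathbf{a}_0)\le F_\alpha(\mathbf{a}_0)-\delta$ for every $j$. These two facts are incompatible, so no such $M$ exists.

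The argument needs only a single point of strict decrease, so it in fact proves a stronger statement than the measure-theoretic one. It also yields the ``equivalently'' clause. If an orthogonal $M$ (preserving $\mathcal{S}$) decreases $F_\alpha$ at some state, then $F_\alpha(M\mathbf{b})>F_\alpha(\mathbf{b})$ must hold for some $\mathbf{b}\in\mathcal{S}$; otherwise Steps 1--3 apply verbatim. In fact some iterate $\mathbf{b}=M^k\mathbf{a}_0$ on the orbit must do this, which makes precise the ``$M^{-1}$ must undo the flattening'' heuristic of the main-text sketch.

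\textbf{Main obstacle.} The delicate point is Step 1's requirement that iterates remain in the domain where the inequality is assumed. For an $M\in O(d)$ that does not map $\mathcal{S}$ into itself, $M\mathbf{a}$ is not a physical Pauli vector. In that case the hypothesis says nothing about $F_\alpha(M^2\mathbf{a})$, and recurrence alone does not close the argument. I would therefore state this interpretation explicitly and point out that it is automatic for every transfer matrix arising from the ansatz. If a version for fully unconstrained $M$ is wanted, the fallback I would try first is to replace the orbit argument by an average. One averages $F_\alpha\circ M$ against the rotation-invariant measure on the whole $\ell_2$ sphere, where orthogonal invariance forces the means of $F_\alpha(M\mathbf{a})$ and $F_\alpha(\mathbf{a})$ to coincide. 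One would then need to show that $\mathcal{S}$ carries enough of that mass, which I expect to be the genuinely hard part.
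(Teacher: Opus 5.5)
Your argument is correct under the reading you state ($M\mathcal S\subseteq\mathcal S$), and it takes a genuinely different route from the paper.

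\textbf{The paper's argument.} It works on the full sphere $S=\{\|\mathbf a\|_2^2=2^n\}$. It assumes the inequality for \emph{every} $\mathbf a\in S$ and integrates $g=F_\alpha-F_\alpha\circ M$ against the $O(d)$-invariant surface measure $\mu$. Invariance gives $\int_S g\,d\mu=0$, which contradicts $g\ge 0$ with $g>0$ on a set of positive measure. This needs no invariance of the physical set. However, it tacitly replaces pure-state vectors by all of $S$. Every state has $a_{P(\mathbf 0,\mathbf 0)}=1$, so the pure-state set $\mathcal S$ lies in a codimension-one slice of $S$ and is $\mu$-null. As a result, the averaging does not exclude a unitary-induced block that lowers $F_\alpha$ on every pure state and raises it only on unphysical vectors.

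\textbf{What your argument buys.} Your recurrence argument needs only compactness of $O(d)$ and continuity of $F_\alpha$, and it works on any set $T$ with $MT\subseteq T$.
\begin{itemize}
\item Taking $T=\mathcal S$ gives exactly the physically relevant no-go for unitary-induced blocks.
\item Taking $T=S$, which every orthogonal $M$ maps onto itself, recovers the paper's version for arbitrary $M$.
\item It needs one strict point rather than a set of positive measure. In the paper's setting this also follows from continuity of $g$.
\item It exhibits a concrete increasing step along the orbit.
\end{itemize}

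\textbf{Remaining case.} Neither argument covers an arbitrary $M\in O(d)$ that does not preserve $\mathcal S$, with the inequality imposed only on $\mathcal S$. Your suggested fallback cannot handle this case as phrased: a $\mu$-null set inherits nothing from full-sphere averages, however the mass is distributed. One would instead average over the unitarily invariant measure on pure states. That measure is not $O(d)$-invariant, so an explicit moment comparison is required. For instance, take $n=1$, $\alpha=2$, and write $\mathbf c$ for the image under $M$ of the identity coordinate vector. One finds $\mathbb E_\psi F_2(M\mathbf a_\psi)=\tfrac{12}{5}-\tfrac{4}{5}\sum_i c_i^4\ge\tfrac{8}{5}=\mathbb E_\psi F_2(\mathbf a_\psi)$. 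Hence $F_2\circ M\le F_2$ on $\mathcal S$ forces equality everywhere on $\mathcal S$. General $n$ would need an analogous computation.
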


\begin{proof}
Let
\begin{equation}
    S := \left\{\mathbf{a}\in\mathbb{R}^d : \|\mathbf{a}\|_2^2 = 2^n \right\}
\end{equation}
be the $L_2$ sphere of radius $2^{n/2}$ in $\mathbb{R}^d$. Since $M\in O(d)$, it preserves the Euclidean norm,
\begin{equation}
    \|M\mathbf{a}\|_2 = \|\mathbf{a}\|_2,
\end{equation}
and therefore maps $S$ bijectively onto itself. Moreover, as an orthogonal transformation, $M$ preserves the rotationally invariant surface measure $d\mu$ on $S$~\cite{Bengtsson_Zyczkowski_2006}.

Assume, toward a contradiction, that
\begin{equation}
    F_\alpha(M\mathbf{a}) \le F_\alpha(\mathbf{a})
    \qquad \text{for all } \mathbf{a}\in S,
\end{equation}
and that the inequality is strict on a measurable subset $E\subset S$ with $\mu(E)>0$. Since $F_\alpha$ is continuous and $M$ is continuous, the function
\begin{equation}
    g(\mathbf{a}) := F_\alpha(\mathbf{a}) - F_\alpha(M\mathbf{a})
\end{equation}
is continuous on $S$, nonnegative everywhere, and strictly positive on $E$. Hence
\begin{equation}
    \int_S g(\mathbf{a})\, d\mu(\mathbf{a}) > 0,
\end{equation}
that is,
\begin{equation}
    \int_S F_\alpha(\mathbf{a})\, d\mu(\mathbf{a})
    >
    \int_S F_\alpha(M\mathbf{a})\, d\mu(\mathbf{a}).
\label{eq:strict_contradiction}
\end{equation}

On the other hand, since $M: S \to S$ is a measure-preserving bijection (owing to an orthogonal matrix), the change of variables $\mathbf{b}=M\mathbf{a}$ gives
\begin{equation}
    \int_S F_\alpha(M\mathbf{a})\, d\mu(\mathbf{a})
    =
    \int_S F_\alpha(\mathbf{b})\, d\mu(\mathbf{b})
    =
    \int_S F_\alpha(\mathbf{a})\, d\mu(\mathbf{a}),
\label{eq:measure_preserving}
\end{equation}
which contradicts Eq.~\eqref{eq:strict_contradiction}.

Therefore, no orthogonal transformation can satisfy
\begin{equation}
    F_\alpha(M\mathbf{a}) \le F_\alpha(\mathbf{a})
\end{equation}
for all $\mathbf{a}\in S$ with strict inequality on a set of nonzero measure. In other words, a universal flattening of the Pauli spectrum is impossible under any state-independent orthogonal transfer. Whenever an orthogonal transformation decreases $F_\alpha$ for some input states, it must necessarily increase $F_\alpha$ for others.
\end{proof}

These results not only show that, in the SQR gate setup, no such state-independent selection rule exists. It actually generalized to any mapping between two different Pauli spectra. Once you follow the normalization condition, there is no way to find a forbidden or allowed gate choice. The only choice you can find is equal, which is the hypercube group, in our setting, corresponding to the Clifford group. In the $N$-layer model (ansatz~2), you will see that it corresponds to a $\mathbb{Z}_4$ rotation.

\section{Shallow SQR Pauli Support Size}
We now turn to the central question: to what extent can the SQR setup generate maximal quantum magic? As established earlier, achieving a perfectly flat Pauli distribution requires satisfying two distinct and independent mathematical conditions:
\begin{enumerate}
\item \textit{Support maximality}: the unitary evolution must possess sufficient geometric expressibility to distribute weight over all $4^n - 2^n$ non-identity Pauli operators.

\item \textit{Amplitude compatibility}: the resulting Pauli coefficients must obey the rigid integer constraints imposed by probability normalization.
\end{enumerate}
These two requirements correspond, respectively, to the \emph{Pauli rank} (support size) and the \emph{flatness} (uniformity of amplitudes) of the Pauli spectrum. In this section, we first analyze the former by determining the maximal achievable support under the $N$-layer model(ansatz~2).

To this end, we define the Pauli support size
\begin{equation}
\mathcal N = \sum_{\mathbf{x} \in \mathbb{Z}_2^n}
\bigl|{\mathbf{z} \in \mathbb{Z}_2^n \mid a_{P(\mathbf{x},\mathbf{z})} \neq 0}\bigr|,
\end{equation}
which counts the number of nonzero Pauli expectation values and thus quantifies the Pauli rank.

To probe the maximal spreading capability, we consider the input state $|+\rangle^{\otimes n}$, which is known to maximize magic generation within the stabilizer class (Theorem~\ref{thm:flat_minimum_graph_state}). For ansatz~2, the resulting Pauli spectrum admits a factorized form
\begin{align} 
a_{P(x,z)} &= \frac{1}{2^n} \sum_{\mathbf{b} \in \mathbb{Z}_2^n} \exp\left( 2\pi i \sum_{j=1}^n \left(2w_j x_j + \frac{z_j}{2}\right) b_j \right) \notag \\ &=  \prod_{j=1}^n \frac{1 + \exp\left( 2\pi i \left( 2w_j x_j + \frac{z_j}{2} \right) \right)}{2} 
\end{align}
This product structure reflects the strictly local nature of the diagonal layer in the ansatz~2 and will play a crucial role in constraining the achievable Pauli rank.

To determine when $a_{P(x,z)} \neq 0$, we note that every factor in the product representation must be nonvanishing. A given factor vanishes if and only if its phase satisfies
\begin{equation}
2w_j x_j + \frac{z_j}{2} \equiv \frac{1}{2} \pmod{1}.
\end{equation}
We therefore count, for each site $j$, the number of valid choices of $z_j \in {0,1}$ that avoid this condition, under the constraint $w_j = k_j / 2^m$.

\begin{enumerate}
\item If $x_j = 0$, the condition reduces to $\frac{z_j}{2} \equiv \frac{1}{2} \pmod{1}$, which is satisfied only by $z_j = 1$. Hence, exactly one valid choice remains, namely $z_j = 0$.

\item If $x_j = 1$, the condition becomes
\begin{equation}
    2w_j + \frac{z_j}{2} \equiv \frac{1}{2} \pmod{1}.
\end{equation}
The number of admissible values of $z_j$ depends on $w_j$: If $4w_j \in \mathbb{Z}$, exactly one choice of $z_j$ satisfies the condition, leaving one valid option; If $4w_j \notin \mathbb{Z}$, neither $z_j = 0$ nor $z_j = 1$ satisfies the condition, and both choices remain valid.
\end{enumerate}

Let $S_{\mathbf{w}}$ denote the set of indices $j$ such that $4w_j \notin \mathbb{Z}$. Equivalently, for $w_j = k_j/2^m$, this corresponds to $k_j$ not being divisible by $2^{m-2}$. For a given binary vector $\mathbf{x}$, define
\begin{equation}
k(\mathbf{x}) = \sum_{j \in S_{\mathbf{w}}} x_j,
\end{equation}
which counts the number of active components contributing two valid choices for $z_j$. The Pauli support size can then be expressed as
\begin{equation}
\mathcal N(SQR) = \sum_{\mathbf{x} \in \mathbb{Z}_2^n} 2^{k(\mathbf{x})}.
\end{equation}

Exploiting the independence across sites, this sum factorizes into a closed-form expression. Let $K = |S_{\mathbf{w}}|$. For each $j \in S_{\mathbf{w}}$, summing over $x_j \in {0,1}$ yields $1 + 2 = 3$, while for $j \notin S_{\mathbf{w}}$, it yields $1 + 1 = 2$. Consequently, the Pauli rank is given exactly by
\begin{equation}
\mathcal N = 3^K 2^{n-K}.
\end{equation}

This result establishes a fundamental limitation of the SQR gate choice in ansatz~2. Even in the optimal case $K = n$, the maximal support is bounded by
\begin{equation}
\mathcal N_{\max}(SQR) = 3^n < 4^n - 2^n \quad (n > 1),
\end{equation}
demonstrating that a perfectly flat Pauli distribution is unattainable beyond the single-qubit case. Indeed, only for $n=1$ (e.g., the $T$ gate) can the theoretical bound be saturated.

By contrast, architectures that allow higher-order (nonlinear) phase functions can, in principle, achieve the maximal support $\mathcal N_{\max} = 4^n - 2^n$. This highlights a key architectural implication: restricting diagonal operations to single-qubit rotations fundamentally limits Pauli spreading and, consequently, the achievable quantum magic. Incorporating multi-qubit diagonal interactions is therefore essential for reaching the maximal expressibility required for optimal magic generation.

\end{document}